\newtheorem*{rep@theorem}{\rep@title}
\newcommand{\newreptheorem}[2]{%
	\newenvironment{rep#1}[1]{%
		\def\rep@title{#2 \ref{##1}}%
		\begin{rep@theorem}}%
		{\end{rep@theorem}}}
\newcommand{\size}[1]{\ensuremath{\left|#1\right|}}
\crefname{claim}{Claim}{Claims}
\crefname{property}{Property}{Properties}
\crefname{algocf}{Algorithm}{Algorithms}
\Crefname{algocf}{Algorithm}{Algorithms}
\newtheorem{theorem}{Theorem}[section]
\newtheorem{lemma}[theorem]{Lemma}
\newtheorem{definition}{Definition}
\newtheorem{corollary}[theorem]{Corollary}
\newtheorem*{remark*}{Remark}
\theoremstyle{remark}
\theoremstyle{definition}
\DeclarePairedDelimiter{\ceil}{\lceil}{\rceil}
\newcommand{\ip}[1]{\left}
\newcommand{\hybrid}{\ensuremath{\mathsf{Hybrid}}\xspace}
\newcommand{\nccshort}{\ensuremath{\mathsf{NCC}}\xspace}
\newcommand{\ncczero}{\ensuremath{\nccshort_0}\xspace}
\newcommand{\congest}{\ensuremath{\mathsf{CONGEST}}\xspace}
\newcommand{\local}{\ensuremath{\mathsf{LOCAL}}\xspace}
\newcommand{\clique}{\ensuremath{\mathsf{Congested\ Clique}}\xspace}
\newcommand{\bcc}{\ensuremath{\mathsf{Broadcast\ Congested\ Clique}}\xspace}
\newcommand{\bccshort}{\ensuremath{\mathsf{BCC}}\xspace}
\newcommand{\cwc}{\ensuremath{\mathsf{CWC}}\xspace}
\newcommand{\hybridzero}{\ensuremath{\mathsf{Hybrid}_0}\xspace}
\newcommand{\kdis}{$k$\textsc{-dissemination}\xspace}
\newcommand{\kagg}{$k$\textsc{-aggregation}\xspace}
\newcommand\bigO[1]{\ensuremath{{O}(#1)}}
\newcommand\bigO*[1]{\ensuremath{{O}\left(#1\right)}}
\newcommand\tildeBigO[1]{\ensuremath{{\tilde{{O}}}(#1)}}
\newcommand\tildeBigO*[1]{\ensuremath{{\tilde{{O}}}\left(#1\right)}}
\newcommand\littleO[1]{\ensuremath{{o}(#1)}}
\newcommand\littleO*[1]{\ensuremath{{o}\left(#1\right)}}
\newcommand\tildeLittleO[1]{\ensuremath{{\tilde{{o}}}(#1)}}
\newcommand\tildeLittleO*[1]{\ensuremath{{\tilde{{o}}}\left(#1\right)}}
\newcommand\bigOmega[1]{\ensuremath{{\Omega}(#1)}}
\newcommand\bigOmega*[1]{\ensuremath{{\Omega}\left(#1\right)}}
\newcommand\tildeBigOmega[1]{\ensuremath{{\tilde{{\Omega}}}(#1)}}
\newcommand\tildeBigOmega*[1]{\ensuremath{{\tilde{{\Omega}}}\left(#1\right)}}
\newcommand\littleOmega[1]{\ensuremath{{\omega}(#1)}}
\newcommand\littleOmega*[1]{\ensuremath{{\omega}\left(#1\right)}}
\newcommand\tildeLittleOmega[1]{\ensuremath{{\tilde{{\omega}}}(#1)}}
\newcommand\tildeLittleOmega*[1]{\ensuremath{{\tilde{{\omega}}{\left(#1\right)}}}}
\newcommand\bigTheta[1]{\ensuremath{{\Theta}(#1)}}
\newcommand\bigTheta*[1]{\ensuremath{{\Theta}\left(#1\right)}}
\newcommand\tildeTheta[1]{\ensuremath{{\tilde{{\Theta}}}(#1)}}
\newcommand\tildeTheta*[1]{\ensuremath{{\tilde{{\Theta}}\left(#1\right)}}}
\newcommand*{\whp}{%
    \@ifnextchar{.}%
        {w.h.p}%
        {w.h.p.\@\xspace}%
}
\let\oldnl\nl% Store \nl in \oldnl
\newcommand{\nonl}{\renewcommand{\nl}{\let\nl\oldnl}}% Remove line number for one line
\newcommand{\tk}[1][k]{T_{#1}}
\newcommand{\tn}[1][n]{T_{#1}}
\newcommand{\hop}{\text{hop}}
\DeclareMathOperator*{\argmax}{arg\,max}
\title{Universally Optimal Deterministic Broadcasting \\ in the HYBRID Distributed Model}
\author{
Yi-Jun Chang 
\\
National University of Singapore
\\
\small{\texttt{cyijun@nus.edu.sg}}
\and
Oren Hecht
\\
Technion
\\
\small{\texttt{hecht.oren@campus.technion.ac.il}}
\and
Dean Leitersdorf
\\
National University of Singapore
\\
\small{\texttt{dean.leitersdorf@gmail.com}}
}
\date{}
\begin{document}
\begin{titlepage}
\maketitle
\thispagestyle{empty}
\begin{abstract}
In theoretical computer science, it is a common practice to show \emph{existential} lower bounds for problems, meaning there is a family of pathological inputs on which no algorithm can do better than the lower bound. However, in many cases most inputs of interest can be solved much more efficiently, giving rise to the notion of \emph{universal optimality}. Roughly speaking, a universally optimal algorithm is one that, given some input, runs as fast as the best algorithm designed specifically for that input.

Questions on the existence of universally optimal algorithms in distributed settings were first raised by Garay, Kutten, and Peleg in FOCS '93. This research direction reemerged recently through a series of works, including the influential work of Haeupler, Wajc, and Zuzic in STOC '21, which resolves some of these decades-old questions in the supported \congest model.

We work in the \hybrid distributed model, which 
analyzes networks combining both global and local communication. Much attention has recently been devoted to solving distance related problems, such as All-Pairs Shortest Paths (APSP) in \hybrid, culminating in a $\tilde \Theta(n^{1/2})$ round algorithm for exact APSP. However, by definition, every problem in \hybrid is solvable in $D$ rounds, where $D$ is the diameter of the graph, showing that while $\tilde \Theta(n^{1/2})$ rounds is existentially optimal for APSP, it is far from universally optimal.

We show the \emph{first} universally optimal algorithms in \hybrid, by presenting a fundamental tool that solves \emph{any} broadcasting problem in a universally optimal number of rounds, deterministically. Specifically, we consider the \kdis problem, which given an $n$-node graph $G$ and a set of $k$ messages $M$ distributed arbitrarily across $G$, requires every node to learn all of $M$. We show a universal lower bound and a matching, deterministic upper bound, for any graph $G$, any value $k$, and any distribution of $M$ across $G$.

This broadcasting tool opens a new exciting direction of research into showing universally optimal algorithms in \hybrid. As an example, we use it to obtain algorithms to approximate APSP in general graphs and to solve APSP exactly in sparse graphs; these algorithms are universally optimal in that they match the lower bound for even just for learning the, potentially random, identifiers of the nodes in the graph, which are needed for outputting shortest path distances.

\end{abstract}
\end{titlepage}

\tableofcontents

\newpage

\section{Introduction}
We work in \hybrid, a key model of distributed computation, and tackle the fundamental problem of broadcasting information over a graph -- deterministically solving the most general variant of this problem in the best possible complexity. The \hybrid model of distributed computing abstracts common practical distributed networks in order to provide a framework for performing theoretical research, which can be readily adapted to uses in modern data centers and distributed networks \cite{Farrington2010, Halperin2011, Tell2018SDWANAM, ROSSBERG20111684}. In any distributed network, broadcasting information is a fundamental task, which is interesting either on its own as an end goal (e.g.,~to broadcast a network update, notification of failure, etc.) or as a basic building block for solving other problems (e.g.,~for computing paths and distances between nodes). 

We solve the most general version of broadcasting, whereby there are some $k$ messages, for \emph{any} value $k$, originally distributed in \emph{any} fashion across the graph (i.e.,~all messages can begin at one node, or might be spread out such that each node holds one message, etc.), and it is desired that every node in the graph learns all $k$ messages. We solve this problem in a \emph{universally optimal} way, implying that our algorithm is as fast as the best possible algorithm which even knows the graph topology ahead of time and the original locations (but not contents) of the messages. In essence, we design one general algorithm, which works on any graph and with any original message distribution, and it is impossible to show a faster algorithm, including algorithms tailor-made for specific graphs and original message distributions. Finally, we use our broadcasting tool to also approximate distances and cuts in the graph.

\paragraph{Hybrid Networks.} The \hybrid model \cite{kuhn2020computing} investigates distributed networks whereby nodes physically close to each other can communicate via high-bandwidth local communication links, while there are also low-bandwidth global communication links to send small amounts of information between physically distant parts of the network. 
These types of hybrid networks appear in real-world applications, including data centers with limited wireless communication, and high-bandwidth short-ranged wired communication \cite{Halperin2011, Farrington2010}. Additional examples include cellular networks where devices can communicate in their local environment with a high bandwidth link (e.g.,~communication between nearby smartphones using Bluetooth, WiFi Direct, or LTE Direct), in addition to global communication through a lower-bandwidth cellular infrastructure \cite{Kar2018}.

The theoretical research of hybrid networks via the \hybrid model so far mainly explored distance computation tasks, such as the $k$-Source Shortest Path and specifically the Single Source Shortest Path (SSSP, $k=1$) and All-Pairs Shortest Path (APSP, $k=n$), diameter computation, and more \cite{censor2020distance, censor2021sparsity, anagnostides2021deterministic, Coy2021Near, coy2022routing, Kuhn2022Routing, Feldmann2020Fast}. The key observation is that the combination of 
both a high-bandwidth local network and a low-bandwidth global network allows solving problems significantly faster than is possible in either network alone -- for instance, APSP requires $\tildeBigOmega{n}$ rounds\footnote{The $\tildeBigO{\cdot}, \tildeBigOmega{\cdot}, \tildeTheta{\cdot}$ notation hides polylogarithmic factors.} in either the local or global network alone \cite{augustine2020shortest}, yet can be solved in $\tildeTheta{\sqrt{n}}$ rounds in \hybrid by using both networks together \cite{augustine2020shortest, kuhn2020computing}. Recently, a new line of work started investigating the use of \emph{routing schemes} and \emph{distance oracles} \cite{Kuhn2022Routing, Coy2021Near}. These are fundamental tools for applications like efficient packet-forwarding, which stand in the backbone of the modern-day internet.

\paragraph{Broadcasting in \hybrid.} A fundamental use case for hybrid networks is the broadcasting of information. Broadcasting information is interesting on its own as an end goal, such as announcing a failure, a change of policy, or other control messages in a data center. Further, broadcasting itself can be a tool useful for solving other problems, as evident by the entire research field around the \bcc model (\bccshort) \cite{drucker2014on} -- in \bccshort, in each round every node can broadcast one message to every other node, and using only this basic primitive it was shown that many problems can be solved \cite{chen2019broadcast, becker2015hierarchy, jurdzinski2018connectivity, becker2020impact, montealegre2016brief, holzer2015approximation}.

We investigate the most general broadcast variant, whereby there are some $k$ messages spread out arbitrarily across the graph, and it is desired for all these messages to be known to all the graph. The messages originally can start in any configuration -- i.e.,~all from one specific node, or spread out across different nodes in the graph. We solve this problem for \emph{any} value $k$ and any original message distribution. As a corollary, by setting $k=n$, we show a simulation of \bccshort in \hybrid.

\paragraph{Universal Optimality.} All research in \hybrid so far focused on \emph{existential} lower bounds, meaning there is a pathological graph family where no algorithm can do better than some lower bound. For instance, showing that APSP has a lower and upper bound of $\tildeTheta{\sqrt{n}}$ \cite{augustine2020shortest, kuhn2020computing}, by showing a family of graphs that require $\tildeBigOmega{\sqrt{n}}$ rounds to solve APSP in, and a matching upper bound. However, trivially, any problem in \hybrid can be solved in $D$ rounds, where $D$ is the network diameter, just using the high-bandwidth local network. In practical examples, many networks have a small diameter compared to the number of nodes $n$, rendering the state-of-the-art existentially optimal algorithms impractical. As a further example, it was shown in \cite{anagnostides2021deterministic, Feldmann2020Fast}, that there are graphs where one can solve APSP exponentially faster, in just $O(\log n)$ rounds, and many networks of interest can offer drastically faster algorithms compared to the existential lower bounds.

Therefore, a worthy goal is universally optimal algorithms, a concept that was first theorized in the distributed setting by \cite{Garay1998} in FOCS '93. Loosely speaking, a universally optimal algorithm runs as fast as possible on \emph{any} graph, not just worst case graphs (a formal definition soon follows). In \cite{ghaffari2016distributed} the first steps towards non-worst case algorithms were taken, in the well-known \congest distributed model, with the introduction of the low-congestion shortcut framework. This was followed by a line of influential works \cite{Haeupler2021,ghaffari_dfs,ghaffari_mixing_time,haeupler2016near, haeupler2018faster, haeupler2018minor, haeupler2022hop, kitamura2021low, rozhovn2022undirected, zuzic2018towards, Ghaffari2022, zuzic2022universally,ghafarriOPTBroadcast}, and culminating in the definition of universal optimality in the work of \cite{haeupler2021universally}.

We show the \emph{first universally optimal algorithm in \hybrid}. We present a parameter $\tk(G)$, for any graph $G$, and show it is a universal lower bound for broadcasting $k$ tokens to all of $G$. Namely, on the specific graph $G$, no algorithm can solve broadcast $k$ messages in less than $\tildeBigOmega{\tk(G)}$ rounds, even if it knows the entire topology of $G$ and the initial locations (but not contents) of the $k$ messages. We complement this lower bound with a \emph{single}, deterministic algorithm that, when it runs on any graph $G$, takes $\tildeBigO{\tk(G)}$ rounds and solves the broadcast problem. We stress that the complexity of our algorithm does not depend on the original token distribution.

~\\
In essence, universally optimal algorithms are an important step in distributed research, both presenting a theoretical challenge, and \emph{bridging a gap between theory and practice} by showing algorithms that are optimal for any specific case, including real-world graphs. We believe that setting the foundations in \hybrid for such research opens the doors to much further exciting results to come.

\paragraph{Roadmap.} We now proceed to an overview of our contributions and techniques developed to show them. In \cref{sec:optimal_broadcast} we show our universal lower bound and matching upper bound for broadcast, and also discuss a similar algorithm for aggregating $k$ functions. In \cref{sec:applications} we utilize our broadcasting tool to approximate APSP and various cuts. Finally, in \cref{sec:graph_families} we compute bounds on $\tk(G)$ in certain graph families, to give a taste as to how $\tk(G)$ relates to other graph parameters such as $n$ and $D$. This also shows that the \emph{universally} optimal algorithm improves significantly over the existing state-of-the-art, \emph{existentially} optimal algorithms, in such graph families.

\subsection{Our Contributions}

We now describe our results and briefly explain the techniques behind our algorithms.

Before we proceed, we provide a rough definition of \hybrid (formal definition to follow later). We are given an initial input graph $G = (V, E)$, and proceed in synchronous time steps called rounds. In each round, any two nodes in $V$ with an edge in $E$ between them can communicate any number of bits, through the \emph{local network}. Further, every node $v \in V$ can choose $\log n$ nodes arbitrarily in $V$ and send them each a (possibly unique) $O(\log n)$-bit message through the \emph{global network}. Every node in $V$ can be the target of only $\log n$ messages via the global network, per round. 

A more restrictive variant of \hybrid is \hybridzero (see \ncczero in \cite{augustine2021distributed}), in which every node has some arbitrary $O(\log n)$ bit identifier, every node originally only knows the identifiers of itself and its neighbors, and knows nothing about the identifiers of the other nodes. The key difference between \hybrid and \hybridzero is that in \hybridzero a node must first learn the identifier of another node before it is able to send messages to that node over the global network.

Clearly, if an algorithm works in \hybridzero, then it works in \hybrid too, and if a lower bound holds in \hybrid, then it holds in \hybridzero too.

\subsubsection{Universally Optimal Broadcasting}

Our main research question is the following broadcast problem.

\begin{definition}[\kdis]
\label{def:kdis}
    Given any set of messages $M$, where $k = |M|$ and each $m \in M$ is originally known to only one node in the graph, the \kdis problem requires that all messages $M$ become known to every node in the graph.
\end{definition}

A similar variant of token dissemination is presented as one of the most basic \hybrid communication primitives, in the paper defining the model \cite{augustine2020shortest}. There, they limit the number of tokens originally at any node by some value $\ell$ and provide a randomized algorithm operating in $\tildeBigO{\sqrt{k}+\ell}$ rounds. Recently, \cite{anagnostides2021deterministic} removed the limitation of $\ell$ by showing a deterministic algorithm operating in $\tildeBigO{\sqrt{k}}$ rounds, however, they require $k\geq{}n$. In our case, we solve the most general variant of the problem, with no bounds on $k$ or on the original distribution of the tokens in the graph.

We begin by defining the \emph{broadcast quality} of a graph. For any node $v$, denote by $B_t(v)$ the ball of radius $t$ around $v$ -- that is, all nodes which can reach $v$ with a path of at most $t$ edges.

\begin{definition}[Broadcast Quality]
\label{def:bcast_quality}
    Given a graph $G = (V, E)$, value $k$, and a node $v$, let
    \[
    \tk(v) = \min \{\{t \mid \size{B_t(v)} \geq k/t\} \cup{} \{D\}\}
    \]
    and let $\tk(G) = \max_{v \in V} \tk(v)$.
\end{definition}

When $G$ is clear from context, we write $\tk$ instead of $\tk(G)$. Notice that $\tk$ actually characterizes a property of the set of \emph{power graphs} of $G$. A power graph $G^t$ of $G$ has the same node set as $G$, and has an edge $e = \{v, u\}$ if there is a path between $v$ and $u$ in $G$ with at most $t$ edges. $\tk$ is essentially the minimal value $t = \tk$ such that the minimum degree in $G^t$ is at least $k/t$.

We now proceed to showing lower and upper bounds of $\tilde\Theta(\tk)$ for solving \kdis in \hybrid. This shows a couple of interesting properties. First, it shows that in \hybrid, the complexity of \kdis depends \emph{only on the graph topology}, and not on the original distribution of the tokens to broadcast. Moreover, it shows that the minimal degree of the set of power graphs is a fundamental property in \hybrid. 

Intuitively, this makes sense as in \hybrid within $t$ rounds every node in $G$ can communicate with every neighbor it has in $G^t$. At the same time, in those $t$ rounds each of those nodes can receive $O(t \cdot \log n)$ messages through the global network. Thus the minimum degree of a node in $G^t$ dictates an upper bound on the amount of information that node can receive in $t$ rounds by combining the global and local networks. In comparison to the \local model where nodes communicate only with their neighborhoods and in $t$ rounds every node knows only the information originally stored in each of its neighbors in $G^t$, in \hybrid we have the added benefit of choosing which messages are routed in the global network, and then having the nodes use the local network to receive also the messages their nearby nodes saw over the global network.

\paragraph{Lower Bound.} We first show that it takes $\tilde\Omega(\tk)$ rounds in \hybrid to solve \kdis.

\begin{restatable}{theorem}{kdisLB}
\label{thm:kdisLB}
There is a universal lower bound of $\tildeBigOmega{\tk}$ rounds for solving the \kdis problem in \hybrid.
\end{restatable}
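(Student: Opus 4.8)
The plan is to prove the bound on a hard instance by an information-theoretic ``ball-capacity'' argument: some node simply cannot receive enough information in fewer than $\tildeBigOmega{\tk}$ rounds. Let $v = \argmax_{u \in V} \tk(u)$, so that $\tk(v) = \tk(G)$. I will take the $k$ messages of $M$ to be values drawn i.i.d.\ uniformly from a universe of size $n^2$, so that $M$ carries $\Omega(k\log n)$ bits of entropy, and I will place all of $M$ at a single node $w$ at maximum distance from $v$; this keeps $M$ entirely outside the ball $B_t(v)$ for every $t < \operatorname{ecc}(v)$. Since the input is randomized, a lower bound against deterministic algorithms on this distribution will, by Yao's principle, also apply to randomized algorithms.

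The core step bounds the entropy of $v$'s state after $t$ rounds. Unrolling the one-round transition function (and fixing the coins, if any) shows that $v$'s state after round $t$ is a deterministic function of only (i) the messages initially held by the nodes of $B_t(v)$, and (ii) the global-network messages received by the nodes of $B_{t-j}(v)$ during round $j$, for $j = 1,\dots,t$ --- because local communication merely reshuffles information already present inside the nested balls $B_t(v) \supseteq B_{t-1}(v) \supseteq \dots \supseteq B_0(v) = \{v\}$, and nothing enters these balls except through global messages. With our placement (i) is empty, and since each node is the target of $O(\log n)$ global messages of $O(\log n)$ bits per round, (ii) amounts to at most $O\!\left(\log^2 n \cdot \sum_{j=1}^{t} |B_{t-j}(v)|\right) = O\!\left(\log^2 n \cdot t \cdot |B_t(v)|\right)$ bits. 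Hence $v$'s state has entropy $O(t \cdot |B_t(v)| \cdot \log^2 n)$, so for $v$ to reconstruct all of $M$ we must have $t \cdot |B_t(v)| = \Omega(k/\log n)$ (for a randomized algorithm succeeding with constant probability, the same follows via Fano's inequality). By \cref{def:bcast_quality}, every $t < T_{k'}(v)$ with $k' = \Theta(k/\log n)$ satisfies $t \cdot |B_t(v)| < k'$, so the algorithm cannot yet have finished --- giving a lower bound of $T_{k'}(v)$ rounds.

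It remains to convert $T_{k'}(v)$ back into $\tk(G)$. Using only that $|B_t(v)|$ is nondecreasing in $t$, I would show $\tk(v) = O(\log n) \cdot T_{k'}(v)$: writing $t_0 = \tk(v)$ and $t_1 = T_{k'}(v)$, if $t_1 \ge t_0$ we are done, and otherwise minimality in the two instances of \cref{def:bcast_quality} gives $|B_{t_1}(v)| \ge k'/t_1$ and $|B_{t_0-1}(v)| < k/(t_0-1)$, while $t_1 \le t_0 - 1$ gives $|B_{t_1}(v)| \le |B_{t_0-1}(v)|$; chaining these forces $t_0 - 1 < (k/k')\,t_1 = O(\log n)\,t_1$. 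Thus the lower bound is $T_{k'}(v) = \Omega(\tk(G)/\log n) = \tildeBigOmega{\tk(G)}$.

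The step I expect to be the main obstacle is the boundary regime where $\tk(v)$ is pinned by the $\{D\}$ term of \cref{def:bcast_quality} --- i.e.\ when $k$ is so large ($k = \Omega(nD)$) that $|B_t(v)|$ never reaches $k/t$ before filling $V$, so that $\tk(G) = D$ and $w$ cannot be placed outside $B_{T_{k'}(v)-1}(v)$. I would handle this by a complementary flooding argument: place all of $M$ at a node of minimum eccentricity; a node at distance $\Omega(D)$ from it receives only $O(D\log^2 n) \ll k\log n$ bits over the global network before the (unbounded-bandwidth) local flood can deliver $M$, so it cannot finish before round $\Omega(D) = \Omega(\tk(G))$. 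A secondary, more routine concern is making the unrolling in the core step fully rigorous --- precisely, that the only information reaching $B_t(v)$ is the global messages accounted for above --- and verifying that the $\tilde\Omega(\cdot)$ notation indeed absorbs the $\operatorname{poly}\log n$ slack accumulated across the two reductions.
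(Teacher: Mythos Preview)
Your argument has a genuine gap at the level of what ``universal lower bound'' means here. In the paper's framework (see the Preliminaries), the fixed setting $S$ for \kdis consists of the graph \emph{and the initial locations} of the $k$ messages; only the contents are adversarial. A universal lower bound therefore must hold for \emph{every} placement of the tokens, not just a placement you get to choose. By electing to put all of $M$ at a single far node $w$ (and, in the boundary regime, at a center node), you prove only an existential lower bound: for each $G$ there is \emph{some} placement that needs $\tildeBigOmega{\tk}$ rounds. That is strictly weaker than the theorem.

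The paper's proof is structured precisely to avoid this. It fixes $v^\ast=\argmax_v \tk(v)$ and then does a case split on where the messages actually are: if $\Omega(k)$ tokens start outside $B_{\tk-1}(v^\ast)$, it bounds the \emph{receiving} capacity of that ball (this is the case your argument covers); if instead $\Omega(k)$ tokens start inside a small ball around $v^\ast$, it flips to a \emph{transmitting}-capacity argument and shows that some node outside $B_{\tk}(v^\ast)$ cannot learn them. Your proposal never handles the second case, and no choice of a single hard placement can --- the whole point is that the bound must survive an adversary who distributes tokens in the way most favorable to the algorithm. That said, your entropy/light-cone accounting (the state of $v$ after $t$ rounds depends only on initial data in $B_t(v)$ plus $O(t\,|B_t(v)|\log^2 n)$ bits of global traffic) is a clean replacement for the paper's ``messages are routed as-is'' lemma, and it would slot nicely into the receiving-capacity half of the case split; you just need the symmetric transmitting-side argument to close the proof.
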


A rough outline of the proof is as follows. 
We notice that it is possible to assume that messages can only be routed \emph{as is}, i.e.,~without any coding techniques to shorten messages or compute a shorter representation of a subset of the $k$ messages. The idea behind this step is that we claim universal optimality w.r.t.~the graph and the locations of the messages, but no w.r.t.~the contents of the messages, and thus in the worst case the messages can be random bits and so any compression of the messages, which still works \whp,\footnote{With high probability (\whp) means that for an arbitrary but constant $c>0$, the probability of success is at least $1 - 1/n^c$.} reduces the required number of rounds by at most a constant factor. Thus, assuming that messages are only routed as-is implies at most a constant factor slowdown to the round complexity.

Once it is established that messages can only be routed as-is, observe $v^*$, the node in $G$ where $\tk(v^*) = \tk$. Assume that there is a message $m$ which is originally located at a node at $u$ which is $t$ hops from $v^*$. In order for $v^*$ to learn the message in $t' < t$ rounds, it must be the case that $m$ was at some point sent across a global edge to some node in $B_{t'}(v^*)$. Assume for the sake of contradiction that this is not the case -- i.e.,~that in the first $t'$ rounds of the algorithm, $m$ traveled only via local edges or sent via global edges to nodes in $V \setminus B_{t'}(v^*)$. This would imply that $m$ would have made its way from some node $w \in (V \setminus B_{t'}(v^*))$ to $v^*$ using only local edges, which is clearly impossible in $t'$ rounds.

Thus, we look at $B_{\tk - 1}(v^*)$. By \cref{def:bcast_quality},  $B_{\tk}(v^*)$ is the smallest-radius ball around $v^*$ such that $|B_{\tk}(v^*)| \geq k/\tk$, and so it holds that $|B_{\tk-1}(v^*)| < k/(\tk-1)$. If at least $k/2$ of the messages are originally outside $B_{\tk-1}(v^*)$, then we show that node $v^*$ cannot learn all the $k$ messages in the graph in $\tk/(2\log n)$ rounds. Denote the $k/2$ messages originally outside of $B_{\tk-1}(v^*)$ by $M'$. In order for $v^*$ to learn all the messages in $(\tk-1)/(2\log n) < \tk - 1$ rounds, then each message in $M'$ must at some point travel through a global edge to some node in $B_{\tk-1}(v^*)$. However, as $|B_{\tk-1}(v^*)| < k/(\tk-1)$, then in $(\tk-1)/(2\log n)$, rounds, these nodes can receive only $(\tk-1)/(2\log n) \cdot |B_{\tk-1}(v^*)| \cdot \log n < k/2$ messages, as each node can only receive $\log n$ global messages per round, due to the definition of \hybrid. As $|M'| \geq k/2$, this implies a contradiction.

Conversely, if at least $k/2$ message are actually originally inside $B_{\tk-1}(v^*)$, then we show that $B_{\tk-1}(v^*)$ does not have the capacity to \emph{send} these messages out of $B_{\tk-1}(v^*)$ to the rest of the graph, i.e.,~to nodes in $V \setminus B_{\tk-1}(v^*)$.

\paragraph{Upper Bound.} We compliment this lower bound by showing that the same round complexity suffices to solve \kdis, deterministically.

\begin{restatable}{theorem}{kdisUP}
\label{thm:optimal_dissemination}
The \kdis problem can be solved
    \emph{deterministically} 
    in $\tildeBigO{\tk}$ rounds in \hybrid. This result even holds in the more restrictive \hybridzero.
\end{restatable}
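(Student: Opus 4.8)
The plan is to reduce \kdis to a balanced routing task on the power graph $G^{t}$, where $t := \tk$, exploiting the fact that by \cref{def:bcast_quality} every node has degree at least $k/t$ in $G^{t}$ (whenever $t < D$; if $t = D$ we simply flood the local network for $D$ rounds and are done, which takes $\tildeBigO{\tk}$ rounds). First I would run a \emph{local preprocessing} step: flood the local network for $\Theta(t)$ rounds. Afterwards every message $m$, originally at $u_m$, is held by every node of $B_t(u_m)$, and $|B_t(u_m)| \ge k/t$ by the minimality in \cref{def:bcast_quality}; moreover, in \hybridzero every node learns the identifiers and induced topology of its $\Theta(t)$-ball, which is what will let it execute the later steps without a-priori knowledge of remote identifiers.

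Next I would compute, \emph{deterministically}, a set $C$ of centers that is a ruling set for $G^{2t}$: distinct centers lie at distance $> 2t$ in $G$, so the balls $\{B_t(c)\}_{c \in C}$ are pairwise disjoint and hence $|C| \le n / (k/t) = nt/k$, while every node lies within distance $2t$ of some center. Inside each $B_t(c)$ the nodes agree on a canonical identifier-ordering (they all see one another after the preprocessing, provided the flooding radius carries a large enough constant). Then comes the core \emph{global routing} step: for each center $c$ and message $m$, designate a target $\sigma_c(m) \in B_t(c)$, balanced so that every node of $B_t(c)$ is a target for at most $\lceil k/|B_t(c)| \rceil \le t$ messages, and a sender $\rho_c(m) \in B_t(u_m)$, balanced across the $\ge k/t$ candidate knowers and across the $\le nt/k$ clusters so that no node is responsible for more than $\tildeBigO{t}$ transmissions overall. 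Routing all pairs $\rho_c(m) \to \sigma_c(m)$ over the global network is then a routing instance in which every node receives at most $t$ messages (since the $B_t(c)$ are disjoint) and sends $\tildeBigO{t}$, so it completes in $\tildeBigO{t}$ rounds via Lenzen-style balanced routing on the global network. After this step, for every center $c$ the nodes of $B_t(c)$ collectively hold all $k$ messages (each holding $O(t)$ of them). Finally, flooding the local network for $\Theta(t)$ more rounds finishes the job: every node $v$ is within $2t$ of some center $c$ and hence within $3t$ of all of $B_t(c)$, so it hears all $k$ messages. The total is $\tildeBigO{\tk}$ rounds, deterministically.

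I expect two steps to be the main obstacles. The first is the balanced sender assignment: a single node may know astronomically many messages, so it is not obvious that senders can be chosen with per-node load $\tildeBigO{t}$. This should follow from a counting/flow argument on the bipartite incidence structure between "knowers" and "(message, cluster)" demands — each message has at least $k/t$ knowers, there are at most $nt/k$ clusters, and the knower-sets are themselves $t$-balls and therefore overlap in a structured way — together with the analogous feasibility on the receiver side; making this precise (a Hall-type or explicit water-filling argument) is the combinatorial heart of the upper bound.

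The second obstacle, and the reason the \hybridzero claim is strictly harder than the plain \hybrid claim, is addressing: a sender $\rho_c(m)$ must be able to name $\sigma_c(m)$, a node it has never heard of. Resolving this calls for a \emph{skeleton}-based directory — a sparse set of representative nodes per region whose identifiers are disseminated globally first — through which the cluster-to-slot addressing of the routing step is routed; and the whole clustering-plus-directory construction has to be carried out deterministically on the power graph $G^{2t}$ while controlling the congestion induced on the underlying local edges when $G^{2t}$-communication is simulated. I would expect this deterministic, congestion-aware construction of the ruling set and skeleton on $G^{2t}$ to be where most of the technical effort goes.
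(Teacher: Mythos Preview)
There is a genuine gap in your core routing step, exactly at the point you flagged but hoped a Hall-type argument would rescue. It does not. Take a path on $n$ nodes with $k=n$ messages all sitting at one endpoint. Then $t=\tk=\Theta(\sqrt{n})$, your ruling set produces $|C|=\Theta(\sqrt{n})$ disjoint clusters, and after $\Theta(t)$ rounds of local flooding only the $\Theta(\sqrt{n})$ nodes nearest the endpoint know any message at all. The demand side of your bipartite system has $k\cdot |C|=\Theta(n^{3/2})$ (message, cluster) pairs, each of which can only be served by one of those $\Theta(\sqrt{n})$ knowers, so \emph{some} knower must send $\Omega(n)$ messages over the global network, not $\tildeBigO{t}=\tildeBigO{\sqrt{n}}$. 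The ``structured overlap of $t$-balls'' does not help here: all the relevant $t$-balls coincide. So a single direct all-clusters routing round is infeasible by a polynomial factor whenever the initial messages are concentrated.

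This is exactly what the paper's proof is engineered to avoid. It builds the same kind of $\Theta(\tk)$-weak-diameter clusters of size $\Theta(k/\tk)$, but then organizes the clusters into a constant-degree, polylog-depth \emph{tree} (via the overlay construction of \cref{lem:construct_tree_without_ids} followed by pruning to cluster leaders) and propagates tokens \emph{up the tree and then back down}, load-balancing inside each cluster at every level. Because each cluster has at least $k/\tk$ nodes, after load balancing every node holds at most $\tk$ tokens, and passing them to the (at most constant number of) adjacent clusters via a precomputed one-to-one node matching costs $O(\tk)$ rounds per level. The tree handles the fan-out that kills your direct scheme: your $\Theta(\sqrt{n})$ knowers only have to talk to \emph{one} parent cluster, which then talks to its parent, and so on. Your second obstacle (addressing in \hybridzero) is also resolved by this tree: adjacent clusters learn each other's identifiers by percolating them down identical-shape internal binary trees rooted at the two cluster leaders, which already know each other from the overlay construction. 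Your ``skeleton-based directory'' is gesturing in this direction but does not yet supply the constant-degree cluster tree plus per-level matching that makes the whole thing deterministic and congestion-free.
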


To show \cref{thm:optimal_dissemination}, we partition the graph into \emph{clusters}, each with diameter $\tildeBigO{\tk}$ and with $\Theta(k/\tk)$ nodes. We desire to ensure that all the $k$ messages arrive at each cluster, which will allow each node to ultimately learn all the $k$ messages by using the local edges to receive all the information its cluster has.

To do so, we begin by building a virtual tree of all the clusters. While trivial in the \hybrid model, it is rather challenging in \hybridzero, as we must construct a tree that spans the entire graph, has $\tildeBigO{1}$ depth and constant degree, and every two nodes in the tree know the identifiers of each other even though they may be distant in the original graph. To do so, we build upon certain overlay construction techniques from \cite{gmyr2017distributed}.

Then, between any parent $P$ and child $C$ clusters in the binary tree, we ensure that every node node $P$ knows the identifier of exactly one node in $C$ and vice-versa, so that they may communicate through the global network. Reaching this state requires great care, as sending the identifiers of all the nodes in one cluster to all the nodes in another cluster is a challenging task. This must be done over the global network (as the clusters might be physically distant from each other) and thus we must carefully design an algorithm to achieve this without causing congestion.

Once nodes in clusters $P$ and $C$ can communicate with each other, we propagate all the $k$ messages up the cluster tree so that the cluster at the root of the tree knows all $k$ messages. Then, we propagate them back down to ensure that every cluster receives the $k$ messages. While doing these propagations, we perform load balancing steps within each cluster in order to ensure that each of its nodes is responsible to send roughly the same number of messages to other clusters, preventing congestion in the global network.

\paragraph{Corollaries.} An immediate corollary of the above is a \bccshort simulation in \hybrid, which requires broadcasting $k=n$ tokens spread uniformly across $G$. Therefore, we use $\tn$, as follows.

\begin{corollary}
    There is a universal lower bound of $\tildeBigOmega{\tn}$ rounds for simulating one round of \bccshort in \hybrid, and there exists a deterministic algorithm which does so in $\tildeBigO{\tn}$ rounds in \hybrid (and even in \hybridzero).
\end{corollary}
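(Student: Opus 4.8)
The plan is to read off both directions from \cref{thm:kdisLB} and \cref{thm:optimal_dissemination}, using the single observation that one round of \bccshort is exactly a \kdis instance with $k = n$ in which every node starts with one message. For the upper bound: to simulate a round of \bccshort, let $m_v$ denote the $O(\log n)$-bit value node $v$ wishes to broadcast and set $M = \{\, m_v : v \in V \,\}$. This is a legal instance of \kdis (\cref{def:kdis}) with $k = n$, and making every node learn all of $M$ is precisely the outcome of the simulated round; so \cref{thm:optimal_dissemination} directly gives a deterministic algorithm running in $\tildeBigO{\tn}$ rounds, valid even in \hybridzero.

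For the lower bound it suffices to show that this particular instance — the uniform layout, one token per node — already costs $\tildeBigOmega{\tn}$ rounds on the given graph $G$, since any algorithm simulating a \bccshort round solves it. I would reinspect the proof of \cref{thm:kdisLB}: the initial layout is used there only in the dichotomy around the ball $B_{\tn - 1}(v^*)$, where $v^*$ is the node achieving $\tn(v^*) = \tn$. Under the uniform layout with $k = n$, the number of tokens initially inside $B_{\tn-1}(v^*)$ is exactly $\size{B_{\tn-1}(v^*)}$, which by \cref{def:bcast_quality} is strictly less than $n/(\tn - 1)$. Hence, as long as $\tn \geq 3$, fewer than $n/2 = k/2$ tokens start inside $B_{\tn-1}(v^*)$, i.e.\ at least $k/2$ start outside it — this is exactly the first case of that proof, and its counting argument (each outside token must traverse a global edge into $B_{\tn-1}(v^*)$, whose nodes can jointly absorb fewer than $k/2$ global messages over $(\tn-1)/(2\log n)$ rounds) carries over unchanged. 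For $\tn \leq 2$ the asserted bound is $\tildeBigOmega{1}$ and trivial.

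The one point that needs care is not the upper bound, which is a one-line instantiation of \cref{thm:optimal_dissemination}, but ensuring that the lower bound is asserted for the \emph{specific} token layout produced by a \bccshort round rather than for a worst-case layout. The observation above resolves this: the uniform layout always lands in the ``receiving'' branch of the proof of \cref{thm:kdisLB}, because the radius-$(\tn-1)$ ball around the critical node $v^*$ is too small to hold even half of the $n$ tokens, so no token-distribution-specific obstruction arises. Everything else is a direct appeal to the two theorems, together with the standard fact that a \hybrid lower bound also holds in \hybridzero.
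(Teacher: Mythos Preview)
Your proposal is correct and follows the paper's approach: the paper treats the corollary as an immediate instantiation of \cref{thm:kdisLB} and \cref{thm:optimal_dissemination} with $k=n$, without further argument. Your additional re-inspection of the lower-bound proof for the uniform one-token-per-node layout is sound but unnecessary, because \cref{thm:kdisLB} is already stated as a \emph{universal} lower bound (with respect to both the graph and the initial token locations), so it applies directly to the uniform layout without any case-by-case verification.
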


As another corollary to our \kdis bounds, we get the same universally optimal characterization for the \kagg problem.

\begin{definition}[\kagg]
    Let $F: X\times X \rightarrow X$ be an aggregation function (associative and commutative). Assume each node $v$ originally holds $k$ values $f_1(v), \dots, f_k(v)$. The \kagg problem requires that each node learn all the values $F(f_i(v_1),\dots,f_i(v_n))$, for every $i \in [k]$. It is assumed that $k$ and the values $f_1(v), \dots, f_k(v)$ for each $v \in V$ are all at most polynomial in $n$.
\end{definition}

The idea behind our this is showing a bidirectional reduction in \hybrid between \kdis and \kagg, implying that both the lower and upper bounds above transfer. Showing both directions of the reduction requires some technical work. Showing that \kagg solves \kdis requires a routine to coordinate between all the nodes in the graph, and showing that \kdis solves \kagg requires specific observations about the trees we construct in the algorithm in \cref{thm:optimal_dissemination}.
We thus get the following.

\begin{restatable}{theorem}{kaggLBUB}
\label{thm:kaggLBUB}
    There is a universal lower bound of $\tildeBigOmega{\tk}$ rounds in \hybrid for the \kagg problem, and there is a deterministic algorithm that solves it in $\tildeBigO{\tk}$ rounds in \hybrid (and even in \hybridzero).
\end{restatable}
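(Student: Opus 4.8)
The plan is to prove both halves of \cref{thm:kaggLBUB} through a bidirectional reduction between \kdis and \kagg: the two problems turn out to be equivalent up to a $\polylog n$ additive and a constant multiplicative overhead, so the lower bound follows from the direction ``\kagg solves \kdis'' together with \cref{thm:kdisLB}, and the upper bound from the direction ``\kdis solves \kagg'' together with \cref{thm:optimal_dissemination}. Neither direction is black-box: the first needs a short coordination step, and the second needs to open up the cluster-tree machinery built inside the algorithm of \cref{thm:optimal_dissemination}.

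\emph{Lower bound.} Suppose \kagg can be solved in $T$ rounds in \hybrid; I would use it to solve \kdis in $\polylog n + T$ rounds. First, a coordination routine assigns every message a distinct index in $[k]$: each node counts the messages it holds, and a prefix sum of these counts is computed over a bounded-degree overlay tree spanning all nodes, constructed in $\polylog n$ rounds by standard overlay techniques (e.g.\ those of \cite{gmyr2017distributed}); a node holding $c_v$ messages thereby learns an offset $s_v$ and labels its messages $s_v+1,\dots,s_v+c_v$, while the root learns $k$ and broadcasts it. Now invoke \kagg with $f_i(v)$ equal to the message of index $i$ if $v$ holds it and to a dummy symbol $\bot$ otherwise, and with $F$ the selector that returns its non-$\bot$ argument (or $\bot$ if both are $\bot$); since each index is held by exactly one node, at most one argument per slot is ever non-$\bot$, so $F$ is associative and commutative, and its values are messages and hence of size $\poly(n)$ as the definition requires. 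After \kagg, every node knows $F(f_i(v_1),\dots,f_i(v_n))$ for all $i$, i.e.\ every message, so \kdis is solved. By \cref{thm:kdisLB} we get $\polylog n + T = \tildeBigOmega{\tk}$, hence $T = \tildeBigOmega{\tk}$. (Equivalently, one may bypass the reduction entirely: the selector instantiation above forces the extremal node $v^*$ to receive $k$ incompressible values, so the bottleneck argument of \cref{thm:kdisLB} applies verbatim to \kagg.)

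\emph{Upper bound.} For the converse I would not treat \kdis as a black box but instead reuse the structure that the algorithm of \cref{thm:optimal_dissemination} builds: a partition of $G$ into clusters of diameter $\tildeBigO{\tk}$ and size $\bigTheta{k/\tk}$, arranged in a tree of depth $\tildeBigO{1}$ and constant degree, together with, along each parent--child cluster pair, inter-cluster communication channels and a load-balanced assignment of the $k$ ``slots'' to the $\bigTheta{k/\tk}$ nodes of each cluster (each node handling $\tildeBigO{\tk}$ slots) that is consistent along those channels. Given this, \kagg is solved in three stages. \textbf{(i)} \emph{Local pre-aggregation}: inside each cluster $C$, using only the local network for $\tildeBigO{\tk}$ rounds (its diameter), compute for each slot $i$ the partial aggregate $F(f_i(v) : v \in C)$ and place it at the node of $C$ responsible for slot $i$; this collapses the as-many-as $nk$ input values down to $k$ values per cluster. \textbf{(ii)} \emph{Upward aggregation}: traverse the cluster tree bottom-up, each cluster combining slot by slot its own partial aggregates with those received from its $\bigO{1}$ children and forwarding the $k$ results to its parent, routed and load-balanced exactly as the $k$-message batches are moved up in \cref{thm:optimal_dissemination}, so each level costs $\tildeBigO{\tk}$ and there are $\tildeBigO{1}$ levels; afterwards the root cluster holds $F(f_i(v_1),\dots,f_i(v_n))$ for every $i$. \textbf{(iii)} \emph{Downward broadcast}: propagate the $k$ final aggregates down the cluster tree, mirroring the downward propagation in the algorithm of \cref{thm:optimal_dissemination}, so every cluster holds all of them, and then broadcast them inside each cluster over the local network in $\tildeBigO{\tk}$ rounds. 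The total is $\tildeBigO{\tk}$, and the whole procedure works even in \hybridzero because the channels and overlay it relies on are exactly those produced by the \hybridzero algorithm of \cref{thm:optimal_dissemination}.

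I expect the upper-bound direction to be the main obstacle. Unlike \kdis, a \kagg instance carries up to $nk$ input values, so one must argue that after the local pre-aggregation of stage (i) every cluster ships only $k$ values across each tree edge --- which is exactly the worst-case per-edge load already handled by \cref{thm:optimal_dissemination} --- and, more delicately, that the slot-to-node responsibility within each cluster is the same in the upward and downward stages and is aligned along the inter-cluster channels, so that ``the value for slot $i$'' is always at a predictable node and never congests the global network. A minor point in the lower-bound direction is keeping the index-assignment coordination within $\polylog n$ rounds so that it is absorbed into the $\tildeBigOmega{\tk}$ bound, and verifying that the selector $F$ is a valid aggregation function on the instances the reduction produces.
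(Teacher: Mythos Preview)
Your proposal is correct and follows essentially the same approach as the paper: a bidirectional reduction where \kdis reduces to \kagg via an index-assignment prefix sum over a $\polylog$-depth overlay tree plus a selector aggregation, and \kagg is solved by reusing the clustering and cluster-tree machinery of \cref{thm:optimal_dissemination} to locally pre-aggregate, convergecast partial aggregates to the root cluster, and then broadcast the $k$ results. The only cosmetic difference is that the paper invokes \cref{thm:optimal_dissemination} as a black box for the final broadcast step rather than explicitly mirroring the downward phase, and it is slightly less explicit than you are about the slot-alignment issue you flag.
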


\subsubsection{Analysis of \texorpdfstring{$\tk$}{Tk}}

We analyze the properties of $\tk$ in order to compare it to other graph parameters, such as the number of nodes in a graph and its diameter. This allows us to compare our results to existing previous works.

\begin{restatable}{lemma}{boundingTk}
\label{lem:boundingTk}
    If $\tk\neq{}D$, then $\sqrt{\frac{D}{3n}k}\leq{}\tk\leq{}\min{}\{D,\sqrt{k}\}$.
\end{restatable}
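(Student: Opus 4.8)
The plan is to prove the two inequalities separately; the bound $\tk\le\min\{D,\sqrt k\}$ is elementary and needs no hypothesis, while $\tk\ge\sqrt{Dk/(3n)}$ is where $\tk\neq D$ is used. For the upper bound, $\tk\le D$ is immediate from \cref{def:bcast_quality}, since $D$ is always a member of the set whose minimum defines $\tk(v)$. For $\tk\le\sqrt k$ I would invoke the elementary fact that in a connected graph $\size{B_t(v)}\ge\min\{t+1,n\}$ for all $v$ and $t\ge 0$ (each extra hop reaches at least one new node until all of $V$ is covered). Taking $t=\ceil{\sqrt k}$: if $t\le n-1$ then $\size{B_t(v)}\ge t+1>\sqrt k\ge k/\ceil{\sqrt k}=k/t$, giving $\tk(v)\le\ceil{\sqrt k}$; and if $t\ge n$ then $\ceil{\sqrt k}\ge n>D$ forces $\tk(v)\le D<\sqrt k$. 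Taking the maximum over $v$ yields $\tk\le\ceil{\sqrt k}$, which is the stated bound up to the suppressed ceiling.

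For the lower bound, the first step upgrades the hypothesis $\tk<D$ into a uniform statement: since $\tk(v)\le\tk<D$ for every $v$, the value $\tk(v)$ is witnessed by an integer $t_v\le\tk$ with $\size{B_{t_v}(v)}\ge k/t_v$, hence $\size{B_{\tk}(v)}\ge\size{B_{t_v}(v)}\ge k/t_v\ge k/\tk$; so \emph{every} radius-$\tk$ ball of $G$ contains at least $k/\tk$ nodes. The second step is a packing argument. Fix $a,b$ with $\dist(a,b)=D$, a shortest $a$--$b$ path $p_0,\dots,p_D$, and the equally spaced vertices $q_i=p_{i(2\tk+1)}$ for $i=0,\dots,m$ with $m=\floor{D/(2\tk+1)}$. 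Since any subpath of a shortest path is shortest, $\dist(q_i,q_j)\ge 2\tk+1$ whenever $i\neq j$, so the balls $B_{\tk}(q_0),\dots,B_{\tk}(q_m)$ are pairwise disjoint, and therefore
\[
n \;\ge\; \sum_{i=0}^{m}\size{B_{\tk}(q_i)} \;\ge\; (m+1)\,\frac{k}{\tk} \;>\; \frac{D}{2\tk+1}\cdot\frac{k}{\tk} \;\ge\; \frac{Dk}{3\tk^2},
\]
using $m+1>D/(2\tk+1)$ and $2\tk+1\le 3\tk$ (valid since $\tk\ge 1$). Rearranging gives $\tk^2\ge Dk/(3n)$, i.e.\ $\tk\ge\sqrt{Dk/(3n)}$.

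The upper bound is routine. The crux is the lower bound, and the idea I expect to be least obvious — it is not hinted at by the lower-bound sketch of \cref{thm:kdisLB}, which only reasons about the single extremal node $v^*$ — is to use $\tk<D$ to turn ``$\tk$ is small'' into ``every radius-$\tk$ ball is large'', and then to contradict the global budget of $n$ nodes by packing $\Theta(D/\tk)$ such balls disjointly along a diameter-realizing path. The remaining care is in the small-parameter regimes ($\tk=1$, or $\ceil{\sqrt k}\ge n$), which are handled uniformly by the connectivity bound $\size{B_t(v)}\ge\min\{t+1,n\}$.
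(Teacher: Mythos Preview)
Your proposal is correct and follows essentially the same approach as the paper: the upper bound via connectivity and $\tk\le D$ is identical, and the lower bound is the same ball-packing argument along a diameter-realizing shortest path using that $\tk<D$ forces $\size{B_{\tk}(v)}\ge k/\tk$ for every $v$. The only cosmetic difference is that you place the disjoint radius-$\tk$ balls explicitly at equally spaced path vertices, whereas the paper phrases the same count dually (at most $n\tk/k$ disjoint balls, each intersecting the path in at most $2\tk+1$ vertices).
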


Recall that the previous works for solving weaker variants of \kdis all take $\tildeBigO{\sqrt{k}}$ rounds \cite{augustine2020shortest, anagnostides2021deterministic}, and so \cref{lem:boundingTk} implies that our algorithm for \kdis in $\tildeBigO{\tk}$ rounds is never slower than the previous works and supports a wider variety of cases. 

Conversely, we analyze certain graphs of families where $\tk = o(\sqrt{k})$, to show that many such graphs exist, beyond just graphs with $D = o(\sqrt{k})$. We proceed with estimating the value of $\tk$ for path, cycle, and any $d$-dimensional square grid graphs (formal definition to follow). 
For instance, in $d$-dimensional square grids, we get the following.

\begin{restatable}{theorem}{gridGraphsTk}
\label{theorem:grid_graphs_tk}
    Let $G=(V,E)$ be a $d$-dimensional square grid, with $|V|=n=m^d$. \[
    \tk=\begin{cases}
        \Theta{}(k^{1/(d+1)}) & k=O(n^{(d+1)/d}) \\
        O(D)=O(dn^{1/d}) & k=\Omega(n^{(d+1)/d})
    \end{cases}
    \]
\end{restatable}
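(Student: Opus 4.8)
The plan is to reduce everything to one tight estimate on ball sizes in the grid, namely $|B_t(v)| = \Theta(\min(t^d,n))$ (with constants depending only on $d$) uniformly over all vertices $v$, and then to read $\tk$ off directly from \cref{def:bcast_quality} by a short case analysis on the size of $k$ relative to $n^{(d+1)/d} = m^{d+1}$.

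First I would fix the convention that the $d$-dimensional square grid is $[m]^d$ with the hop (equivalently $\ell_1$) metric, so that its diameter is $D = d(m-1) = \Theta(d\,n^{1/d})$. The technical core is the following ball-size lemma: there are constants $c_d, C_d > 0$ such that for every $v \in V$ and every integer $t \ge 1$, $c_d\min(t^d,n) \le |B_t(v)| \le C_d\min(t^d,n)$. The upper bound is immediate since $B_t(v)$ embeds into the $\ell_1$-ball of radius $t$ in $\mathbb{Z}^d$, which has $\Theta(t^d)$ points, and trivially $|B_t(v)| \le n$. For the lower bound I would argue as follows: for $t \le (m-1)/2$, each vertex $w$ has an axis-aligned orthant of radius $t$ that points away from $w$'s nearest face and therefore fits entirely inside the grid; this orthant is a discrete simplex with $\binom{t+d}{d} = \Omega_d(t^d)$ points. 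For $(m-1)/2 < t < m$ we get $|B_t(w)| \ge |B_{\lceil (m-1)/2\rceil}(w)| = \Omega_d(m^d) = \Omega_d(n) \ge \Omega_d(t^d)$, and for $t \ge m$ the same lower bound $\Omega_d(n)$ holds (and indeed $|B_t(w)| = n$ once $t \ge D$). This makes $|B_t(v)| = \Theta(\min(t^d,n))$ hold uniformly in $v$ despite boundary effects.

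Next I would plug this into \cref{def:bcast_quality}. For the lower bound $\tk = \Omega(k^{1/(d+1)})$ when $k = O(n^{(d+1)/d})$: any $t$ with $|B_t(v)| \ge k/t$ satisfies $C_d t^d \ge k/t$, hence $t \ge (k/C_d)^{1/(d+1)}$; and in this regime the cap $D = \Theta(d\,n^{1/d})$ is itself $\Omega(k^{1/(d+1)})$ since $k^{1/(d+1)} = O(n^{1/d})$, so $\tk(v) = \Omega(k^{1/(d+1)})$ for every $v$. For the matching upper bound I would exhibit a good radius $t^\ast = \Theta(k^{1/(d+1)})$. If a suitably large constant multiple of $k^{1/(d+1)}$ is at most $m$, then $|B_{t^\ast}(v)| \ge c_d (t^\ast)^d \ge k/t^\ast$ once the constant is chosen large enough; otherwise $k = O(n^{(d+1)/d})$ forces $k = \Theta(m^{d+1})$, and then $t^\ast = \beta m = \Theta(m) = O(k^{1/(d+1)})$ with $\beta$ a large enough constant satisfies $|B_{t^\ast}(v)| = \Omega_d(n) \ge k/t^\ast$. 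Either way $\tk(v) \le O(k^{1/(d+1)})$ for all $v$, so $\tk = \Theta(k^{1/(d+1)})$. In the remaining regime $k = \Omega(n^{(d+1)/d})$, the bound $\tk \le D = O(d\,n^{1/d})$ is immediate since $\tk(v) \le D$ by definition, which is exactly the claimed estimate.

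I expect the only real work to be the ball-size lemma, and within it the requirement that the lower bound hold \emph{uniformly over all vertices}: corners have the smallest balls, so the orthant argument is needed there, and the intermediate range $m \le t \le D$, where $|B_t(v)|$ interpolates between $\Theta(m^d)$ and $n$, needs a monotonicity/symmetry argument rather than a naive volume count. Everything downstream is a one-line matching of exponents in $\min(t^{d+1}, nt) \asymp k$, and the second case of the theorem is essentially free from the $D$-cap built into \cref{def:bcast_quality}.
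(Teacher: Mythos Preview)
Your proposal is correct. Both you and the paper ultimately reduce to the estimate $|B_t(v)|=\Theta(t^d)$ for small radii and then solve $t\cdot|B_t(v)|\asymp k$, so the core is the same. The organization differs, however: the paper first proves a monotonicity lemma (``if $|B_t(v)|\le|B_t(w)|$ for all $t$ then $\tk(v)\ge\tk(w)$''), uses it to reduce to a corner vertex, and then computes the corner ball exactly as $|B_r(\text{corner})|=\binom{r+d}{d}$ via weak compositions (their Lemmas~5.3 and~5.4), before reading off $\tk$ from the degree-$(d+1)$ polynomial inequality $r\binom{r+d}{d}\ge k$. You bypass the reduction-to-corner step by proving the two-sided asymptotic $|B_t(v)|=\Theta_d(\min(t^d,n))$ \emph{uniformly} in $v$ (orthant for the lower bound, $\ell_1$-ball in $\mathbb{Z}^d$ for the upper bound), and then argue the bound on $\tk(v)$ for all $v$ simultaneously. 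Your route avoids the separate monotonicity lemma and is slightly more robust about the boundary regime $t\approx m$, which the paper handles by the somewhat informal remark that ``the approximate solution is within a $d$ factor from $D$''; conversely, the paper's exact formula $\binom{r+d}{d}$ is cleaner than your asymptotic constants. Neither approach has an advantage in strength for the stated $\Theta$-level conclusion.
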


This shows a few interesting points. First, when $k$ is not too big, i.e.,~$k = O(n^{1+1/d})$, we can broadcast $k$ messages in a much better round complexity, $\tildeBigO{\tk}=\tildeBigO{k^{1/(d+1)}}$ than the existing algorithms, which take at least $\tildeBigO{\sqrt{k}}$. However, once $k$ crosses $\Omega{}(n^{1+1/d})=\Omega{}(D^{d+1})$, we cannot do much better than naively sending all messages via the local network in $O(D)$ rounds.

Specifically, note that for any constant dimension $d$, grid graphs have $O(dn) = O(n)$ edges, and thus in $\tildeBigO{\tn} = O(n^{1/(d+1)})$ rounds, it is possible for all the nodes to learn the entire graph, using \cref{thm:optimal_dissemination}, and locally compute exact APSP. For any $d\geq 2$, this is polynomially faster than the existentially optimal algorithms of \cite{augustine2020shortest, kuhn2020computing, anagnostides2021deterministic}.

\subsubsection{Applications}

We show a variety of applications for our \kdis and \kagg results. Most of these applications follow from our above algorithms in a rather straightforward manner, as broadcasting and aggregation are very fundamental building blocks.

We start with several applications for APSP in \hybrid. In previous works, exact weighted APSP was settled with an existential lower and upper bounds of $\tildeBigO{\sqrt{n}}$ rounds, due to \cite{augustine2020shortest, kuhn2020computing}. Their algorithms are randomized, while the best known deterministic algorithm of \cite{anagnostides2021deterministic} runs in $\tildeTheta{\sqrt{n}}$ rounds, yet produces a $(\log{n}/\log{\log{n}})$-approximation. For unweighted APSP, \cite{anagnostides2021deterministic} showed a $(1+\epsilon)$-approximation, deterministically, in $\tildeBigO{\sqrt{n}}$ rounds as well.

We show several algorithms for exactly computing or approximating APSP. Before we show our upper bounds, we stress that they all work in \hybridzero. Note that in this setting, the identifiers can be arbitrary $O(\log n)$ strings. In order for a node to produce its output for APSP, it must know the identifiers of all the nodes in $G$, and thus this corresponds to broadcasting all $n$ identifiers -- which are, in essence, $n$ arbitrary messages. Therefore, the following holds due to \cref{thm:kdisLB}.\footnote{Note that in \kdis, we assume that each of the $k$ messages originally is only known to one node. In this setting where we have to broadcast the $n$ arbitrary identifiers of the nodes in $G$, it actually holds that the identifier of each node is known both to itself and its neighbors. This is not a problem as one can assume w.l.o.g.~that any algorithm for \kdis can perform one round \emph{for free} whereby each node originally holding messages sends these messages to all its neighbors using the local network. This simply implies that the lower bound in \cref{thm:apsp_lb} is lower by at most one round than the lower bound in \cref{thm:kdisLB}.}

\begin{theorem}
\label{thm:apsp_lb}
    In \hybridzero, there is a universally optimal lower bound of $\tildeBigOmega{\tn}$ rounds for any approximation of APSP.
\end{theorem}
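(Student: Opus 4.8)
The plan is to reduce to \cref{thm:kdisLB} instantiated with $k=n$, exploiting the fact that in \hybridzero the node identifiers are adversarial $O(\log n)$-bit strings that are initially unknown outside a node's immediate neighbourhood. The first step is to argue that any algorithm that outputs even an \emph{approximation} of APSP must, by the time it halts, leave every node $v$ in possession of $\mathrm{id}(u)$ for every $u\in V$: the output required at $v$ is a collection of pairs (identifier of $u$, estimate of $\dist(v,u)$) ranging over all $u$, and since $G$ is connected every such distance is finite, so $v$ may not drop any target. Formally this is an indistinguishability argument --- if some execution on $G$ left $v$ unaware of $\mathrm{id}(u)$, an adversary could relabel $u$ with a fresh identifier without changing $v$'s view, so $v$'s output would be wrong for at least one of the two labellings.

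The second step identifies this ``learn all identifiers'' task with a \kdis instance on $G$ with $k=n$: give node $u$ the token $t_u:=\mathrm{id}(u)$. Choosing the identifiers to be independent uniform $O(\log n)$-bit strings makes the token multiset incompressible, so, exactly as in the proof of \cref{thm:kdisLB}, any coding scheme helps by at most a constant factor and we may assume tokens travel as is. The only discrepancy with \cref{def:kdis} is that $t_u$ starts out known to $u$ \emph{and} all of its neighbours rather than to $u$ alone; this is absorbed by letting a \kdis algorithm spend one free round in which every node forwards its token to its neighbours over the local network, after which the two initial configurations coincide. Consequently an $R$-round APSP algorithm on $G$ yields an $(R+1)$-round algorithm solving this \kdis instance.

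The final step is to invoke \cref{thm:kdisLB} (which, being a lower bound, also holds in \hybridzero). We may assume $\tn\geq 3$, since otherwise $\tn=\tildeBigO{1}$ and the claimed bound is vacuous. Then by \cref{def:bcast_quality}, $|B_{\tn-1}(v^*)| < n/(\tn-1)\le n/2$ for the node $v^*$ attaining $\tn(v^*)=\tn$, so more than $n/2$ of the tokens lie at nodes outside $B_{\tn-1}(v^*)$ --- precisely the case handled in the proof of \cref{thm:kdisLB}, which shows $v^*$ cannot gather all of them within $(\tn-1)/(2\log n)$ rounds. Hence $R+1=\tildeBigOmega{\tn}$, so $R=\tildeBigOmega{\tn}$, and since this holds for every graph $G$ we obtain the universal lower bound.

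I expect the main obstacle to be the first step: making precise what ``an approximation of APSP'' is required to output, and turning ``$v$ needs $\mathrm{id}(u)$'' into a clean indistinguishability statement; alongside this one must check that the fixed token placement (token $t_u$ sitting at $u$) is genuinely a hard instance of \cref{thm:kdisLB} rather than an easy one --- which, as sketched, it is as soon as $\tn\ge 3$, because most identifiers then reside outside $B_{\tn-1}(v^*)$.
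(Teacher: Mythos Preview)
Your proposal is correct and follows essentially the same approach as the paper: the paper's ``proof'' is the paragraph preceding the theorem (APSP output requires every node to learn all identifiers, hence reduces to $n$\textsc{-dissemination}) together with the footnote that handles the ``identifiers are already known to neighbours'' discrepancy via one free local round, exactly as you do. Your third step --- re-entering the case analysis of \cref{thm:kdisLB} to verify that the one-token-per-node placement is hard --- is more than needed, since that lower bound is already universal over initial token placements and therefore applies directly to this specific instance.
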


Note that the universal optimality of the lower bound is w.r.t.~the graph topology, but not w.r.t.~the choice of identifiers themselves.

Most of our algorithms below run in $\tildeBigO{\tn}$ rounds, and therefore are universally optimal computations of APSP in \hybridzero. Moreover, due to \cref{lem:boundingTk}, $\tn \leq \sqrt{n}$ for any graph, implying that our algorithms run in $\tildeBigO{\tn} = \tildeBigO{\sqrt{n}}$ rounds, and thus are never slower than the $\tildeTheta{\sqrt{n}}$ round algorithms of \cite{augustine2020shortest, kuhn2020computing} in \hybrid, yet are faster when $\tn = o(\sqrt{n})$.

As a warm up, we show that in sparse graphs one can learn the entire graph and thus exactly compute APSP. That is, in graphs with $\tildeBigO{n}$ edges, we apply \cref{thm:optimal_dissemination} with $k=n$ at most $\tildeBigO{1}$ times, resulting in $\tildeBigO{\tn}$ rounds.

\begin{corollary}
    Given a sparse, weighted graph $G=(V,E,\omega)$ with $|E|=\tildeBigO{n}$, there is an algorithm that solves any graph problem in $\tildeBigO{\tn}$ rounds in \hybridzero, including exact weighted APSP.
\end{corollary}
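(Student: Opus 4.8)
The plan is to have every node learn the whole of $G$ --- all node identifiers, all edges, and all edge weights --- after which each node finishes by purely local computation: running Floyd--Warshall (or $n$ instances of Dijkstra) to output exact weighted APSP, and likewise computing any other quantity that is a function of $G$. Since edge weights are assumed polynomially bounded, a single edge $\{u,v\}$ together with its weight fits into one $O(\log n)$-bit message; I would declare this message to be \emph{owned} by the endpoint with the smaller identifier, so that each of the $|E| = \tildeBigO{n}$ edge-messages is held by exactly one node, matching the hypothesis of \cref{def:kdis}. (An edge is in fact initially known to both its endpoints, but as already noted before \cref{thm:apsp_lb} this redundancy only helps and costs at most one extra round.)

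The one point needing care is that a direct call of \cref{thm:optimal_dissemination} on all $|E|$ edge-messages is an instance with $k=\tildeBigO{n}$, not $k=n$, and naively assigning each node's edge-messages to batches $1,2,\dots$ round-robin would create up to $\Delta(G)$ batches, which can be $\Theta(n)$ even in a sparse graph (e.g.\ a star). To avoid this, I would first invoke \cref{thm:optimal_dissemination} once with $k=n$ to disseminate the $\le n$ pairs $(\mathrm{id}(v),c_v)$, where $c_v$ is the number of edge-messages owned by $v$; this costs $\tildeBigO{\tn}$ rounds and lets every node compute, by a local prefix-sum in identifier order, both $|E|=\sum_v c_v$ and a global index in $\{1,\dots,|E|\}$ for each of its own edge-messages. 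Setting $B:=\ceil{|E|/n}=\tildeBigO{1}$ and routing the edge-message of global index $g$ into batch $\ceil{g/n}$, each batch then contains at most $n$ messages, each owned by a unique node, and each node contributes to at most two batches --- a legal \kdis instance regardless of the degree distribution.

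Finally I would invoke \cref{thm:optimal_dissemination} on each of the $B$ batches in turn. A batch of size $k_j\le n$ is disseminated in $\tildeBigO{\tk[k_j]}$ rounds, and since $\tk$ is nondecreasing in $k$ (the constraint $\size{B_t(v)}\ge k/t$ in \cref{def:bcast_quality} only tightens as $k$ grows) this is $\tildeBigO{\tn}$ per batch. Over the $B+1=\tildeBigO{1}$ total calls to \cref{thm:optimal_dissemination} this sums to $\tildeBigO{\tn}$ rounds, after which every node knows all of $G$ and terminates locally; everything runs in \hybridzero since \cref{thm:optimal_dissemination} does (in particular the identifiers needed to use the global network are learned inside each call). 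I expect the only genuine obstacle is the load-balancing/batching step --- guaranteeing $\tildeBigO{1}$ batches rather than a number proportional to the maximum degree, while keeping each batch a valid dissemination instance with $k\le n$; everything else is bookkeeping. (One could alternatively skip batching and call \cref{thm:optimal_dissemination} a single time on all $|E|$ edge-messages, using the scaling bound $\tk[\alpha k]\le\alpha\,\tk$ with $\alpha=\ceil{|E|/n}=\tildeBigO{1}$ to conclude $\tk[{|E|}]=\tildeBigO{\tn}$.)
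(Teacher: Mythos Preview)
Your proposal is correct and follows the same approach the paper sketches: broadcast all $\tildeBigO{n}$ edge-messages via $\tildeBigO{1}$ invocations of \cref{thm:optimal_dissemination} (or equivalently one invocation plus the scaling bound $\tk[\alpha k]=\tildeBigO{\tn}$ for $\alpha=\tildeBigO{1}$), after which every node knows $G$ and computes locally. You are more careful than the paper about how to form the batches; one small slip is the claim that within a batch each message is ``owned by a unique node'' --- your indexing can place several of a high-degree node's edges into the same batch --- but this is harmless, since \cref{def:kdis} imposes no such uniqueness requirement.
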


We proceed with approximating APSP in graphs with any number of edges.

\paragraph{Unweighted APSP.} We show a $(1+\epsilon)$ approximation of unweighted APSP which runs in $\tildeBigO{\tn/\epsilon^2}$ rounds \whp. The best known algorithm for exactly computing or approximating APSP for general graphs to practically any factor takes $\tildeBigO{\sqrt{n}}$ rounds \cite{kuhn2020computing}. As $\tn \leq \sqrt{n}$, we are always at least as fast, and are faster when $\tn = o(\sqrt{n})$.

\begin{restatable}{theorem}{unweightedAPSPApprox}
\label{thm:unweighted_apsp_approx}
    For any $\epsilon \in (0, 1)$, there is a randomized algorithm which computes a $(1+\epsilon)$-approximation of APSP in unweighted graphs in \hybridzero, in $\widetilde{O}(\tn/\epsilon^2)$ rounds \whp.
\end{restatable}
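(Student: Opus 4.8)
The plan is to reduce $(1+\epsilon)$-approximate unweighted APSP to one bounded-depth BFS in the local network, which resolves short distances exactly, together with two calls to \cref{thm:optimal_dissemination}, which let every node reconstruct approximate long distances; the only randomness is in sampling a \emph{skeleton} set. Concretely, each node joins $S$ independently with probability $p=\Theta(\tn\log n/n)$. Since $\size{B_{\tn}(v)}\geq n/\tn$ for every $v$ by \cref{def:bcast_quality}, we get $\size{S}=\widetilde O(\tn)$ and, simultaneously for all $v$, at least one skeleton inside $B_{\tn}(v)$, both \whp. Because $\tn\leq\sqrt n$ by \cref{lem:boundingTk}, $\size{S}^2=\widetilde O(n)$, which is what makes the skeleton data cheap to broadcast.

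First, every node runs a BFS to depth $R=\widetilde\Theta(\tn/\epsilon^2)$ in the local network, costing $\widetilde O(\tn/\epsilon^2)$ rounds; afterward each node knows the exact distance to, and identifier of, every node within distance $R$, in particular every nearby skeleton. Form the \emph{skeleton graph} $G_S$ on vertex set $S$ with an edge $\{s,s'\}$ of weight $d_G(s,s')$ for each pair with $d_G(s,s')\leq R_0:=\Theta(\tn/\epsilon)$; each skeleton learns its incident $G_S$-edges from its BFS. A standard ``skeleton-hopping'' argument — walk along a shortest $s$--$s'$ path in $G$ in steps of length $\approx R_0$, jumping at each step to a skeleton within $\tn$ hops of the current path vertex — shows that $G_S$ is a $(1+O(\epsilon),\,O(\tn/\epsilon))$-emulator of the metric that $G$ induces on $S$, with $d_{G_S}\geq d_G$ on $S$ because all edge weights are true $G$-distances. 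Now disseminate all of $G_S$ to the whole graph via \cref{thm:optimal_dissemination}: this is \kdis with $k=\size{E(G_S)}\leq\size{S}^2=\widetilde O(n)$ tokens, and $T_k=\widetilde O(\tn)$ whenever $k=\widetilde O(n)$ (enlarging a ball's radius by a polylog factor keeps its size $\geq n/\tn$), so it costs $\widetilde O(\tn)$ rounds. Every node then computes all pairwise distances in $G_S$ locally.

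Second, call \kdis once more, with one token per node: for each $u$, the identifier of some skeleton $s_u\in S\cap B_{\tn}(u)$ together with $d_G(u,s_u)$, both known to $u$ from its BFS; this costs $\widetilde O(\tn)$ rounds. A node $v$ then outputs, for each target $u$: the exact value $d_G(v,u)$ if $u$ was seen within $v$'s depth-$R$ BFS, and otherwise $\hat d(v,u)=d_G(v,s_v)+d_{G_S}(s_v,s_u)+d_G(u,s_u)$, where $s_v\in S\cap B_{\tn}(v)$ is $v$'s own nearby skeleton. The triangle inequality and $d_{G_S}\geq d_G$ give $\hat d(v,u)\geq d_G(v,u)$, while the emulator bound together with $d_G(v,s_v),d_G(u,s_u)\leq\tn$ gives $\hat d(v,u)\leq(1+O(\epsilon))\,d_G(v,u)+O(\tn/\epsilon)$; since every pair not resolved exactly has $d_G(v,u)>R=\widetilde\Omega(\tn/\epsilon^2)$, the additive term is at most $O(\epsilon)\,d_G(v,u)$. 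Rescaling $\epsilon$ by a constant yields a $(1+\epsilon)$-approximation. The total is $\widetilde O(\tn/\epsilon^2)$ for the BFS plus $\widetilde O(\tn)$ for the two disseminations, which is $\widetilde O(\tn/\epsilon^2)$; every step other than the sampling is deterministic, and all of it runs in \hybridzero since \cref{thm:optimal_dissemination} does — a node learns every identifier it needs (nearby ones from the BFS, the rest from the $n$-token dissemination) to label its output.

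The main obstacle is precisely the tension that fixes the parameters: the skeleton graph must be small enough to broadcast in $\widetilde O(\tn)$ rounds yet accurate enough to be a $(1+\epsilon)$-emulator. Keeping $\size{S}=\widetilde O(\tn)$ — forced by the rate $p=\Theta(\tn\log n/n)$ and usable only because $\tn\leq\sqrt n$ makes $\size{E(G_S)}=\widetilde O(n)$ — costs an edge-distance threshold $R_0=\Theta(\tn/\epsilon)$ and hence a residual additive error $\Theta(\tn/\epsilon)$ in the emulator, which forces the exact-BFS radius up to $\widetilde\Theta(\tn/\epsilon^2)$; that quantity is the running time and the source of the $\epsilon^{-2}$. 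What remains is routine: the skeleton-hopping lemma with these parameters, the Chernoff bounds for $\size{S}$ and for skeleton coverage of all balls $B_{\tn}(v)$, and the \hybridzero identifier bookkeeping.
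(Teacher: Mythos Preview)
Your argument is correct but takes a genuinely different route from the paper. The paper picks at most $\tn$ centers via the deterministic ruling-set clustering of \cref{lem:clustering_with_weak_diam} and then invokes the black-box $(1+\epsilon)$-SSSP of \cref{thm:polylog_approx_sssp} (which rests on the minor-aggregation machinery of Rozho\v{n} et al.) once per center; the $1/\epsilon^2$ factor comes from that external SSSP routine, while the local BFS radius is only $O(\tn\log n/\epsilon)$. You instead sample $\widetilde O(\tn)$ skeletons randomly, exploit $\tn\leq\sqrt n$ (\cref{lem:boundingTk}) so that the full skeleton graph has $\widetilde O(n)$ edges, and broadcast it with one call to \cref{thm:optimal_dissemination}; the $1/\epsilon^2$ now enters through your BFS depth rather than through an SSSP subroutine.

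What your approach buys is self-containment: it uses only local BFS and the paper's own dissemination primitive, avoiding the heavy $(1+\epsilon)$-SSSP black box entirely. In fact your emulator analysis is a bit loose --- the skeleton-to-skeleton stretch is purely multiplicative $(1+O(\epsilon))$ with no additive term (the $2\tn$ detours at intermediate hops telescope against the path length), and the only genuine additive error is the $O(\tn)$ from the two endpoint detours to $s_v,s_u$. With that tighter bound a BFS radius of $\Theta(\tn/\epsilon)$ already suffices, so your method actually yields $\widetilde O(\tn/\epsilon)$ rounds, improving on the stated theorem. One small omission: you need to compute $\tn$ first (via \cref{thm:tk_computation}) before you can set $p$ and $R$, but this is $\widetilde O(\tn)$ rounds and harmless.
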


Basically, we extend the the $(1+\epsilon)$ polylogarithmic SSSP algorithm of \cite{schneider2023thesis} to a universally optimal unweighted APSP.
The techniques behind this algorithm are as follows. We first compute a $\tildeBigO{\tn}$-weak diameter clustering to at most $\tn$ clusters. Then, we execute the $(1+\epsilon)$ polylogarithmic SSSP from cluster centers. We then explore a large enough neighborhood of each node, and each node broadcasts its closest cluster center and distance to it. Finally, we are able to approximate the distance well enough using the $(1+\epsilon)$ approximation to cluster center and the distances broadcast. 

\paragraph{Weighted APSP.} We now show several results for approximating APSP in weighted graphs.

We begin with the following algorithm that computes an $(\epsilon \cdot \log n)$ approximation, and is based on using a known result of \cite{rozhovn2020polylogarithmic} for constructing a graph spanner (a sparse subgraph which preserves approximations of distances) and then learning that entire spanner.

\begin{restatable}{theorem}{firstWeightedAPSPApprox}
\label{thm:firstWeightedAPSPApprox}
    There is a deterministic algorithm in \hybridzero, that given a graph $G=(V,E,\omega)$, computes a $(\epsilon \cdot \log{n})$-approximation for APSP in $\widetilde{O}(2^{1/\epsilon}\tn)$ rounds.
\end{restatable}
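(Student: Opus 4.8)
The plan is to reduce the problem to globally learning a sparse deterministic spanner of $G$ and then invoking \cref{thm:optimal_dissemination}. First I would run the deterministic spanner construction of \cite{rozhovn2020polylogarithmic} using only the local network (a \congest-style computation, which is legal in \hybridzero since adjacent nodes may freely exchange bits): in $\polylog n$ rounds every node learns which of its incident edges of $G$ lie in a subgraph $H \subseteq G$ that is a $(2t-1)$-spanner of the weighted graph $G$ with $\tildeBigO{n^{1+1/t}}$ edges, for a parameter $t$ we choose. Setting $t = \Theta(\epsilon \log n)$ makes the multiplicative stretch of $H$ at most $\epsilon \log n$, and since $n^{1/t} = n^{\Theta(1/(\epsilon \log n))} = 2^{\Theta(1/\epsilon)}$, the spanner has $\tildeBigO{2^{O(1/\epsilon)} n}$ edges. (If $\epsilon$ is so small that $t<1$, then $2^{1/\epsilon} \gg n$ and we instead simply broadcast the entire edge set of $G$ via \cref{thm:optimal_dissemination} and compute exact APSP; the monotonicity bound below shows this still costs $\tildeBigO{2^{1/\epsilon}\tn}$ rounds.)

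Next I would broadcast the entire edge set of $H$ to all nodes. Each spanner edge is described by two $O(\log n)$-bit endpoint identifiers together with its $O(\log n)$-bit weight, so this is an instance of \kdis with $k = \tildeBigO{2^{O(1/\epsilon)} n}$ messages, where we assign each edge's message to its lower-identifier endpoint, which learned it in the previous step. Applying \cref{thm:optimal_dissemination} disseminates all of $H$ deterministically in $\tildeBigO{\tk(G)}$ rounds. To finish the round-complexity accounting I need $\tk(G) = \tildeBigO{2^{O(1/\epsilon)}\tn(G)}$ for this value of $k$; this follows from an elementary monotonicity estimate read off \cref{def:bcast_quality}: for any integer $c \geq 1$ and any node $v$, if $t_0 = \tn(v)$ then $\size{B_{ct_0}(v)} \geq \size{B_{t_0}(v)} \geq n/t_0 = cn/(ct_0)$, hence $T_{cn}(v) \leq c\cdot\tn(v)$ (and $T_{cn}(v)\leq D$ in all cases). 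Taking $c = \tildeBigO{2^{O(1/\epsilon)}}$ yields the claimed bound, and the $\polylog n$ spanner-construction overhead is absorbed since $\tn \geq 1$.

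Finally, once every node holds all of $H$, it locally runs an exact APSP computation on the weighted graph $H$ and outputs the resulting distance for each pair. Because $H$ is a connected spanning subgraph of $G$ (we treat each connected component of $G$ separately; distances across components are $\infty$), every node identifier appears as an endpoint of some spanner edge, so each node indeed knows the whole vertex set and can produce all of its outputs. The $(2t-1)\leq \epsilon\log n$ stretch of $H$ gives $d_G(u,w)\leq d_H(u,w)\leq (\epsilon\log n)\,d_G(u,w)$ for all $u,w$, the desired $(\epsilon\log n)$-approximation, and the entire pipeline is deterministic since both the spanner construction of \cite{rozhovn2020polylogarithmic} and \cref{thm:optimal_dissemination} are.

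I do not expect a serious obstacle; this is a composition of off-the-shelf tools. The points that need care are (i) calibrating the spanner stretch parameter $t = \Theta(\epsilon\log n)$ against the target approximation and tracking how $n^{1/t}$ collapses to $2^{O(1/\epsilon)}$, so that the edge count and the final round bound come out as stated, (ii) the monotonicity inequality $T_{cn}(G)\leq c\,\tn(G)$ that converts ``broadcast $2^{O(1/\epsilon)}n$ messages'' into ``$\tildeBigO{2^{O(1/\epsilon)}\tn}$ rounds,'' and (iii) verifying that the cited deterministic spanner applies to \emph{weighted} graphs and runs within the local-network model of \hybridzero (it does, since that network is at least as powerful as \congest).
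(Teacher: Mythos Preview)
Your approach is essentially the paper's: run the deterministic weighted spanner of \cite{rozhovn2020polylogarithmic} with stretch parameter $\Theta(\epsilon\log n)$, then broadcast its $\tildeBigO{2^{\Theta(1/\epsilon)}n}$ edges via \cref{thm:optimal_dissemination} and compute APSP locally. The one substantive difference is step (ii). You use the elementary bound $T_{cn}\le c\,\tn$, which with $c=n^{1/t}=4^{1/\epsilon}$ yields $\tildeBigO{4^{1/\epsilon}\tn}=\tildeBigO{2^{2/\epsilon}\tn}$ rounds --- i.e., $2^{O(1/\epsilon)}$ rather than the exact $2^{1/\epsilon}$ in the statement. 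The paper instead invokes \cref{lem:growth_of_tk}, which gives the sharper $T_{\alpha n}\le O(\sqrt{\alpha})\,\tn$; applying it with $\alpha=4^{1/\epsilon}$ recovers $\tildeBigO{2^{1/\epsilon}\tn}$ on the nose. Your monotonicity argument is correct and more elementary, but if you want to match the stated exponent you need the square-root growth lemma. A second minor difference: the paper first broadcasts all identifiers (so it is formally in \hybrid when invoking \cref{thm:spanner_construction}), whereas you rely on the spanner being spanning to recover all identifiers afterward; both are fine.
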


We can use \cref{thm:firstWeightedAPSPApprox} to show a result which is comparable to the best known deterministic approximation, by \cite{anagnostides2021deterministic}, deterministically achieving the same approximation ratio of $\log{n}/\log{\log{n}}$, but in $\tildeBigO{\tn}$ instead of $\tildeBigO{\sqrt{n}}$ rounds.

\begin{corollary}
    By running \cref{thm:firstWeightedAPSPApprox} with $\epsilon=1/\log{\log{n}}$, we achieve a $\log{n}/\log{\log{n}}$ approximation in $\widetilde{O}(\tn)$ rounds in \hybridzero.
\end{corollary}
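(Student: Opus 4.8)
The plan is to invoke \cref{thm:firstWeightedAPSPApprox} as a black box with the single parameter choice $\epsilon = 1/\log\log n$. First I would verify that this choice is admissible, i.e., that $\epsilon \in (0,1)$ as required by the theorem: this holds as soon as $\log\log n > 1$, hence for all sufficiently large $n$. (For the finitely many small $n$ with $\log\log n \le 1$ the graph has $O(1)$ nodes and the whole APSP instance is solvable trivially in $O(1) = \widetilde{O}(\tn)$ rounds using only the local network, so we may assume $\epsilon = 1/\log\log n$ is a legal input.)

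Next I would substitute this $\epsilon$ into the two guarantees of \cref{thm:firstWeightedAPSPApprox}. The approximation factor $\epsilon \cdot \log n$ becomes exactly $(1/\log\log n)\cdot \log n = \log n/\log\log n$, which is precisely the claimed ratio. The round complexity $\widetilde{O}(2^{1/\epsilon}\,\tn)$ becomes $\widetilde{O}\!\left(2^{\log\log n}\,\tn\right) = \widetilde{O}(\log n \cdot \tn)$; since the $\widetilde{O}(\cdot)$ notation suppresses factors polylogarithmic in $n$, the stray $\log n$ is absorbed and the bound collapses to $\widetilde{O}(\tn)$. As \cref{thm:firstWeightedAPSPApprox} already holds in \hybridzero, so does the resulting algorithm, which gives the statement.

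There is no real obstacle here — the corollary is a pure specialization of the theorem — so the only point requiring care is the bookkeeping around the $\widetilde{O}$: one must check that $2^{1/\epsilon}$ is genuinely polylogarithmic for this $\epsilon$ (it equals $\log n$) so that it vanishes into $\widetilde{O}$. It is also worth noting \emph{why} $\epsilon = 1/\log\log n$ is the natural choice: any asymptotically smaller $\epsilon$, e.g.\ $\epsilon = 1/\log n$, would make $2^{1/\epsilon}$ superpolynomial and the running time would no longer simplify to $\widetilde{O}(\tn)$, so this is essentially the best approximation ratio obtainable from \cref{thm:firstWeightedAPSPApprox} while retaining a universally optimal round complexity.
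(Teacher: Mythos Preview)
Your proposal is correct and matches the paper's approach: the paper states this as an immediate corollary with no proof, and your substitution $\epsilon = 1/\log\log n$ together with the observation that $2^{1/\epsilon} = 2^{\log\log n} = \log n$ is absorbed into the $\widetilde{O}(\cdot)$ is exactly the intended argument. One tiny remark: \cref{thm:firstWeightedAPSPApprox} as stated in the paper does not actually impose the constraint $\epsilon \in (0,1)$, so your admissibility check, while harmless, is not strictly needed.
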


Finally, we show the following result for approximating APSP. This runs slightly slower than $\tildeBigO{\tn}$ rounds, yet, shows a much better approximation ratio. For instance, for a $3$-approximation of weighted APSP, it achieves a round complexity of $\tildeBigO{n^{1/4}\tn^{1/2}}$, which is always less than $\tildeBigO{\sqrt{n}}$, as $\tn \leq \sqrt{n}$. This result is based on the well-known skeleton graphs technique,  first observed by \cite{ullman1990high}.

\begin{restatable}{theorem}{secondWeightedAPSPApprox}
\label{theorem:secondWeightedAPSPApprox}
    For any integer $\alpha \geq 1$, there is a randomized algorithm that computes a $(4\alpha - 1)$-approximation for APSP in weighted graphs in $\tildeBigO{\alpha \cdot n^{1/(3\alpha + 1)} (\tn)^{2/(3 + 1/\alpha)} + \alpha \cdot \tn}$ rounds in \hybridzero, \whp.
\end{restatable}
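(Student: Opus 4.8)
The plan is to lift the classical \emph{skeleton graph} (random hitting set) technique to \hybrid and apply it recursively with $\alpha$ levels, using our universally optimal broadcast (\cref{thm:optimal_dissemination}) as the dissemination primitive at the top of the recursion. The building block is: sample each node independently with probability $p$ into a skeleton set $S$; then \whp\ every weighted shortest path of more than $h := \widetilde{O}(1/p)$ hops contains a node of $S$ within its first $h$ hops and within its last $h$ hops, and any two consecutive $S$-nodes along it are at most $h$ hops apart. Consequently, if every node learns its exact $h$-hop-bounded distances --- obtainable in $\widetilde{O}(h)$ rounds of Bellman--Ford over the \emph{local} network, whose bandwidth is unbounded --- then (i) the skeleton graph $G_S$ on vertex set $S$ with edge weights equal to these $h$-hop distances preserves exact $G$-distances between $S$-nodes, and (ii) for any $u,v$ whose optimal path has more than $h$ hops, $u$ and $v$ each know their exact distance to a skeleton lying on that path within $h$ hops, while for pairs joined by a $\le h$-hop path the directly known $h$-hop distance is already exact. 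Thus a single level gives exact APSP once all of $G_S$ is known everywhere: broadcasting $G_S$ is a \kdis instance with $|S|^2$ tokens, costing $\widetilde{O}(T_{|S|^2}(G))$ by \cref{thm:optimal_dissemination}, and balancing $h$ against this with the help of \cref{lem:boundingTk} already yields a bound depending on $\tn$.

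To push the exponent down at the price of stretch, the plan is to \emph{not} broadcast $G_S$ directly but to recurse on it: sample a sparser $S'\subseteq S$, build the coarser skeleton graph, and so on for $\alpha$ levels, broadcasting only the topmost (smallest) skeleton graph and then reconstructing bottom-up. Each level contributes a constant additive blow-up to the stretch, since in the reconstruction one is forced to route through the \emph{nearest} skeleton of a node rather than the skeleton actually lying on the optimal path, and each such substitution costs a bounded number of triangle-inequality steps charged against $d_G(u,v)$ --- a Thorup--Zwick-style argument. Concretely, I would disseminate via \cref{thm:optimal_dissemination} the top skeleton graph together with, for every node and every level, the identity of and distance to its nearest skeleton at that level; each node then computes APSP on the small top graph locally and outputs, for every pair $(u,v)$, the minimum over all levels $i$ of $\bigl(\text{dist from }u\text{ to its nearest level-}i\text{ skeleton}\bigr) + \bigl(\text{reconstructed distance between the two nearest level-}i\text{ skeletons}\bigr) + \bigl(\text{dist from }v\text{ to its nearest level-}i\text{ skeleton}\bigr)$, together with the direct $h_1$-hop distance. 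Every such value is the length of an actual walk, so it never underestimates; a careful accounting of the substitutions across the at most $\alpha$ levels traversed is what should produce stretch exactly $4\alpha-1$.

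The main obstacle is twofold. First is the stretch analysis: one must show that the per-level constant blow-ups compose to exactly $4\alpha-1$ and no worse, which requires tightly coupling the hitting-set guarantee with the hop-bounded-versus-weighted distance distinction (so the ``$\le h$-hop'' case is handled exactly and each level adds only a bounded constant, even when the reconstruction of a level-$i$ distance itself has to climb through levels $i{+}1,\dots,\alpha$), and choosing, for each pair, the level it routes through in the right hierarchical way. Second is the parameter balancing together with congestion control for the inner levels: the skeleton graphs at levels $\ge 1$ are virtual, potentially dense graphs whose nodes communicate only through the global network, so their construction and the inter-level dissemination must be scheduled to avoid congestion, after which one solves a geometric system for the sizes $|S_1|\supseteq\cdots\supseteq|S_\alpha|$ --- feeding in $T_k(G)\le\min\{D,\sqrt{k}\}$ and a scaling refinement of the type $T_{cn}(G)=O(c\,\tn)$ in the spirit of \cref{lem:boundingTk} --- so that every level's local-exploration cost and the final broadcast cost all come out to $\widetilde{O}\bigl(\alpha\, n^{1/(3\alpha+1)}(\tn)^{2/(3+1/\alpha)} + \alpha\,\tn\bigr)$; it is precisely this system that forces the specific exponents $1/(3\alpha+1)$ and $2/(3+1/\alpha)$. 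The high-level algorithm is standard; essentially all the work lies in these two points.
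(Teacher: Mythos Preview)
Your plan diverges from the paper's proof in a way that breaks the stretch guarantee. The paper does \emph{not} recurse on the skeleton. It builds a \emph{single} skeleton $S$ by sampling with probability $1/t$ (so $|S|=\tilde\Theta(n/t)$), then runs an off-the-shelf deterministic \congest algorithm (\cref{thm:spanner_construction}) on the skeleton graph $G_S$ to obtain a $(2\alpha-1)$-spanner $K$ of $G_S$ with $\tildeBigO{\alpha(n/t)^{1+1/\alpha}}$ edges; since $G_S$-edges correspond to $\le h$-hop paths in $G$ and local bandwidth is unbounded, each of the $\tildeBigO{1}$ \congest rounds is simulated in $\tildeBigO{t}$ local rounds. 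Only $K$ is broadcast. The $(4\alpha-1)$ stretch then comes out in one shot: writing $\hat d$ for distances in $K$, one has $\hat d(v_s,w_s)\le(2\alpha-1)d(v_s,w_s)$, and the standard nearest-skeleton detour adds the remaining factor, giving the bound $(2\alpha-1)+2\alpha=4\alpha-1$. The choice $t=n^{1/(3\alpha+1)}(\tn)^{2/(3+1/\alpha)}$ balances the exploration/simulation cost $\tildeBigO{t}$ against the broadcast cost $\tildeBigO{\alpha\cdot \tk[(n/t)^{1+1/\alpha}]}$, using \cref{lem:growth_of_tk}.

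Your recursive-skeleton scheme, as you describe it, does not yield stretch $4\alpha-1$. If level $i$ routes through each endpoint's nearest level-$i$ skeleton and uses a stretch-$c$ estimate $\hat d$ between those skeletons from level $i{+}1$, then exactly the calculation you allude to gives
\[
d^{h_i}(u,u_s)+\hat d(u_s,v_s)+d^{h_i}(v,v_s)\ \le\ (2c+1)\,d(u,v),
\]
so the recurrence is $c\mapsto 2c+1$ with base $c=1$ at the top level: after $\alpha$ levels you get stretch $2^{\alpha+1}-1$, not $4\alpha-1$ (they coincide only for $\alpha\in\{1,2\}$). The ``each level adds a bounded constant'' intuition is the Thorup--Zwick phenomenon, but that relies on \emph{bunches} (per-node storage of exact distances to all level-$i$ centers closer than the nearest level-$(i{+}1)$ center), not merely on one nearest center per level; your data---nearest skeleton and distance per level, plus top-level APSP---is strictly weaker and cannot support the additive recursion. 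A second, separate issue is cost: constructing level $i$ requires $h_i$-hop exploration in $G_{S_{i-1}}$, which over the local network costs $\tildeBigO{h_1\cdots h_i}$ rounds (and over the global network runs into the congestion you yourself flag), so the exploration term is $\prod_i h_i$ rather than a single $t$; balancing this against the top-level broadcast does not produce the exponents $1/(3\alpha+1)$ and $2/(3+1/\alpha)$. The missing idea is precisely to replace the recursion by a $(2\alpha-1)$-spanner of the one skeleton graph, which simultaneously fixes the stretch and the size of the object to broadcast.
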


\subsubsection{Approximating Cuts}
Similarly to \cite{anagnostides2021deterministic}, we leverage cut-sparsifiers \cite{spielman2004nearly} and their efficient implementation in \congest \cite{koutis2016simple} to approximate any cut and solve several cut problems. Our algorithms runs in $\tildeBigO{\tn/\epsilon + 1/\epsilon^2}$ rounds in \hybridzero, which is always at least as fast as all current algorithms, and faster when $\tn = o(\sqrt{n})$. The idea behind our result is to execute the \congest algorithm of \cite{koutis2016simple}, in $\tildeBigO{1/\epsilon^2}$ rounds, to create a subgraph with $\tildeBigO{n/\epsilon^2}$ edges which approximates all cuts. Then, we broadcast that subgraph in $\tildeBigO{\tk[n/\epsilon^2]} = \tildeBigO{\tn/\epsilon}$ rounds using \cref{thm:optimal_dissemination}.

\begin{restatable}{theorem}{minCutApprox}
\label{theorem:minCutApprox}
    For any $\epsilon \in (0, 1)$, there is an algorithm in \hybridzero that runs in $\tildeBigO{\tn/\epsilon + 1/\epsilon^2}$ rounds, \whp, after which each node can locally compute a $(1+\epsilon^2)$-approximation to all cuts in the graph. This provides approximations for many problems including minimum cut, minimum $s$-$t$ cut, sparsest cut, and maximum cut.
\end{restatable}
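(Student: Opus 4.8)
The plan is to assemble three ingredients: a sparse cut sparsifier, the observation that the \hybrid local network is at least as powerful as \congest, and our dissemination primitive \cref{thm:optimal_dissemination}. Recall that a $(1\pm\delta)$-cut sparsifier of a weighted graph $G$ is a reweighted subgraph $H$ with only $\tildeBigO{n/\delta^{2}}$ edges satisfying $\operatorname{cut}_{H}(S)\in[(1-\delta)\operatorname{cut}_{G}(S),(1+\delta)\operatorname{cut}_{G}(S)]$ for every $S\subseteq V$, and that Koutis and Xu~\cite{koutis2016simple} construct such an $H$ in \congest, \whp, in $\tildeBigO{1/\delta^{2}}$ rounds, after which every node knows its incident edges of $H$ together with their (polynomially bounded) weights. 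Since a round of \congest is simulated by a round of the \hybrid local network — and this uses only local edges and identifiers of neighbours, so it remains valid in \hybridzero — we obtain $H$ in $\tildeBigO{1/\delta^{2}}$ \hybridzero rounds.

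Next I would broadcast $H$ in full. Each edge of $H$, encoded as a triple consisting of its two endpoint identifiers and its weight, is an $\bigO{\log n}$-bit message that we may regard as held by one of its endpoints; hence there are $k=\tildeBigO{n/\delta^{2}}$ such messages, distributed somehow over $V$. By \cref{thm:optimal_dissemination} every node learns all of them, and therefore all of $H$, deterministically in $\tildeBigO{\tk[k]}=\tildeBigO{\tk[\tildeBigO{n/\delta^{2}}]}$ rounds. It then remains to bound this quantity in terms of $\tn$: one shows $\tk[cn]=\tildeBigO{\sqrt{c}\cdot\tn}$ for every $c\ge 1$, so that with $c=\tildeBigO{1/\delta^{2}}$ this is $\tildeBigO{\tn/\delta}$. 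Such a bound follows by combining \cref{lem:boundingTk} — in particular that $\tn\neq D$ forces $D=\bigO{\tn^{2}}$ — with the fact that each ball $B_{t}(v)$ grows by at least one vertex per step until it saturates, applied at the vertex $v$ attaining $\tk(v)=\tk$.

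Once every node holds all of $H$, it evaluates $\operatorname{cut}_{H}(S)$ locally for any $S$, obtaining a $(1\pm\delta)$-approximation of $\operatorname{cut}_{G}(S)$; choosing $\delta$ as the appropriate constant multiple of $\epsilon$ then yields the stated approximation of every cut, at total cost $\tildeBigO{1/\delta^{2}}+\tildeBigO{\tn/\delta}=\tildeBigO{\tn/\epsilon+1/\epsilon^{2}}$ rounds in \hybridzero. The listed applications follow at once, since the minimum cut $\min_{S}\operatorname{cut}_{G}(S)$, minimum $s$--$t$ cut $\min_{s\in S,\,t\notin S}\operatorname{cut}_{G}(S)$, maximum cut $\max_{S}\operatorname{cut}_{G}(S)$, and sparsest cut (in edge-expansion form) are all functions of the family $\{\operatorname{cut}_{G}(S)\}_{S}$, so a uniform approximation of all cuts approximates each of these quantities within the same factor and is computable locally by every node after the dissemination ends.

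The step I expect to be the main obstacle is the round-complexity bookkeeping of the second paragraph, namely $\tk[cn]=\tildeBigO{\sqrt{c}\cdot\tn}$: this needs a small case analysis around the binding vertex, distinguishing whether its ball grows genuinely faster than linearly — in which case the growth absorbs the blow-up from $n$ to $\tildeBigO{n/\delta^{2}}$ tokens — or essentially minimally, in which case \cref{lem:boundingTk} pins $\tn$ near $\sqrt n$ and the trivial bound $\tk[cn]\le\sqrt{cn}$ already suffices; one must also balance the two additive terms $1/\delta^{2}$ and $\tn/\delta$ when fixing $\delta$. Everything else — simulating \congest inside \hybrid, using the sparsifier of \cite{koutis2016simple} as a black box, and invoking \cref{thm:optimal_dissemination} — is routine, and, importantly, all of it already works in the more restrictive \hybridzero.
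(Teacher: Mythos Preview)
Your approach matches the paper's almost exactly: run the Koutis--Xu sparsifier, then broadcast its $\tildeBigO{n/\epsilon^{2}}$ edges via \cref{thm:optimal_dissemination}. Two small points of divergence are worth noting.

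First, the step you flag as ``the main obstacle'' --- showing $\tk[cn]=\tildeBigO{\sqrt{c}\cdot\tn}$ --- is already proved in the paper as \cref{lem:growth_of_tk}, which states $\tk[\alpha k]\le 6\sqrt{\alpha}\,\tk$ for all $\alpha\ge 1$. With $k=n$ and $\alpha=\tildeBigO{1/\epsilon^{2}}$ this gives $\tk[\tildeBigO{n/\epsilon^{2}}]=\tildeBigO{\tn/\epsilon}$ in one line, so no case analysis is needed. Your proposed dichotomy (faster-than-linear ball growth versus minimal growth pinning $\tn$ near $\sqrt{n}$) is plausible in spirit but vague as stated; in particular, the growth rate of $|B_t(v^*)|$ between radii $\tn$ and $\sqrt{c}\,\tn$ is not controlled by either \cref{lem:boundingTk} or the per-step increment of one. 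The proof of \cref{lem:growth_of_tk} instead packs $\sqrt{\alpha}$ disjoint radius-$\tk$ balls along a shortest path inside $B_{3\sqrt{\alpha}\,\tk}(v)$, which is a different (and cleaner) mechanism than what you sketch.

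Second, to run the \congest sparsifier in \hybridzero, the paper first broadcasts all $n$ identifiers via \cref{thm:optimal_dissemination} (cost $\tildeBigO{\tn}$, absorbed in the final bound) and then treats the model as \hybrid. Your observation that a \congest round uses only local edges and neighbour identifiers is correct for the communication, but \congest algorithms may still assume identifiers from $[n]$ internally; the paper's identifier-broadcast sidesteps this without affecting the asymptotics.
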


\subsection{Further Related Work}

\paragraph{\hybrid.} The \hybrid model in its current form was recently introduced in \cite{augustine2020shortest}. Since then, most research focused on shortest paths computations and closely related problems such as diameter calculation. In \cite{augustine2020shortest} there is an existential lower bound of $\tildeBigOmega{\sqrt{n}}$ rounds for APSP, even for $O(\sqrt{n})$-approximations. This was generalized by \cite{kuhn2020computing}, to $\tildeBigOmega{\sqrt{k}}$ rounds for k-SSP. In \cite{kuhn2020computing} an existentially optimal, randomized $\tildeBigO{\sqrt{n}}$ algorithm for exact APSP is shown. Both papers make heavy use of token dissemination, token routing, skeleton graphs and simulating \clique (a different distributed model) algorithms.

The state-of-the-art results currently consist of an exact $n^{1/3}$-SSP (and thus SSSP) in $\tildeBigO{n^{1/3}}$ rounds due to \cite{censor2020distance}. The same authors also achieved $\tildeBigO{n^{5/17}}$ rounds $(1+\epsilon)$-approximate SSSP \cite{censor2021sparsity}. Very recently, \cite{schneider2023thesis} achieved near optimal $(1+\epsilon)$ SSSP approximation in polylogarithmic time, relying on the minor-aggregation framework established by \cite{rozhovn2022undirected}.

We note that most \hybrid algorithms so far are randomized, with the exception of the algorithms of \cite{anagnostides2021deterministic}, which achieved $(\log{n}/\log{\log{n}})$-approximate APSP in $\tildeBigO{\sqrt{n}}$ rounds, together with a derandomization of \kdis for regimes of $k\geq{}n$, running in $\tildeBigO{\sqrt{k}}$ rounds. 

Further work in the \hybrid model was focused on diameter computation and lower bounds \cite{kuhn2020computing, anagnostides2021deterministic}, computing routing schemes \cite{Kuhn2022Routing, coy2022routing}, and more distance related problems \cite{Feldmann2020Fast, censor2020distance, censor2021sparsity}.

\paragraph{Universal Optimality.} The notion of \emph{universal optimality} in the distributed setting was first offered by Garay, Kutten and Peleg in FOCS '93 \cite{Garay1998}, where they ask, loosely speaking, if it is possible to identify inherent graph parameters that are associated with the distributed complexity of various fundamental network problems, and develop universally optimal algorithms for them.

A line of work in the \congest model that made significant advances towards algorithms for non-worst-case graphs is the low-congestion shortcut framework, introduced by \cite{ghaffari2016distributed}, and further advanced in many subsequent works \cite{Haeupler2021,ghaffari_dfs,ghaffari_mixing_time,haeupler2016near, haeupler2018faster, haeupler2018minor, haeupler2022hop, kitamura2021low, rozhovn2022undirected, zuzic2018towards, Ghaffari2022, zuzic2022universally,ghafarriOPTBroadcast}. The notion of universal optimality is formalized in the work of \cite{haeupler2021universally}, where they explore different notions of universal optimality, and solve many important problems.

\paragraph{Misc.} \hybrid relates to other studied networks of hybrid nature, such as \cite{gmyr2017distributed, Afek1990,baDISC}. Another close model to \hybrid, is the Computing with Cloud (\cwc) introduced by \cite{afek2021distributed}, which consider a network of computational nodes, together with (usually one) passive storage cloud nodes. They explore how to efficiently run a joint computation, utilizing the shared cloud storage and subject to different capacity restrictions. We were inspired by their work on how to analyze neighborhoods of nodes in order to define an optimal parameter for a given graph which limits communication.

\section{Preliminaries}
We consider undirected, connected graphs $G = (V, E, \omega)$, $n = |V|, m = |E|$, with a weight function $\omega$, with weights which are all polynomial in $n$. If the graph is unweighted, $\omega \equiv 1$. The distance between two nodes $v,w\in{V}$ is denoted by $d(v,w)$. The hop-distance between $v,w\in{V}$ is denoted by $\hop(v,w)$ and is the unweighted distance between two nodes. The diameter of a graph is denoted by $D$. Denote by $d^h(v,w)$ the weight of the shortest path between $v$ and $w$ when considering all paths of length at most $h$. 
Let $N_i(v) =\{w \,| \, \hop(v,w)=i\}$ be the set of nodes with hop distance exactly $i$ from $v$, and $B_t(v)=\{w\in{}V \, | \, \hop(v,w)\leq{}t\}=\dot{\bigcup}_{0\leq{}i\leq{}t}N_i(v)$ be the ball of radius $t$ centered at $v$.
Given a set of nodes $C\subseteq V$, the weak diameter of $C$ is defined as $\max_{u,v\in{}C}\hop(u,v)$, where the hop-distance is measured in the original graph $G$. The strong diameter of $C$ is the diameter of the induced subgraph by $C$ in $G$. For any positive integer $x$, let $[x] = \{1, \dots, x\}$.

\paragraph{Model Definitions.}
We formally define \hybrid and \hybridzero.

\begin{definition}[\hybrid model \cite{augustine2020shortest}]
\label{def:hybrid0}
    We consider a network $G=(V,E)$ where $|V|=n$ and the identifiers of the nodes are in the range $[n]$. Communication happens in synchronous rounds. In each round, nodes can perform arbitrary local computations, following which they communicate with each other. Local communication is modeled with the \local model \cite{linial1992locality}, where for any $e = \{v, u\} \in E$, nodes $v$ and $u$ can communicate any number of bits over $e$. Global communication is modeled with the \nccshort model, \cite{augustine2019distributed}, where every node can exchange $O(\log{n})$-bit messages with up to any $\log{n}$ nodes in $G$. It is required that each node is the sender and receiver of at most $\log n$ messages per round.
    
\end{definition}

\begin{definition}[\hybridzero model \cite{augustine2020shortest,augustine2021distributed}]
    The \hybridzero model is like the \hybrid model, with the exception that identifiers are arbitrary and in the range $[n^c]$ for some constant $c$. This implies that a node might not know which identifiers are used in the graph and thus can only send messages to nodes whose identifiers it knows. That is, global communication is over \ncczero \cite{augustine2021distributed} instead of \nccshort. It is assumed that at the start of an algorithm, a node knows its own identifier and the identifiers of its neighbors.
\end{definition}

It is possible to parameterize the \hybrid model by maximum message size $\lambda$ for local edges, and number of bits $\gamma$ each node can exchange per round, in total, via the global edges. As stated, we consider the standard $\lambda=\infty$ and $\gamma=O(\log^2{n})$. The standard distributed models are also specific cases of this parameterization (up to constants), where \local is $\lambda=\infty,\gamma=0$, \congest is $\lambda=O(\log{n}),\gamma=0$, \clique is $\lambda=0,\gamma=n\log{n}$ and \nccshort is $\lambda=0,\gamma=O(\log^2{n})$. 

\paragraph{Communication Primitives.} As a basic tool, we show the following in \hybridzero. To do so, we use  \cite{augustine2019distributed} which show a similar result for a model similar to, but slightly different than, \hybridzero.

\begin{restatable}{lemma}{treeWithoutIDs}
\label{lem:construct_tree_without_ids}
In \hybridzero, it is possible to construct a virtual tree $T$ which spans $G$, has constant maximal degree, and has $\tildeBigO{1}$ depth. It is guaranteed that by the end of the algorithm, every two neighboring nodes in the tree know the identifiers of each other in $G$. This takes $\tildeBigO{1}$ rounds.
\end{restatable}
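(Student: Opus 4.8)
The plan is to build $T$ as an \emph{overlay network} on the vertex set $V$: edges of $T$ need not be edges of $G$, and two nodes adjacent in $T$ may be arbitrarily far apart in $G$, provided they learn each other's identifiers during the construction. The only structure available in \hybridzero is that each node knows its own identifier and those of its $G$-neighbors, so $G$ itself is a connected ``knowledge graph''. This is exactly the input format required by known overlay-construction routines, and in particular by the \emph{well-formed tree} construction of \cite{augustine2019distributed} (building on the overlay techniques of \cite{gmyr2017distributed}) for the node-capacitated clique, which from a connected knowledge graph produces, in $\tildeBigO{1}$ rounds, a rooted spanning tree of constant degree and polylogarithmic depth in which every node knows the identifiers of its parent and its children.

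First I would recall and adapt that routine. At a high level it runs an $O(\log n)$-phase Borůvka-style contraction: each \emph{supernode} is a subset of $V$ with a designated leader and an internal constant-degree tree of $O(\log n)$ depth; in each phase every supernode selects a single outgoing inter-supernode edge (to the neighbouring supernode of smallest leader-identifier, say), and supernodes are merged along the chosen edges by splicing their internal trees together, applying a standard degree-reduction gadget when many supernodes point at the same one, so that degrees stay bounded while depth grows by $O(\log n)$ per phase. After $O(\log n)$ phases a single supernode remains, and its internal tree is the desired $T$, of depth $O(\log^2 n) = \tildeBigO{1}$.

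The step I expect to be the main obstacle is \emph{congestion control in the global network}: a supernode may be incident to $\omega(\log n)$ inter-supernode edges, yet every node can send and receive only $O(\log n)$ messages per round. The resolution is that a phase needs only one representative outgoing edge per supernode, so the work is spread over the supernode's members using its internal tree: each member handles an equal share of the boundary edges, a candidate edge is aggregated up the tree in $O(\log n)$ rounds, and the leader learns one representative. Internal (already-merged) edges are discarded by first broadcasting the supernode's current leader-identifier down the internal tree, so members can locally drop edges to co-members. Each such sub-step is a constant number of convergecasts/broadcasts over a logarithmic-depth tree, hence $\tildeBigO{1}$ rounds, and there are $O(\log n)$ phases. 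One must also check that \hybridzero is no weaker than the model of \cite{augustine2019distributed} for this purpose; the only discrepancies are the exact per-round global bandwidth and the presence of the additional free \local network, both of which only help, so the $\tildeBigO{1}$-round bound transfers.

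Finally, mutual identifier knowledge between neighbours in $T$ comes essentially for free: every edge of $T$ is created by an explicit exchange of identifiers during a merge or splice step, so by termination each node knows the identifiers of its parent and its $O(1)$ children, as claimed; if a clean $\tildeBigO{1}$ depth bound requires a final re-balancing pass, that is a further overlay computation of the same type and cost.
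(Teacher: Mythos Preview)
Your high-level plan is sound, and both your argument and the paper's are Bor\r{u}vka-style contractions, but the key technical step differs. The paper does \emph{not} re-derive the merging machinery; it invokes Theorem~2 of \cite{gmyr2017distributed} as a black box, which builds the desired constant-degree, $O(\log n)$-depth tree in $O(\log^2 n)$ rounds \emph{provided the input graph has polylogarithmic maximum degree}. The paper's actual contribution is a one-round degree-reduction gadget applied before each invocation: every node $v$ points to its highest-identifier neighbour $m(v)$; any node $u$ that is the target of many such pointers tells its pointers $v_1,\dots,v_x$ to instead form a chain $v_1\to v_2\to\cdots\to v_x\to u$. The resulting undirected graph $G_1'$ has constant degree, so \cite{gmyr2017distributed} applies to each of its connected components; the components are then contracted into supernodes and the whole process repeats $O(\log n)$ times.

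Your alternative of spreading boundary edges equally over supernode members is where you should be more careful. The obstruction is not that a supernode is incident to many inter-supernode edges, but that a \emph{single node} $v$ of $G$ may have degree $\Omega(n)$. Such a $v$ cannot offload its incident edges to other supernode members in $\tildeBigO{1}$ rounds via the global network, and it cannot even learn its neighbours' current supernode labels over the global network. In \hybridzero this is rescued by the \local channel (each node asks all its $G$-neighbours for their leader identifier in one round over local edges, then locally selects one boundary edge), which you mention only in passing as ``only helping''; in fact it is essential for your argument to go through, and the paper's pointer-chaining trick is precisely what lets it avoid relying on \local here and instead apply the bounded-degree black box directly.
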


The proof of \cref{lem:construct_tree_without_ids} is deferred to \cref{appendix:deterministic_ID}. Due to \cref{lem:construct_tree_without_ids}, we get the following solution for \kagg just for the special case of $k=1$. Specifically, note that this result is utilized when showing our \kdis and \kagg algorithms for general $k$.

\begin{lemma}
\label{lem:hybrid_zero_aggregate}
    For $k = 1$, it is possible to solve \kagg  deterministically in $\tildeBigO{1}$ rounds in \hybridzero.
\end{lemma}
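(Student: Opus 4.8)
The plan is to reduce $1$-aggregation to a convergecast followed by a broadcast over the virtual tree supplied by \cref{lem:construct_tree_without_ids}. First I would invoke that lemma to build, in $\tildeBigO{1}$ rounds, a virtual tree $T$ spanning $G$ with constant maximum degree and depth $L = \tildeBigO{1}$; let $r$ be its root (the bounded depth presupposes a root, and the orientation — who is a node's parent and who are its children — can be propagated from $r$ down $T$ in $O(L) = \tildeBigO{1}$ rounds if it is not already given by the construction). The property we rely on is that every node knows the identifiers of its $O(1)$ tree-neighbors, so each tree edge can carry an $O(\log n)$-bit message per round over the global network, and since the degree is constant this never exceeds the per-node budget of $\log n$ global messages per round.

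Next I would run the standard convergecast. Each leaf $v$ sends $a_v := f_1(v)$ to its parent; an internal node $v$ waits until it has received a value $a_c$ from every child $c$ (it knows how many children it has), then computes $a_v := F\bigl(f_1(v), a_{c_1}, \dots, a_{c_j}\bigr)$, which is well-defined and independent of ordering because $F$ is associative and commutative, and forwards $a_v$ to its parent. A straightforward induction on the height of $\mathrm{subtree}(v)$ shows both that $a_v$ equals $F$ applied to $\{ f_1(u) : u \in \mathrm{subtree}(v)\}$ and that $v$ reports by round equal to that height; hence after $O(L) = \tildeBigO{1}$ rounds the root holds $a_r = F(f_1(v_1), \dots, f_1(v_n))$, which is exactly the required output. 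I would then broadcast $a_r$ from $r$ down $T$: each node, upon receiving the value from its parent, relays it to its children, finishing in another $O(L) = \tildeBigO{1}$ rounds. Summing the three phases yields $\tildeBigO{1}$ rounds, and the algorithm is deterministic since \cref{lem:construct_tree_without_ids} is.

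The only points needing care, rather than a real obstacle, are the congestion and message-size bookkeeping: the constant degree of $T$ is precisely what keeps every node within the $\log n$-messages-per-round limit throughout both the convergecast and the broadcast, and the assumption that all input values are polynomial in $n$ — together with the usual understanding that $F$ maps polynomially bounded inputs to polynomially bounded outputs (as for $\min$, $\max$, sum, bitwise operations, etc.) — is what keeps each partial aggregate $a_v$ representable in $O(\log n)$ bits, hence sendable in a single global message. (If one wishes to dispense with the latter assumption, each $a_v$ can be split into $\tildeBigO{1}$ chunks and the convergecast pipelined, still within $\tildeBigO{1}$ rounds.) The genuinely nontrivial ingredient — obtaining a bounded-degree, $\tildeBigO{1}$-depth spanning overlay with mutual identifier knowledge in \hybridzero — is exactly what \cref{lem:construct_tree_without_ids} already delivers, so what remains is just the routine convergecast/broadcast argument above.
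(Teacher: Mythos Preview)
Your proposal is correct and follows exactly the approach the paper intends: the paper states the lemma as an immediate consequence of \cref{lem:construct_tree_without_ids} without spelling out the convergecast/broadcast, so your write-up is in fact more detailed than, but entirely in line with, the paper's own (implicit) argument.
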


\paragraph{Problem Definitions.} We provide formal definitions for problems we solve in \hybrid and \hybridzero which are not already defined above.

\begin{definition}[All-Pairs Shortest Paths (APSP)]
    Every node $v$ must output for every node $w$ the distance $d(v, w)$. In $\alpha$-approximate APSP, $v$ outputs $\hat{d}(v,w)$ for all $w\in{V}$, where $d(v,w)\leq{}\hat{d}(v,w)\leq{}\alpha\cdot{}d(v,w)$. Note that every $v \in V$ must know the identifiers of all nodes in order to be able to write down the output.
\end{definition}

\paragraph{Universal Optimality.}
We follow the approach of \cite{haeupler2021universally}, and define a universally optimal algorithm as follows. Given a problem $P=(S, I)$, split its input into a fixed setting $S$, and parametric input $I$. For example, in \kdis we fix the graph $G$ and the starting locations of all $k$ messages, yet the contents of the messages are arbitrary. For a given algorithm $A$ solving $P$ and any possible state $s$ for $S$ and $i$ for $I$, denote by $t(a, s, i)$ the round complexity of $A$ when run on $P$ with $S=s$ and $I=i$. An algorithm $A$ is universally optimal w.r.t.~$P$ if, for any choice of $s$, the worst case round complexity of $A$ is at most $\tildeBigO{1}$ times that of the best algorithm $A_s$ for solving $P$ which knows $s$ in advance. Formally, for all possible $s$ and any algorithm $A_s$, set $t(A, s) = \max_{i}t(A, s, i)$, and $t(A_s) = \max_{i}t(A_s, s, i)$, it holds that $t(A, s) = \tildeBigO{t(A_s)}$. That is, one must fix a single $A$ that works for all $s$, yet $A_s$ can be different for each $s$.

\paragraph{Miscellaneous.}

\begin{definition}[Square Grid Graph]
    A $d$-dimensional square grid graph $G=(V,E)$ where $n=m^d$, is the cartesian product graph of $d$ $m$-node paths $P_m$. Formally, $G=P_m^1 \times \dots \times P_m^d$. 
\end{definition}

\section{Universally Optimal Broadcast}
\label{sec:optimal_broadcast}

We now show our universally optimal broadcasting result. We begin with some basic properties of $\tk$ in \cref{sec:optimal_broadcast;subsec:basic_properties}, continue with the lower bound in \cref{sec:optimal_broadcast;subsec:lower_bound} and then show the upper bound in \cref{sec:optimal_broadcast;subsec:upper_bound}. Finally, in \cref{sec:optimal_broadcast;subsec:aggregation} we show a bidirectional reduction from \kagg to \kdis in \hybrid, to achieve a universally optimal result for \kagg.

\subsection{Basic Properties of \texorpdfstring{$\tk$}{Tk}}
\label{sec:optimal_broadcast;subsec:basic_properties}

We show the following useful tools regarding $\tk$. Their proofs are deferred to \cref{appendix:basic_properties_of_tk}.

\begin{restatable}{lemma}{tkComputation}
\label{thm:tk_computation}
    It is possible to compute $\tk$ and make it globally known in $\widetilde{O}(\tk)$ rounds in \hybridzero.
\end{restatable}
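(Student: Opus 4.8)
The plan is to have every node compute its own value $\tk(v)$ from purely local exploration, and then combine these values with a constant number of global aggregations; the two things that require care are that we do not know the target running time $\tk$ in advance, and that the diameter $D$ appearing in \cref{def:bcast_quality} is also a global quantity.

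\emph{Local part.} Node $v$ runs a BFS on the local network, so after $t$ rounds it knows $B_t(v)$ exactly, hence $|B_t(v)|$ and whether $B_t(v)=B_{t-1}(v)$ (in which case $B_t(v)=V$ and $v$ has learned $n$). At each round it tests $|B_t(v)|\ge k/t$. If this first holds at a round $\tau$ while $v$'s ball is still growing, then $D\ge\tau$, so $\tk(v)=\tau$ and $v$ knows it. If instead $v$'s ball stops growing at radius $r:=\mathrm{ecc}(v)$, then $v$ knows all ball sizes (they are $n$ for $t\ge r$), so it can compute $t^\ast_v:=\min\{t:|B_t(v)|\ge k/t\}$ exactly; here $\tk(v)=\min(t^\ast_v,D)$, and when $t^\ast_v>r$ the node does not know which of $t^\ast_v,D$ is smaller. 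This is the only uncertainty, and it is handled globally: since $x\mapsto\min(x,D)$ is monotone, $\tk=\max_v\tk(v)=\max_v\min(t^\ast_v,D)=\min\bigl(\max_v t^\ast_v,\,D\bigr)$ and $D=\max_v\mathrm{ecc}(v)$, so it suffices to aggregate $\max_v t^\ast_v$ and $\max_v\mathrm{ecc}(v)$, where a node whose ball has not yet stopped contributes $+\infty$ to the second maximum — legitimately, since such a node already has $\tk(v)=t^\ast_v\le D$ and so is indifferent to the value of $D$.

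\emph{Unknown running time.} Wrap the above in a doubling loop over $\hat t=1,2,4,\dots$: run the BFS to depth $\hat t$, then apply the $k=1$ aggregation primitive of \cref{lem:hybrid_zero_aggregate} a constant number of times to (i) test whether every node has \emph{settled} (triggered the predicate or seen its ball stop), and, if so, (ii) compute $\max_v t^\ast_v$ and $\max_v\mathrm{ecc}(v)$, whence every node learns $\tk$ and halts; if not all nodes have settled, double $\hat t$. Every node settles within $\tk$ rounds: if $\tk(v)<D$ then $v$ either triggers by round $\tk(v)\le\tk$ or its ball stops by some round $<\tk$, and if $\tk(v)=D$ then $\tk=D$ and $v$'s ball stops by round $\mathrm{ecc}(v)\le D=\tk$. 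Hence the loop halts at $\hat t=O(\tk)$; the BFS costs $O(\tk)$, the $O(\log\tk)$ aggregation phases cost $\tildeBigO{1}$ each, and the final aggregation leaves $\tk$ known to all, for a total of $\tildeBigO{\tk}$.

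\emph{Main obstacle.} The crux is arguing that the $\{D\}$ term never slows the algorithm below $\tk$: whenever a node truly needs $D$ to pin down $\tk(v)$, either the whole graph has already been explored — so $D=\max_v\mathrm{ecc}(v)$ is available, and in fact $D=\tk$ in that regime, namely the token-heavy case $k>nD$ where $\tk(v)=D$ for all $v$ and every ball stops within $D$ rounds — or some ball is still growing, in which case $\max_v t^\ast_v\le D$, so the $+\infty$ convention correctly reports $\tk=\max_v t^\ast_v$. Verifying this dichotomy, and that the two aggregated maxima always reconstruct $\tk$ exactly, is the technical content; the remaining steps are routine.
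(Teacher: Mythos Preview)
Your proof is correct and follows the same blueprint as the paper: grow local BFS balls while using global aggregation to detect when every node has pinned down its $\tk(v)$, then output the maximum. One difference is cosmetic --- you use geometric doubling where the paper iterates one hop at a time, both yielding $\tildeBigO{\tk}$. The more interesting difference is that you treat the $\{D\}$ term in \cref{def:bcast_quality} carefully, via the dichotomy that either every ball has stopped growing (so $D=\max_v\mathrm{ecc}(v)$ is recoverable by aggregation) or some ball is still growing (whence $k\le nD$, hence $t^\ast_v\le D$ for \emph{all} $v$, and $D$ is irrelevant to the minimum); the paper simply asserts that each node ``can determine whether $\tk(v)=t$'' from $B_t(v)$ and $k$ alone, which glosses over exactly the $k>nD$ regime you isolate, where $\tk=D$ but no individual node knows $D$ at round $\tk$. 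Your treatment is the more complete of the two.
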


We also get the following corollary showing that nodes learn more information in the above algorithm, and not just the value of $\tk$.

\begin{corollary}
 When computing $\tk$ using \cref{thm:tk_computation}, every node learns the $\tk(v)$ distribution across the graph, i.e for any $t$, how many nodes $v$ have $\tk(v)=t$.
 \end{corollary}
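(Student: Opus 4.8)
The plan is to run the algorithm of \cref{thm:tk_computation} essentially unchanged and then append a short sequence of aggregations. Recall that in that algorithm each node $v$ determines its own value $\tk(v)$ by exploring the balls $B_1(v), B_2(v), \dots$ over the local network — which is possible within $\tildeBigO{\tk}$ rounds since $\tk(v) \le \tk(G) = \tk$ — and the global value $\tk = \max_{v \in V} \tk(v)$ is then obtained and broadcast via one $1$-aggregation (\cref{lem:hybrid_zero_aggregate}). (If one prefers to treat \cref{thm:tk_computation} as a black box that outputs only $\tk$, observe that once $\tk$ is globally known every node can re-derive $\tk(v)$ by exploring $B_t(v)$ for $t \le \tk$ in $\tildeBigO{\tk}$ additional rounds.) Either way, after $\tildeBigO{\tk}$ rounds every node knows both $\tk$ and its own $\tk(v)$.

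Since $\tk(v) \ge 1$ and $\tk = \max_v \tk(v)$, we have $\tk(v) \in \{1, \dots, \tk\}$ for every $v$, so the distribution of these values is captured by the vector $(c_1, \dots, c_{\tk})$ with $c_t = \size{\{v \in V : \tk(v) = t\}}$ (and $c_t = 0$ for all $t > \tk$, which needs no communication). To make this vector known to all nodes, I would run, for $t = 1, \dots, \tk$ in sequence, a $1$-aggregation with the summation function in which node $v$ contributes the value $1$ if $\tk(v) = t$ and $0$ otherwise; by \cref{lem:hybrid_zero_aggregate} each such aggregation is deterministic, runs in $\tildeBigO{1}$ rounds in \hybridzero, and leaves $c_t$ known to every node. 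Because all nodes already agree on $\tk$, they execute exactly the same sequence of $\tk$ aggregations with no coordination overhead.

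The total cost is $\tildeBigO{\tk}$ for \cref{thm:tk_computation} plus $\tk \cdot \tildeBigO{1} = \tildeBigO{\tk}$ for the histogram aggregations, giving $\tildeBigO{\tk}$ overall and matching the stated bound; the whole procedure is deterministic and stays in \hybridzero. (Alternatively the $\tk$ buckets could be batched into a single $\tk$-aggregation via \cref{thm:kaggLBUB}, but iterating the $k=1$ primitive already suffices and keeps this corollary independent of the heavier machinery.) I do not expect any real obstacle here: the only points needing care are the two bookkeeping facts used above — that $\tk(v)$ ranges over only $\tk$ values, and that $\tk$ is globally known before the histogram aggregations begin so that all nodes run for the same number of rounds — together with the routine observation that the polylogarithmic overheads of the $\tk$ repeated aggregations still collapse into $\tildeBigO{\tk}$.
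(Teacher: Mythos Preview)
Your argument is correct, but it does more work than the paper's. The corollary in the paper is stated without proof because the distribution is a free byproduct of the algorithm in \cref{thm:tk_computation} itself: in iteration $t$, that algorithm already performs a sum-aggregation counting how many nodes $u$ satisfy $\tk(u)\le t$, and this cumulative count is made globally known (it is precisely the halting test). Hence every node ends the algorithm knowing the full cumulative sequence $(|\{u:\tk(u)\le 1\}|,\dots,|\{u:\tk(u)\le \tk\}|)$, and recovers the histogram $c_t$ by a local difference $c_t = |\{u:\tk(u)\le t\}| - |\{u:\tk(u)\le t-1\}|$ with no further communication.

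Your approach instead treats \cref{thm:tk_computation} as a black box and appends $\tk$ fresh aggregations afterwards. This is perfectly valid and stays within $\tildeBigO{\tk}$, but it duplicates work that the algorithm already does internally and slightly misses the spirit of the corollary's phrasing (``when computing $\tk$ using \cref{thm:tk_computation}''), which is signalling that the information is obtained along the way rather than by a separate post-processing pass.
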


We now show a statement which limits the rate of growth of $\tk$ as $k$ grows. The statement says that for $k'=\alpha{}k$, the value $\tk[k']$ can only be larger than $\tk$ by a factor which is roughly $\sqrt{\alpha}$. The idea behind the proof is that for any graph, all neighborhoods of radius $\tk$ can learn $k$ messages in $\tk$ rounds. Therefore, if we increase $\tk$ by a factor of $\alpha$, then all neighborhoods of size $\tk \cdot \alpha$ can learn $k \cdot \alpha^2$ messages in $\tk \cdot \alpha$ rounds. For the full proof, see \cref{appendix:basic_properties_of_tk}.

\begin{restatable}{lemma}{growthOfTk}
\label{lem:growth_of_tk}
For $\alpha \geq 1$, 
$\tk[\alpha{}k]\leq{}6\sqrt{\alpha}\cdot{}\tk$.
\end{restatable}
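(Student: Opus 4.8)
\textbf{Proof plan for \cref{lem:growth_of_tk}.}

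The plan is to prove the contrapositive-flavored statement directly: start from the value $t := \tk$ and show that $6\sqrt{\alpha}\cdot t$ is a valid upper bound for $\tk[\alpha k]$ by exhibiting, for every node $v$, a radius $t' \le 6\sqrt{\alpha}\, t$ with $\size{B_{t'}(v)} \ge \alpha k / t'$ (or else $t' = D$, in which case we are already done since $\tk[\alpha k] \le D$ always). First I would recall the key consequence of $t = \tk(G) = \max_v \tk(v)$: for \emph{every} node $v$ there is some radius $s_v \le t$ with $\size{B_{s_v}(v)} \ge k/s_v \ge k/t$. In particular, $\size{B_t(v)} \ge \size{B_{s_v}(v)} \ge k/t$ for every $v$, since balls only grow with radius. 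So the uniform fact I carry forward is: $\size{B_t(v)} \ge k/t$ for all $v$.

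The heart of the argument is a ball-growth/volume-doubling estimate that lets me amplify $\size{B_t(v)} \ge k/t$ into $\size{B_{ct}(v)} \ge \alpha k/(ct)$ for a suitable constant $c \approx \sqrt\alpha$. The mechanism: if I lay down a maximal set of nodes $u_1, \dots, u_r$ that are pairwise at hop-distance $> 2t$ inside the ball $B_{ct}(v)$ (think of a packing), then the balls $B_t(u_i)$ are disjoint, each has size $\ge k/t$, and they all lie inside $B_{(c+1)t}(v)$; hence $\size{B_{(c+1)t}(v)} \ge r \cdot k/t$. On the other hand, by maximality every node of $B_{ct}(v)$ lies within hop-distance $2t$ of some $u_i$, so if $r$ is small the ball $B_{ct}(v)$ cannot be too large — but I actually want a \emph{lower} bound on $\size{B_{ct}(v)}$, so the cleaner route is: either $B_{ct}(v)$ already contains $\ge \alpha k/(ct)$ nodes (done), or it is "small", in which case a short path of length $\le ct$ from $v$ reaching a boundary node, concatenated with the already-large ball $B_t$ of that boundary node, forces growth. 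I expect the slick version to run as follows: walk out from $v$ in steps of size $2t$; after $j$ steps the ball $B_{2jt}(v)$, if it has not yet reached diameter $D$, must strictly contain $B_{2(j-1)t}(v)$ together with a fresh radius-$t$ ball of size $\ge k/t$ hanging off the $(2(j-1)t)$-boundary, giving $\size{B_{2jt}(v)} \ge j \cdot k/t$ roughly. Taking $j \approx 3\alpha$ (so the radius is $\approx 6\alpha t$ — here I may need to be slightly more careful to land on $6\sqrt\alpha t$ rather than $6\alpha t$, perhaps by a sharper packing count where each shell of width $t$ contributes a multiplicative rather than additive gain) yields $\size{B_{6\sqrt\alpha\, t}(v)} \ge \alpha k/(6\sqrt\alpha\, t)$, which is exactly the defining inequality for $\tk[\alpha k](v) \le 6\sqrt\alpha\, t$.

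After establishing the per-node bound $\tk[\alpha k](v) \le 6\sqrt\alpha \cdot \tk$ for every $v$ (with the $D$-cap handled trivially), I take the maximum over $v$ to conclude $\tk[\alpha k] = \max_v \tk[\alpha k](v) \le 6\sqrt\alpha \cdot \tk$, as claimed. The main obstacle I anticipate is pinning down the exact constant and, more importantly, getting the $\sqrt\alpha$ dependence rather than a linear $\alpha$: a naive "additive growth per shell" argument gives $\size{B_{jt}} \gtrsim j k/t$, which needs $j \approx \alpha$ shells and hence radius $\approx \alpha t$ — too weak. The fix, matching the informal description in the excerpt ("if we increase $\tk$ by a factor of $\alpha$, then all neighborhoods of size $\tk\cdot\alpha$ can learn $k\cdot\alpha^2$ messages"), is to observe that at radius $\alpha t$ the ball contains $\ge \alpha \cdot (k/t) = \alpha^2 k/(\alpha t)$ nodes — i.e. growing the radius by a factor $\alpha$ grows the volume by a factor $\alpha$ \emph{relative to the $k/t$ baseline}, which is quadratically more than needed, so one only needs a factor $\sqrt\alpha$ in the radius. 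Making this volume-counting rigorous (a disjointness-of-balls argument along a geodesic, carefully accounting for the $D$ boundary case so the packing does not "wrap around") is the step requiring the most care.
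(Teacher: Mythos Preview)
Your proposal is correct and ultimately lands on exactly the paper's argument: pack $\Theta(\sqrt{\alpha})$ pairwise-disjoint radius-$\tk$ balls along a shortest path out of $v$, each contributing $\ge k/\tk$ nodes to $B_{O(\sqrt{\alpha}\,\tk)}(v)$, so that $\size{B_{O(\sqrt{\alpha}\,\tk)}(v)} \ge \sqrt{\alpha}\cdot k/\tk = \alpha k/(\sqrt{\alpha}\,\tk)$, with the $D$-cap handled trivially. Your exploratory detour through the ``additive shell'' argument and the subsequent re-parametrization to recover the $\sqrt{\alpha}$ dependence is unnecessary once you commit to the geodesic packing, but the final plan matches the paper.
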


We now bound $\tk$ in terms of $n,D,k$, with the proof deferred to \cref{appendix:basic_properties_of_tk}.

\boundingTk*

\subsection{Lower Bound}
\label{sec:optimal_broadcast;subsec:lower_bound}

We now desire to prove \cref{thm:kdisLB}.

\kdisLB*

Before doing so, we show the following lemma which basically states that we can assume messages are sent \emph{as-is} over the global network, without any coding techniques to compress the amount of bits which need to be transferred. Note that the following statement is applicable to showing the universal lower bound for \kdis in \hybrid, as the universal lower bound is w.r.t.~the graph $G$ and initial locations of the messages, but not their contents. In essence, the following lemma states that if the messages are uniformly chosen random strings, then it is not possible (even for a randomized algorithm) to compress the messages by more than a constant factor.

\begin{lemma}
\label{lem:no_coding_tricks}
Let there be a graph $G$ and a partition of the nodes $V = A \cup B$, $A\cap B = \emptyset, A \neq \emptyset, B \neq \emptyset$, and some value $k$. Assume each node in $A$ is given the freedom to choose some number arbitrary messages, such that all nodes in $A$ in total choose $k$ arbitrary $b = O(\log n)$-bit messages. Then, at least $\Omega(kb)$ bits of information must be communicated from $A$ to $B$ in order for the nodes in $B$ to be able to reconstruct all $k$ messages with success probability at least $1/2$. This holds even if every nodes knows the entire topology of $G$ and how many messages each node in $A$ gets to choose.
\end{lemma}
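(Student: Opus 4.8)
The plan is to prove this via an information-theoretic / counting argument. The key observation is that the nodes in $A$ collectively hold a uniformly random element of a set of size $2^{kb}$ (there are $2^b$ choices for each of the $k$ messages, and the choice of which node gets which message is fixed by the setting), and the nodes in $B$ must be able to reconstruct this element from the transcript of communication crossing the cut $(A,B)$ with probability at least $1/2$. A standard encoding argument then shows the expected transcript length is $\Omega(kb)$.

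Concretely, first I would fix the convention that ``the messages are adversarially chosen'' can be replaced by ``the messages are drawn uniformly at random'': since we only need a lower bound that holds for \emph{some} choice of messages (the lower bound is with respect to $G$ and the message locations, not contents), it suffices to lower bound the \emph{expected} communication over a uniformly random input, and a uniform input maximizes the entropy. Second, I would define the random variable $\Pi$ to be the concatenation of all bits sent across the cut from $A$ to $B$ during the entire execution (this is well-defined given the randomness of the protocol and the input $X$, where $X \in \{0,1\}^{kb}$ encodes all $k$ messages). The crucial point is that $B$'s entire view — hence $B$'s reconstruction — is a (possibly randomized) function of $\Pi$ together with $B$'s private randomness and $B$'s (empty, by assumption) share of the input; in particular, conditioned on the public/shared randomness, $B$'s output is a function of $\Pi$ alone together with $B$'s private coins. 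Third, I would apply Fano's inequality / a source-coding lower bound: if $B$ reconstructs $X$ correctly with probability $\geq 1/2$, then $H(X \mid \Pi) \leq 1 + \tfrac12 \cdot kb$ (Fano, with error probability $\leq 1/2$ over an alphabet of size $2^{kb}$), so $I(X;\Pi) = H(X) - H(X\mid \Pi) \geq kb - 1 - kb/2 = kb/2 - 1$, and since $H(\Pi) \geq I(X;\Pi)$ and the expected length of a prefix-free encoding of $\Pi$ is at least $H(\Pi)$, we get $\mathbf{E}[|\Pi|] \geq kb/2 - 1 = \Omega(kb)$. Finally, I would note that allowing every node to know the topology of $G$ and the ``quota'' of how many messages each node in $A$ chooses does not help, because these are part of the fixed setting $S$ and are independent of the actual random contents $X$.

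The main obstacle — really more of a subtlety than a genuine difficulty — is handling the interactive, multi-round, multi-party nature of the model correctly: in \hybrid a single ``message crossing the cut'' can depend on messages previously received from the other side, and many nodes act in parallel. The clean way around this is to not reason about individual rounds at all, but to bound the total number of bits that ever cross the cut $A \to B$ over the whole execution, and observe that $B$'s final state is a deterministic function of (i) $B$'s initial state (which is independent of $X$, since all messages start in $A$), (ii) $B$'s random coins, and (iii) the full sequence $\Pi$ of bits received from $A$ — this holds regardless of interaction, because any bit $B$ ``would have needed'' to compute something is by definition either local to $B$ or part of $\Pi$. A second minor point is that a \emph{randomized} protocol must succeed with probability $\geq 1/2$ over the joint randomness; by averaging there is a fixing of all random coins under which the success probability over a uniform $X$ is still $\geq 1/2$, reducing to the deterministic case, after which the Fano argument above applies verbatim. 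I would also remark that the $O(\log n)$ bound on $b$ is not essential to the argument but is the regime of interest, and that the constant $1/2$ in the success probability is arbitrary — any constant bounded below $1$ (even $1 - 1/n^c$) gives the same $\Omega(kb)$ conclusion.
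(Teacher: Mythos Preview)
Your proposal is correct and essentially parallel in spirit to the paper's proof, but it takes a somewhat different technical route. The paper also fixes the messages to be uniformly random $kb$ bits, but instead of invoking Fano's inequality it argues directly by counting: letting $r$ be the worst-case number of bits crossing the cut, there are at most $2^r$ possible transcripts $s$; for each input $m'$ the success probability is $\sum_s P(s\mid m')\,P(\text{output }m'\mid s)\ge 1/2$, and summing over all $2^{kb}$ inputs and swapping the order of summation yields $2^{kb-1}\le 2^r$, hence $r\ge kb-1$.

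The two arguments trade off as follows. Your Fano-based argument is the more ``textbook'' communication-complexity approach and immediately generalizes to any constant (or even $1-o(1)$) success probability with no change; it also naturally bounds the \emph{expected} transcript length rather than only the worst case. The paper's counting argument is more elementary (no entropy, no Fano) and yields a slightly sharper constant ($kb-1$ versus your $kb/2-1$). One small point to tighten in your write-up: the step ``expected length of a prefix-free encoding of $\Pi$ is at least $H(\Pi)$'' implicitly assumes the transcript is self-delimiting; in this setting that is harmless (round structure makes it so, and for the worst-case bound you only need $H(\Pi)\le r+1$), but it is worth stating explicitly.
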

\begin{proof}
    Let $ALG$ be the optimal algorithm (in terms of minimal bits sent between $A$ and $B$) which performs communication between the nodes in $A$ and those in $B$ such that at the end of its execution, each of the $k$ messages is recovered by at least one node in $B$, no matter what the contents of the messages are; it is assumed that $ALG$ succeeds (i.e.,~all messages are recovered) with probability at least $1/2$. Denote by $r$ the number of bits which $ALG$ transfers from $A$ to $B$ in the worst case. We now desire to show that $r = \Omega(kb)$.

    Each node in $A$ that can choose messages simply chooses random bits, uniformly at random, for each of its messages. In total, the nodes in $A$ choose $kb$ random bits, uniformly at random.

    Observe that there are potentially $2^r$ bit strings which $ALG$ can send from $A$ to $B$ -- denote the set of these strings by $S$. When receiving a string $s \in S$, the nodes in $B$ perform some algorithm at the end of which they state that the $k$ messages chosen in $A$ are some strings $m = m_1, \dots, m_k$. For each $s \in S$, the nodes in $B$ have some probability distribution over the $m = m_1, \dots, m_k$ messages which they believe $A$ has, denote this distribution by $m_s = P(m | s)$.

    For any selection of messages $m = m_1, \dots, m_k$ chosen by $A$, denote by $s_m = P(s | m)$ the distribution over strings in $S$ which $ALG$ sends to $B$ given $m$. Clearly, as $ALG$ always succeeds with probability at least half, $1/2$ for any specific $m'$ it holds that $\sum_{s \in S} s_{m'}\cdot m'_s \geq 1/2$. Denote by $M$ the set of all messages that the nodes in $A$ can choose -- i.e.,~the set of all strings of length $kb$ bits. Summing over all possible $m' \in M$, we get that 

    \[
        \sum_{m' \in M} \sum_{s \in S} s_{m'}\cdot m'_s \geq |M|/2 = 2^{kb-1}.
    \]

    Notice that $s_{m'} \leq 1$ always, corresponds to a probability of an event, and for any given $s \in S$, it holds that $\sum_{m' \in M} m'_s = 1$, as it corresponds to a sum of probabilities of disjoint events which together partition the event space. Plugging both of these into the above gives

    \[
        2^{kb-1} \leq \sum_{m' \in M} \sum_{s \in S} s_{m'}\cdot m'_s =  \sum_{s \in S} \sum_{m' \in M} s_{m'}\cdot m'_s \leq \sum_{s \in S} \sum_{m' \in M} m'_s = \sum_{s \in S} 1 = |S| = 2^r.
    \]

    Thus, $r \geq kb-1$ and so $r = \Omega(kb)$, as required.
\end{proof}

Now, using \cref{lem:no_coding_tricks}, we can show \cref{thm:kdisLB}.

\begin{proof}
    We denote by $A_M$ the optimal amount of rounds to broadcast all messages $M$ to all of $G$. We desire to show that $A_M = \tilde \Omega(T_k)$. Further, for a set of nodes $V' \subseteq V$, denote by $M(V')$ the messages in $M$ which are originally stored at any node in $V'$.
    
    Let $v^*=\argmax_{v\in{}V}\tk(v)$.  Throughout the proof, we use the following observation: if a message is at distance $\ell+1$ from $v^*$, then in order for $v^*$ to receive it in $\ell$ rounds, it has to be sent at least once into $B_{\ell}(v^*)$ through the global network. Due to \cref{lem:no_coding_tricks}, we know that one cannot compress the information to be sent by more than at most a constant factor, so it is possible to assume that messages are just sent \emph{as-is}, without any coding techniques to shorten specific messages or sets of messages.

    We now split to cases.
    
    \paragraph{Case 1: $\tk<D/2$ and $|M(V \setminus B_{\tk}(v^*))| > k/2$.}
    We bound the global network bandwidth capacity of $B_{\tk-1}(v^*)$. Since $\tk$ is the minimal radius s.t.~$|B_{\tk}(v^*)|\geq{}k/\tk$, then $|B_{\tk-1}(v^*)|<k/\tk$. Each node can receive $\log{n}$ messages per round using the global network, so in $(\tk-1)/(2\log{n})$ rounds, $B_{\tk-1}(v^*)$ can at most receive 
    \[
        \frac{\tk-1}{2\log{n}}\log{n}|B_{\tk-1}(v^*)|
        \leq{}
        \frac{\tk-1}{2}\frac{k}{\tk}
        \leq{}
        \frac{k}{2}
    \]
     messages using the global network. 
    
    As $|M(V \setminus B_{\tk}(v^*))| > k/2$, then also $P = M(V \setminus B_{\tk-1}(v^*)), |P| > k/2$. In order for $v^*$ to receive the set of messages $P$ in at most $\tk-1$ rounds, they have to be sent through the global network into $B_{\tk-1}(v^*)$. However, as we just showed, it takes at least $(\tk-1)/(2\log{n})$ rounds for $B_{\tk-1}(v^*)$ to receive $k/2$ messages using the global network. Therefore, $A_M>\min\{(\tk-1)/(2\log{n}), \tk-1\} = (\tk-1)/(2\log{n})$.

    \paragraph{Case 2: $\tk<D/2$ and $|M(V \setminus B_{\tk}(v^*))| \leq k/2$.}
    Now, there are at least $k/2$ tokens inside $B_{\tk}(v^*)$, and we split to cases again. 
    We look at the two \emph{rings} surrounding $v^*$, denoted $R_1 = B_{\tk/2}(v^*)$ and $R_2 = B_{\tk}(v^*)\setminus{}R_1$. 

    \paragraph{Case 2.1: $|M(R_2)| \geq k/4$.}
    If the outer ring $R_2$ has at least $k/4$ tokens, then we proceed similarly to Case 1 above. The global capacity of $R_1$ in $\tk/(4\log{n})$ rounds is at most $k/4$, so $A_M>\tk/(4\log{n})$.

    \paragraph{Case 2.2: $|M(R_2)| < k/4$.}
    As $R_2$ has less than $k/4$ tokens, then $R_1$ has at least $k/4$ tokens. We now flip our point of view from receiving capacity to transmitting capacity: in order for a node $w\notin{}B_{\tk}(v^*)$ to receive all tokens in less than $\tk/4$ rounds, all the tokens in $R_1$ have to be sent through the global network. The of transmitting capacity of $R_1$ in $\tk/(4\log{n})$ rounds over the global network is the same as the receiving capacity, which is bounded by $k/4$ by the same arguments as Case 2.1. Therefore at least $\tk/(4\log{n})$ rounds are required to send all the tokens from $R_1$ to $w$, which means $A_M>\tk/(4\log{n})$.
    Note that this holds only if there exist a node $w\notin{}B_{\tk}(v^*)$. If not, then $B_{\tk}(v^*)=V$, but this would mean that $\tk\geq{}D/2$, contradicting the assumption throughout Case 2.

    \paragraph{Case 3: $\tk\geq{}D/2$.}
    Now assume $\tk\geq{}D/2$. It means that $|B_{D/2-1}(v^*)|<k/\tk$ by \cref{def:bcast_quality}, and there exists a node $w\notin{}B_{D/2-1}(v^*)$, or otherwise the diameter would be smaller.
    We can repeat the same arguments from Cases 1 and 2 above, with $D/2-1$ instead of $\tk$, and get the same result.

    In all cases, we showed $A_M = \tilde \Omega(\tk)$, and so we are done. 
\end{proof}

\subsection{Upper Bound}
\label{sec:optimal_broadcast;subsec:upper_bound}

We now prove \cref{thm:optimal_dissemination}.

\kdisUP*

We recall a known result about computing $(\alpha,\beta)$-ruling sets in \congest.

\begin{definition}
\label{def:ruling_sets}
    An $(\alpha,\beta)$-ruling set for $G=(V,E)$ is a subset $W\subseteq{}V$, such that for every $v\in{}V$ there is a $w\in{}W$ with $\hop(v,w)\leq{}\beta$ and for any $w_1,w_2\in{}W$, $w_1\neq{}w_2$, we have $\hop(w_1,w_2)\geq{}\alpha$.
\end{definition}

\begin{theorem} [Theorem 1.1 in \cite{kuhn2018deterministic}]
\label{thm:ruling_sets}
    Let $\mu$ be a positive integer. A $(\mu+1, \mu\ceil{\log{n}})$-ruling set can be computed deterministically in the local network in $O(\mu\log{n})$ rounds in \congest.
\end{theorem}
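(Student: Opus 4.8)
The plan is to reduce the general statement to the base case $\mu=1$ run on a power graph, and to give a self-contained $\bigO{\log n}$-round algorithm for the base case. The key observation is that a $(\mu+1, \mu\ceil{\log n})$-ruling set of $G$ (in the sense of \cref{def:ruling_sets}) is precisely a $(2,\ceil{\log n})$-ruling set of the power graph $G^\mu$: two nodes are adjacent in $G^\mu$ exactly when their hop-distance in $G$ is at most $\mu$, so the independence requirement $\hop(w_1,w_2)\geq 2$ in $G^\mu$ becomes $\hop(w_1,w_2)\geq \mu+1$ in $G$, and a domination radius of $\ceil{\log n}$ in $G^\mu$ becomes a domination radius of $\mu\ceil{\log n}$ in $G$. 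Hence it suffices to (i) compute a $(2,\ceil{\log n})$-ruling set of an arbitrary graph $H$ in $\bigO{\log n}$ rounds of \congest on $H$, and (ii) show that a single round of \congest on $H=G^\mu$ can be simulated in $\mu$ rounds of \congest on $G$. Combining the two gives $\bigO{\log n}\cdot\mu=\bigO{\mu\log n}$ rounds.

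For step (i) I would use bit-by-bit elimination over the distinct $b=\bigO{\log n}$-bit identifiers. Start with every node a \emph{candidate}, and run $b$ phases; in phase $i$ (processing the $i$-th most significant bit), any candidate whose $i$-th bit is $1$ that has a candidate neighbor whose $i$-th bit is $0$ withdraws, and the surviving candidates form the ruling set. Independence follows by looking at the first bit $j$ on which two adjacent survivors' IDs differ: if both were still candidates entering phase $j$, the one with bit $1$ would have withdrawn, a contradiction, so no two neighbors survive. Domination follows by charging each withdrawn node to a responsible neighbor whose ID-prefix is strictly smaller and whose own withdrawal (if any) occurs in a strictly later phase; the resulting chain strictly increases its phase index, so it has length at most $b=\ceil{\log n}$ and terminates at a surviving node, giving domination radius $\ceil{\log n}$.

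The step I expect to be the main obstacle is (ii): simulating $G^\mu$ while respecting the \congest bandwidth limit, since a node has up to $|B_\mu(v)|$ neighbors in $G^\mu$ and cannot afford to learn all of them. The point to exploit is that the elimination rule needs only one boolean per phase, namely whether $v$ has \emph{some} candidate $G^\mu$-neighbor with $i$-th bit $0$; this is an $\mathsf{OR}$ over $v$'s $\mu$-hop neighborhood. I would therefore compute it by flooding the single bit ``I am a candidate whose $i$-th bit is $0$'' for $\mu$ hops, with every node forwarding only the aggregated $\mathsf{OR}$ of what it has received so far. Each such hop sends one bit per edge, so the whole $\mu$-hop aggregation is a valid \congest computation costing $\mu$ rounds; candidacy is maintained locally and updated at the end of each phase. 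Correctness is unaffected because both the first-differing-bit argument and the charging chain refer only to the existence of a suitable $G^\mu$-neighbor, which the $\mathsf{OR}$-flooding reproduces exactly, with consecutive chain nodes now at $G$-distance $\leq\mu$ and hence total domination radius $\mu\ceil{\log n}$.

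Finally I would note that the identifiers are distinct and of length $\bigO{\log n}$ bits (range $[n]$, or $[n^c]$ in the \hybridzero setting, changing only the constant), so the base algorithm runs in $b=\bigO{\log n}$ phases and the overall round count is $\bigO{\mu\log n}$ in \congest on the local network, as claimed.
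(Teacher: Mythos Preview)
The paper does not prove \cref{thm:ruling_sets}; it is quoted verbatim as Theorem~1.1 of \cite{kuhn2018deterministic} and used as a black box in the proof of \cref{lem:clustering_with_weak_diam}. So there is no ``paper's own proof'' to compare against.

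That said, your argument is correct and is essentially the standard one underlying the cited result. The reduction to a $(2,\ceil{\log n})$-ruling set of $G^\mu$ is exactly right, and the bit-by-bit elimination together with the strictly-increasing-phase charging chain gives both independence and the $\ceil{\log n}$ domination radius on $G^\mu$. You also correctly identify the only nontrivial point for \congest: one cannot simulate $G^\mu$ edge-by-edge, but since each phase needs only a single $\mathsf{OR}$ over the $\mu$-neighborhood, a $\mu$-round one-bit flooding per phase suffices, yielding $\bigO{\mu\log n}$ rounds. The one cosmetic remark is that the chain length is bounded by the number $b$ of bit positions, which is $\ceil{\log n}$ when identifiers lie in $[n]$ (and $\bigO{\log n}$ when they lie in $[n^c]$, affecting only constants), so the stated $(\mu+1,\mu\ceil{\log n})$ parameters come out exactly.
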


We use the following terms throughout the proof, which we define formally.

\begin{definition} [Flooding]
\label{def:flooding}
    Flooding information through the local network, is sending that information through all incident local edges of all nodes. On subsequent rounds, the nodes aggregate the information they received and continue to send it as well. After $t$ rounds, every node $v$ knows all of the information which was held by any node in its $t$-neighborhood before the flooding began.
\end{definition}

\begin{lemma}[Uniform Load Balancing]
\label{lemma:load_balancing}
    Given a set of nodes $C$ with weak diameter $d$ and a set of messages $M$ with $|M|=k$ distributed across $C$, there is an algorithm that when it terminates, each $v\in{}C$ holds at most $\ceil{k/|C|}$ messages. The algorithm runs in $2d=O(d)$ rounds. We say that $C$ uniformly distributes $M$ within itself.  
\end{lemma}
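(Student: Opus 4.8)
The plan is to prove the Uniform Load Balancing lemma by routing messages through a BFS tree rooted at an arbitrary node of $C$ and then pushing them back down so that every node ends up with a balanced share. Since $C$ has weak diameter $d$, pick any $r \in C$; within the local network of $G$ (which contains $C$), every node of $C$ is within hop-distance $d$ of $r$, so in $d$ rounds we can flood outward from $r$ and build a (virtual) BFS tree $T$ on the nodes of $C$, with each node learning its parent and, crucially, the subtree sizes $|C_v|$ for every child — these subtree counts can be aggregated toward the root in the same $d$ rounds by the standard convergecast. Note all of this happens over local edges, and $d$ rounds suffice for both the downward exploration and the upward size-aggregation because the tree has depth at most $d$.

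Next I would run a convergecast of the messages themselves: in round $i$ of a second phase, each node forwards the messages currently in its possession up toward $r$, but throttled so that we maintain the invariant that after the convergecast completes, the number of messages sitting in the subtree $C_v$ rooted at any node $v$ is at most $\ceil{k/|C|}\cdot|C_v|$, and in fact we route them so the root (and along the way, each node) redistributes downward. Concretely, the cleanest formulation: first convergecast all $k$ messages to $r$ (ignoring congestion bounds — we only care about round count, not bandwidth, since the local network is $\lambda=\infty$ per \Cref{def:hybrid0}); this takes at most $d$ rounds since $T$ has depth $\le d$ and a node can send arbitrarily many bits per round over a local edge. Then $r$, knowing the whole tree structure and all subtree sizes, computes an assignment of each message to a distinct node of $C$ so that each node receives at most $\ceil{k/|C|}$ messages — this is a purely local computation at $r$. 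Finally, push the messages (together with their destination labels) back down the tree: in $d$ more rounds, each message reaches its assigned node, since every node is at depth $\le d$. Total: $d + d = 2d = O(d)$ rounds.

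The main thing to be careful about — and the step I'd flag as the real content rather than the obstacle — is justifying that everything genuinely runs in the local network restricted to $C$: the weak diameter guarantees every pair of nodes in $C$ is at hop-distance $\le d$ \emph{in $G$}, but the BFS tree $T$ may use intermediate nodes outside $C$ as relays. That is fine, because the \local model lets those relay nodes forward arbitrary bits for free; we simply treat $T$ as a tree embedded in $G$ with $|C|$ leaves/internal-nodes-of-interest and depth $\le d$, and all internal routing is transparent. One should state explicitly that "within $C$" in the lemma refers to the final resting place of messages, while the routing is done in $G$. Since $\lambda = \infty$, there is no congestion concern whatsoever — a single node may relay or hold all $k$ messages transiently. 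I do not expect any genuine obstacle here; the only subtlety is the weak-vs-strong-diameter distinction, which is resolved by observing that relaying through out-of-cluster nodes costs nothing in \local.

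I would also remark that the assignment computed at $r$ is straightforward: order the messages arbitrarily and the nodes of $C$ arbitrarily, and assign messages $1,\dots,\ceil{k/|C|}$ to the first node, the next batch to the second node, and so on; since there are $|C|$ nodes and $\ceil{k/|C|}\cdot|C| \ge k$, every message gets a home and no node exceeds $\ceil{k/|C|}$. This makes the bound in the statement tight up to the ceiling. The whole argument is elementary once the BFS-tree-in-$G$ viewpoint is set up, so the write-up is mostly about making that viewpoint precise.
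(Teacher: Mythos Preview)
Your approach is correct in spirit—gather all messages at one designated node, compute a balanced assignment, redistribute—but it is more elaborate than the paper's and, as written, the round count does not quite come out to $2d$. The paper's proof is two lines: for $d$ rounds every node floods (in the sense of \cref{def:flooding}) all messages and identifiers it holds; since the weak diameter of $C$ is $d$, after these $d$ rounds the minimum-identifier node of $C$ (indeed every node of $C$) knows all of $M$ and all of $C$. That node then computes an allocation assigning each $v\in C$ at most $\ceil{k/|C|}$ messages and floods the allocation for another $d$ rounds. No tree is ever built.

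Your BFS-tree route works in principle, but the accounting slips: building $T$ by flooding outward from $r$ already consumes $d$ rounds before any convergecast can begin, so tree-construction plus convergecast plus push-down is $3d$, not the $2d$ you total up. (The claim that subtree sizes can be aggregated ``in the same $d$ rounds'' as the outward BFS wave is not correct—a node at depth $d$ learns its parent only at round $d$ and cannot contribute upward before then.) This still yields $O(d)$, but not the stated constant $2d$. The fix is precisely the simplification the paper takes: since $\lambda=\infty$, the tree is superfluous; plain flooding for $d$ rounds already delivers everything to $r$, and another $d$ rounds of flooding carry the assignment back. Your discussion of the weak-versus-strong diameter subtlety—that relay nodes outside $C$ may be used transparently because local bandwidth is unbounded—is a good observation and applies verbatim to the paper's flooding argument.
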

\begin{proof}
    In $d$ rounds, all nodes flood the messages and identifiers of $C$. The minimal identifier node then computes an allocation such that each $v\in{}C$ is responsible for at most $\ceil{k/|C|}$ messages, and floods the allocation for another $d$ rounds, so it reaches all $v\in{}C$.
\end{proof}

We use \cref{thm:ruling_sets} to prove the following lemma on creating clusters with low weak diameter and roughly the same number of nodes.

\begin{lemma}
\label{lem:clustering_with_weak_diam}
    For any $k$, it is possible to partition the set of nodes into clusters with weak diameter at most $4\tk\ceil{\log{n}}$ such that each cluster has between $k/\tk$ and $2k/\tk$ nodes. This lemma returns a set $R \subseteq V$ of cluster leaders, for every $r \in R$, denote by $C(r)$ the cluster which $r$ leads, and for every cluster $C$ denote its leader by $r(C) \in R$. Every node knows whether or not it is in $R$ and also knows to which cluster it belongs. This all takes $\tildeBigO{\tk}$ rounds in \hybridzero.
\end{lemma}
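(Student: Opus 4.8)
The plan is to apply the ruling set result of \cref{thm:ruling_sets} with an appropriate parameter, use the ruling set as cluster leaders, assign each node to a nearest leader, and then post-process clusters that are too large or too small so that every cluster ends up with between $k/\tk$ and $2k/\tk$ nodes. First I would invoke \cref{thm:ruling_sets} with $\mu = \tk$, obtaining a $(\tk+1, \tk\ceil{\log n})$-ruling set $R_0 \subseteq V$ in $O(\tk \log n)$ rounds of \congest (which is simulated by the local network of \hybrid). By the ruling property, every node has some element of $R_0$ within hop-distance $\tk\ceil{\log n}$; assign each node to the ruling-set element nearest to it (breaking ties by identifier), which can be discovered by flooding the local network for $\tk\ceil{\log n}$ rounds. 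This yields a partition into clusters, each of weak diameter at most $2\tk\ceil{\log n}$, since any node in a cluster is within $\tk\ceil{\log n}$ hops of its leader. I would need the extra factor $2$ in the claimed bound $4\tk\ceil{\log n}$ as slack for the merging step below.

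The size control is the main content. Because $\tk = \tk(G) = \max_v \tk(v)$, \cref{def:bcast_quality} gives $\size{B_{\tk}(v)} \geq k/\tk$ for every node $v$ (unless $\tk = D$, in which case $B_{\tk}(v) = V$ and $n \geq k/\tk$ holds trivially as well). So the ball of radius $\tk$ around any leader already contains at least $k/\tk$ nodes; however those nodes might be assigned to other leaders, so a single cluster is not guaranteed to be large. To fix small clusters, I would merge: build the virtual tree $T$ of \cref{lem:construct_tree_without_ids}, or more simply form an auxiliary graph on $R_0$ where two leaders are adjacent if their clusters are within a couple of hops, and repeatedly merge a too-small cluster (fewer than $k/\tk$ nodes) into a neighboring cluster. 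Each merge at most adds $\tk\ceil{\log n}$ or so to the weak diameter of the combined cluster, and since we can bound the number of merges any single cluster participates in by a constant (each merge at least doubles... — actually here I must be careful: merging small clusters can cascade), the cleanest argument is to first greedily merge connected components of the "small-cluster" subgraph into their nearest large cluster so that each final cluster is the union of one large cluster with a bounded-diameter collection of small ones. For clusters that are too large ($> 2k/\tk$ nodes), I would split them: a cluster of weak diameter $d$ and size $s$ can be partitioned, by flooding and having the minimum-identifier node compute the partition as in \cref{lemma:load_balancing}, into $\lceil s/(2k/\tk) \rceil$ pieces each of size in $[k/\tk, 2k/\tk]$ — here I would use a BFS-layer argument from the leader to keep each piece connected-ish and of bounded weak diameter; alternatively, since the cluster has small weak diameter anyway, the weak-diameter bound is automatically inherited by any subset.

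The hard part will be making the merging/splitting argument give a clean bound on the final weak diameter without it blowing up by more than a constant factor, and doing all coordination in \hybridzero where leaders that are far apart in $G$ must still communicate — this is exactly what \cref{lem:construct_tree_without_ids} and the \hybridzero primitives are for. I would handle the long-range coordination by noting $|R_0| = \tildeBigO{\tk}$ at most... — actually $|R_0|$ could be as large as $\Theta(\tk \cdot n/k)$, but the relevant point is that all the global computation (computing $\tk$ by \cref{thm:tk_computation}, aggregating cluster sizes, deciding merges/splits) can be funneled through the tree $T$ in $\tildeBigO{1}$ rounds per aggregation, and there are only $\tildeBigO{1}$ rounds of merging since the diameter blowup is a constant factor. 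Combining: ruling set costs $O(\tk\log n)$, flooding for assignment costs $O(\tk\log n)$, and the $\tildeBigO{1}$ coordination/merging/splitting rounds cost $\tildeBigO{\tk}$ each (flooding within clusters of diameter $\tildeBigO{\tk}$), for a total of $\tildeBigO{\tk}$ rounds, as claimed.
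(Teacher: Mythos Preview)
Your overall strategy (ruling set $\Rightarrow$ assign to nearest ruler $\Rightarrow$ fix sizes) is the right skeleton, but you miss the one idea that makes the lower bound on cluster size fall out for free, and as a result you are forced into a merging argument that you yourself flag as problematic.

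The paper invokes \cref{thm:ruling_sets} with $\mu = 2\tk$, not $\mu = \tk$. This gives a $(2\tk+1,\,2\tk\ceil{\log n})$-ruling set, so any two rulers are at hop-distance at least $2\tk+1$. Consequently, for any ruler $r$ and any $u\in B_{\tk}(r)$, every other ruler $r'$ satisfies $\hop(u,r')\geq \hop(r,r')-\hop(u,r)\geq (2\tk+1)-\tk>\tk\geq\hop(u,r)$, so $r$ is the unique closest ruler to $u$. Hence $B_{\tk}(r)\subseteq C(r)$, and by \cref{def:bcast_quality} this immediately gives $|C(r)|\geq|B_{\tk}(r)|\geq k/\tk$. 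No merging is needed at all; only the splitting of oversized clusters remains, and as you correctly note, any subset of a cluster inherits the weak-diameter bound. The factor $4$ in the statement is exactly $2\beta = 2\cdot 2\tk\ceil{\log n}$, not slack reserved for merges.

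With your choice $\mu=\tk$, rulers are only $\tk+1$ apart and $B_{\tk}(r)$ need not lie in $C(r)$, so clusters genuinely can be small and your merging step is necessary. But that step is a real gap: your sketch does not control how many merges a cluster undergoes (you note the cascade issue yourself), and ``merge small-cluster components into the nearest large cluster'' fails outright when there is no large cluster nearby --- nothing you have established rules that out. Getting a constant-factor weak-diameter blowup from such a procedure would require an additional structural argument you have not supplied. The fix is simply to double $\mu$.
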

\begin{proof}
    We compute $\tk$ in $\tildeBigO{\tk}$ rounds by \cref{thm:tk_computation}. We choose $\mu=2\tk$ and use \cref{thm:ruling_sets} to compute a $(2\tk+1, 2\tk\ceil{\log{n}})$-ruling set in $O(\tk\log{n})$ rounds\footnote{\cref{thm:ruling_sets} is stated in \congest. Clearly, it can be run in \hybrid. It is potentially an intersting question if it can be executed in \hybridzero, as \congest might assume that the identifiers in the graph are from some specific pallet, e.g.,~$[n]$. To overcome this assumption, we execute \cref{lem:construct_tree_without_ids} to construct a virtual tree over all the nodes and use it within $\tildeBigO{1}$ rounds to rename the nodes to have identifiers in whatever set the algorithm in \congest assumes. The nodes assume these new identifiers just for the execution of \cref{thm:ruling_sets}, and then return to use their original identifiers.}. We denote the set of rulers by $R$. Then, for $2\tk\ceil{\log{n}}$ rounds, each node learns its $2\tk\ceil{\log{n}}$ neighborhood and the ruling nodes in it, through the local network. For every $v\in{V}$, let $r(v)$ be the closest ruling node by hop distance, with ties broken by minimum identifier. By \cref{def:ruling_sets}, $r(v)$ must be in its $2\tk\ceil{\log{n}}$ neighborhood. By exploring this neighborhood, each node $v$ finds $r(v)$.

For any $r \in R$, define the cluster of $r$ as $C(r)$. For any cluster $C'$, let $r(C')$ be the $r \in R$ such that $C(r(C')) = C'$. Every node $v \in V\setminus R$ joins the cluster of its closest ruling node $r(v)$. Notice that any cluster $C$ contains exactly one ruling node, $r(C)$, and set the cluster identifier of $C$ as the identifier of $r(C)$. \cref{def:ruling_sets} guarantees that the weak diameter of each such cluster is at most $2\beta=4\tk\ceil{\log{n}}$. Thus, for $4\tk\ceil{\log{n}}$ rounds, each node $v$ floods $r(v)$ through the local network, so for every cluster $C$, any $v\in{}C$ knows all the nodes in $C$. 

Let $C$ be a cluster. As for every $r_1\neq{}r_2\in{}R$, $\hop(r_1,r_2)\geq{}\alpha=2\tk+1$, it holds that $B_{\tk}(r(C))\subseteq{}C$ -- that is, every node in $u \in B_{\tk}(r(C))$ joins $C$, as the closest ruling node to $u$ is $r(C)$. By \cref{def:bcast_quality}, $|C|\geq{}|B_{\tk}(r(C))|\geq{}k/\tk$. Thus, every cluster has minimum size $k/\tk$.

Now, we make sure that our clusters are not too big. Each cluster $C$ with $|C|>2k/\tk$ splits deterministically to more clusters, until each cluster holds $k/\tk\leq{}|C|\leq{}2k/\tk$. This can be computed locally for each cluster, for example by greedily assigning groups of $2k/\tk$ node identifiers inside the cluster to the new cluster, and choosing the leader as the minimal identifier node. After this process, we get at most $n\tk/k$ disjoint clusters, each with weak diameter at most $4\tk\ceil{\log{n}}$, and of size $k/\tk\leq{}|C|\leq{}2k/\tk$. 
We add every leader of the new clusters which were split to the set $R$.

\end{proof}

Finally, we also show the following helper lemma on pruning trees.

\begin{lemma}
\label{thm:ncc_tree_pruning}
    Let there be a tree $T=(V,E_T)$ with root $r$, constant maximal degree and depth $d$. Given some function $f: V \rightarrow \{0, 1\}$, there is an algorithm that constructs a tree $T'=(U,E_{T'})$, $U=\{v \mid v \in V, f(v) = 1\}\subseteq{}V$, with constant maximal degree and depth $d'\leq{}d$. This takes $O(d^2)$ rounds in \hybridzero. It is assumed that for every $v\in V$, the value $f(v)$ is known to $v$ before this algorithm is run.
\end{lemma}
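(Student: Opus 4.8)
The plan is to contract away, level by level from the root, all nodes $v$ with $f(v)=0$, reconnecting their surviving descendants directly to their nearest surviving ancestor. The key structural observation is that after this operation the set of ancestors of any surviving node is a subset of its old ancestor set, so depth can only decrease; the nontrivial point to control is the \emph{degree} blowup, which I address by rebalancing locally within each bucket of children that gets reassigned to a common ancestor.

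Concretely, I would proceed as follows. First, each node computes its depth in $T$ (this costs $O(d)$ rounds of downward propagation from $r$, or can be assumed known from \cref{lem:construct_tree_without_ids}). Process the levels $\ell = 1, 2, \dots, d$ in order. When processing level $\ell$, every node $v$ at depth $\ell$ with $f(v)=0$ must be removed: it informs its children and its parent that it is being deleted, and forwards to each child the identifier of $v$'s current parent $p$ in the tree-being-built. Each surviving child $c$ of a deleted node then attaches to $p$. To keep $p$'s degree constant, I insert a constant-arity \emph{balanced relay gadget}: if $p$ acquires a batch of $b$ new children in this step, build (inside that batch, which all know each other's identifiers after one round of exchange through the global network, since $b \le \Delta \cdot$ (number of deletions at this level) — better: handle this per-deleted-node, so $b \le \Delta$ is constant) a depth-$O(1)$ balanced tree on the $b$ children rooted at $p$. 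Since the maximal degree $\Delta$ of $T$ is constant, each deleted node contributes at most $\Delta$ new children to $p$, the relay gadget has depth $O(1)$, and $p$'s degree grows by only $O(1)$ per deleted child it absorbs. The subtlety is that a node $p$ may absorb children from several deleted descendants over the course of processing one level or across levels; because we process levels top-down and only a node's direct parent-in-progress is ever a surviving node, each surviving node absorbs new children from at most $\Delta$ deleted children at the single level directly below it, so the total degree increase is $O(1)$ overall and depth increases by $O(1)$ per level — which would give depth $O(d)$, acceptable since the statement only asks $d' \le d$; to actually get $d'\le d$ I observe the relay gadgets replace deleted levels, so no net increase occurs, but if a cleaner argument is needed one can simply state $d' = O(d)$ and note the statement's "$d' \le d$" follows by also contracting any surviving node that became the unique child of its parent.

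For the round complexity: each level $\ell$ is handled in $O(d)$ rounds (the dominant cost is that a surviving node may need to learn the new parent identifier that was several deleted-levels above it, requiring a pointer-jumping / path-to-root walk of length $O(d)$ through the global network — each node is the target of only $O(\log n)$ such pointer messages per round since the tree has constant degree), and there are $d$ levels, for a total of $O(d^2)$ rounds in \hybridzero. Throughout, whenever a node learns its new parent it also sends back its own identifier, so the "every two neighboring nodes in $T'$ know each other's identifiers" guarantee (inherited from the style of \cref{lem:construct_tree_without_ids}) is maintained.

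The main obstacle I expect is controlling the degree and depth simultaneously under chains of consecutive deleted nodes: a path of $j$ deleted nodes collapses to a single edge, and naively all of the deepest node's subtrees hang off one ancestor. The balanced-relay-gadget idea resolves the degree issue at the cost of $O(1)$ extra depth per contracted level, which is why I would prove the clean bound $d' = O(d)$ first and only afterward argue the sharper $d' \le d$ by the additional observation that each relay gadget is inserted in place of at least one removed level; if the sharper bound turns out to need more care, the weaker $O(d)$ bound already suffices for every downstream use in \cref{thm:optimal_dissemination}.
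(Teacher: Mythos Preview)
Your approach is genuinely different from the paper's, and while it is plausible for the weaker conclusion $d'=O(d)$, it does not establish the stated bound $d'\le d$ and the degree/depth accounting under your relay gadgets is not worked out.

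The paper avoids contraction entirely. Its idea is a \emph{swap}: starting at the root, if $r\notin U$ then walk down the tree (always into a child whose subtree contains some $U$-node, which is known after one $O(d)$-round upward aggregation of $|U\cap T(v)|$) until a node $v\in U$ is found, and then $r$ and $v$ exchange positions in $T$. The tree shape is literally unchanged, so constant degree and depth $\le d$ are immediate; one then recurses in parallel on the subtrees of the (new) root, and any subtree with no $U$-nodes is simply discarded. Each recursive layer costs $O(d)$ rounds for the walk, and there are at most $d$ layers, giving $O(d^2)$.

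Compared to this, your contract-and-rebalance scheme has two soft spots. First, the claim that ``relay gadgets replace deleted levels, so no net increase occurs'' is not true in general: when an entire level of $\Delta$ children of $p$ is deleted and $\Delta^2$ grandchildren must be rehung, any constant-degree gadget on those $\Delta^2$ nodes has depth at least $2$, and the grandchildren you promote to internal gadget positions now carry both gadget-children \emph{and} their original $T$-children, so the bookkeeping for simultaneous degree and depth control across levels is delicate and not supplied. Second, your round-complexity argument is inconsistent: if you truly process one level at a time and reattach to the immediate current parent, each level costs $O(1)$ rounds (not $O(d)$), whereas if you instead pointer-jump to the nearest surviving ancestor you are no longer processing level by level. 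Either variant can be made to yield $O(d^2)$, but the proposal as written mixes the two. The paper's swap trick sidesteps all of this: because it never changes the tree's shape, both the degree bound and the sharp $d'\le d$ bound come for free.
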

\begin{proof}
    Denote $U = \{v \mid v \in V, f(v) = 1\}$. Notice that every $v\in V$ knows if $v\in U$. For every $v \in V$, denote by $T(v)$ the subtree of $T$ rooted at $v$. Now, every $v \in V$ computes $|U \cap T(v)|$. This is done by each node sending up the tree how many nodes in $U$ are in its subtree. This takes $O(d)$ rounds.

    Now, the root node $r$ observes itself. If $|U \cap T(r)| = 0$, the algorithm halts and we return an empty tree. Otherwise, if $r \in U$, then it does nothing. If $r \notin U$, it finds some arbitrary node $v \in U$, and \emph{swaps} positions with $v$ -- both $r$ and $v$ inform their parents and children in $T$ that they swap positions, i.e.,~$v$ is now the root of the tree and $r$ now occupies the position which $v$ previously did. In either case, the tree is now rooted by a node from $U$, and we recurse on the subtrees of the children of the root.

    Notice that to find a node $v \in U$, node $r$ simply performs a walk down the tree, each time choosing to go to a child that has some nodes of $U$ in its subtree. Further, notice that we can perform all the recursive steps in parallel, as we recurse on disjoint subtrees. Finally, whenever a subtree contains only nodes from $V \setminus U$, then the entire subtree is removed as the root of that subtree will halt the recursion and return an empty tree. Thus, the tree we are left with in the end is $T'$, it has only nodes from $U$ and depth $d' \leq d$.

    All in all, each step of the recursion takes $O(d)$ rounds, and we have $O(d)$ recursive steps, resulting in an $O(d^2)$ round complexity.
\end{proof}

\begin{figure}
  \centering
\includegraphics[width=0.4\textwidth]{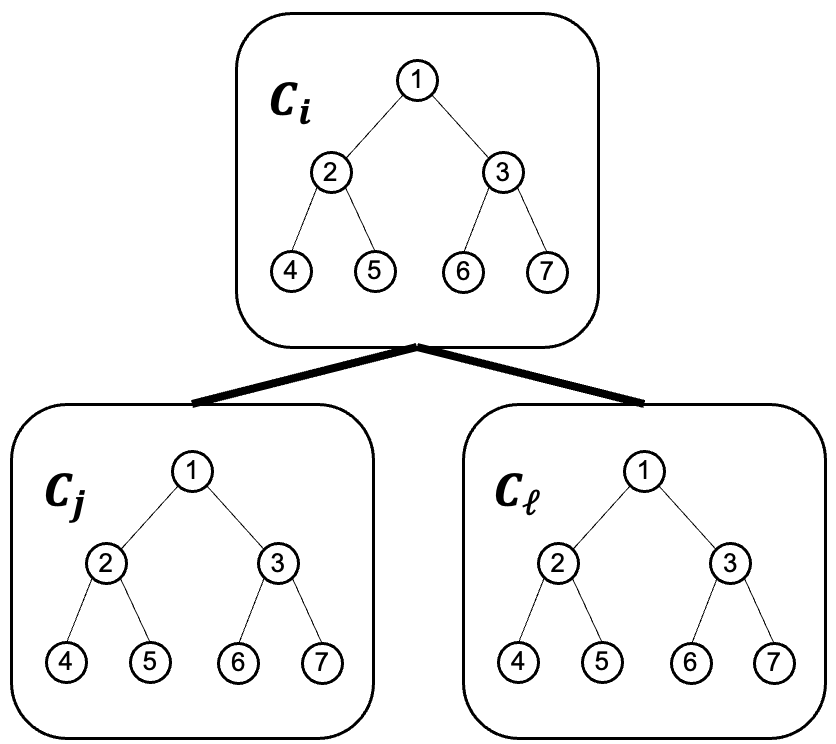}
  \caption{Overview of the proof of \cref{thm:optimal_dissemination}. We first create clusters with $\tildeBigO{\tk}$ weak diameter and roughly the same number of nodes. Then, we construct a logical tree of the clusters, with constant maximal degree and polylogarithmic depth. Inside each cluster, we create a logical binary tree over the nodes of the cluster. As the clusters are roughly of the same size, we can ensure that the trees inside the clusters have the exact same shape. We ensure that for any two neighboring clusters in the cluster tree, the nodes in their internal trees know their respective nodes in the tree of the other cluster -- i.e.,~node $3$ in cluster $C_i$ knows the identifier of nodes $3$ in $C_j, C_\ell$ and can communicate directly with them using the global edges. Once we are done constructing all of these trees, we propagate all the $k$ messages in the graph up to the top of the cluster tree, and then propagate them back down to all the clusters, to ensure that every node in the graph receives all the messages.} 
  \label{fig:notations}
\end{figure}

We are now ready to prove \cref{thm:optimal_dissemination}.

\begin{proof}
\label{optimal_dissem_proof}
The algorithm consists of several phases: clustering, cluster-chaining, load balancing and dissemination (see \cref{fig:notations}). The clustering phase ensures that we partition the nodes to disjoint clusters of similar size, such that the weak diameter of each cluster is small. 
In the cluster-chaining phase, we order the clusters in a logical tree with constant degree and polylogarithmic depth, and let the nodes of each cluster know the nodes of its parent and children clusters. 
In the dissemination phase, we trickle all the tokens up to the root cluster using the global network, and the chaining we devised in the cluster-chaining phase. Then, we trickle the tokens down the tree, such that each cluster learns all the tokens. 

We begin by computing $k$ by summing how many tokens each node holds, using \cref{lem:hybrid_zero_aggregate}, in $\tildeBigO{1}$ rounds. Then, we compute $\tk$ in $\tildeBigO{\tk}$ rounds by \cref{thm:tk_computation}.

\paragraph{Clustering.} We wish to create a partition of the nodes in the graph into clusters of roughly the same size and with small weak diameter. We execute \cref{lem:clustering_with_weak_diam} with $k$ and receive $R$, the set of cluster leaders. This takes $\tildeBigO{\tk}$ rounds.

\paragraph{Cluster-Chaining.} This phase consists of two sub-phases. We first create a logical tree of the clusters, denoted $\mathbb{T}_C$, with constant maximal degree and depth at most $\tildeBigO{1}$.
Then, within each cluster $C_i$, we order its nodes in a logical binary tree $\mathbb{T}_i$. Finally, we use the internal trees to associate nodes of one cluster with the nodes of its parent and children clusters.

\paragraph{Building the cluster tree.}
We run \cref{lem:construct_tree_without_ids} in $\tildeBigO{1}$ rounds to obtain a virtual tree which spans $G$ of constant maximal degree and depth $\tildeBigO{1}$.
After the clustering phase, each node knows whether it is a cluster leader or not. Thus, we define a function $f$ where every $v \in V$ sets $f(v) = 1$ if $v \in R$, and $f(v)=0$ otherwise. We now use \cref{thm:ncc_tree_pruning} with $T, f$ to compute a tree $\mathbb{T}_C$, with constant maximal degree and depth $\tildeBigO{1}$, of cluster leaders. This takes $\tildeBigO{1}$ rounds.

\paragraph{Matching parent and children cluster nodes.} Observe a cluster $C_i$. Recall that every node in $C_i$ knows all of the other nodes in $C_i$, and so they each computes a logical binary tree $\mathbb{T}_i$ of the nodes in $C_i$, with $r(C)$ as the root of $\mathbb{T}_i$. We desire for $\mathbb{T}_i$ to have exactly $2k/\tk$ nodes. As $k/\tk\leq{}|C_i|\leq{}2k/\tk$, then we just append more nodes from $C_i$ to $\mathbb{T}_i$, potentially repeating every node in $C_i$ twice in $\mathbb{T}_i$.

Let $C_i, C_j$ be two clusters whose leaders $r(C_i), r(C_j)$ are neighbors in the cluster tree $\mathbb{T}_C$ -- w.l.o.g., assume $r(C_i)$ is the parent of $r(C_j)$ in $\mathbb{T}_C$. It holds that $\mathbb{T}_i, \mathbb{T}_j$ have the same structure, as all these internal trees have the same number of nodes and are constructed virtually to have the same structure. Let $v_i \in \mathbb{T}_i, v_j \in \mathbb{T}_j$ be two nodes with the same position in their trees (same level of the tree, same index within the level). We now desire for $v_i$ and $v_j$ to be made aware of each other -- that is, to learn the identifiers of each other so that they can communicate over the global network.

We begin with $r(C_i), r(C_j)$, who are at the root of $\mathbb{T}_i, \mathbb{T}_j$, respectively. They already know the identifiers of each other, as that is guaranteed by the construction of $\mathbb{T}_C$. Let $L_i, R_i$ be the children of $r(C_i)$, and $L_j, R_j$, those of $r(C_j)$. Node $r(C_i)$ sends to $r(C_j)$ the identifiers $L_i, R_i$, and $r(C_j)$ sends $L_j, R_j$ to $r(C_i)$. Now, $r(C_i)$ sends to $L_i, R_i$ the identifiers $L_j, R_j$, and likewise $r(C_j)$ communicates with $L_j, R_j$. All of this takes $O(1)$ rounds using the global network, and can be done in parallel for any $r(C_i), r(C_j)$ which are neighbors in $\mathbb{T}_C$. 

Notice that now $L_i, L_j$ know both their identifiers, and likewise $R_i, R_j$. Thus, they each continue down their respective subtrees. As the trees have $O(\log n)$ depth, and each level of the trees takes $O(1)$ rounds to process, this takes $O(\log n)$ rounds in total.

\paragraph{Load balancing.} Each cluster $C_i$ uniformly distributes the tokens of its nodes within itself by \cref{lemma:load_balancing}. There are $k$ tokens in the graph, and so at most $k$ tokens in $C_i$. Further $|C_i|\geq{}k/\tk$, so $C_i$ can load balance its tokens such that each node has at most $\tk$ tokens. In total, this phase takes $\tildeBigO{\tk}$ rounds, because the weak diameter of $C_i$ is at most $4\log{n}\tk$.

\paragraph{Dissemination.} We now aim to gather all the tokens in the root cluster $C_r$ of the cluster tree $\mathbb{T}_C$. 
For $\tildeBigO{1}$ iterations, we send the tokens of each cluster up the cluster tree. Each node holds at most $\tk$ tokens and is matched with at most $2$ nodes in the parent cluster, 
so in $2\tk$ rounds we can send the tokens using the global communication network. In the beginning of each iteration, each cluster again load balances the tokens it received in the last iteration, by \cref{lemma:load_balancing}. This is done to prevent the case of a node holding more than $\tk$ tokens, because it could receive up to $\tk$ tokens from each child cluster. We then continue to send the tokens up to the root, which takes at most $\log{n}$ iterations by the depth of the cluster tree $\mathbb{T}_C$. Considering the load balancing at the beginning of each iteration, this step takes $\log{n} \cdot (\tk+4\tk\ceil{\log{n}})=\tildeBigO{\tk}$ rounds.

Now, the root cluster holds all of the tokens. It again load balances the tokens within its nodes, such that each node holds at most $\tk$ tokens. We now send down the tokens in the same manner, down the cluster tree. For $\tildeBigO{1}$ iterations, each node sends its at most $\tk$ tokens to its matched nodes in the at most $2$ children clusters, through the global network. Again we load balance at every iteration, to prevent accumulation of more than $\tk$ tokens in each node. This is necessary because the matching can match 2 nodes in $C_i$ to one node in $C_j$. Now each cluster holds all the tokens. Each node floods all tokens through the local network for $4\tk\ceil{\log{n}}$, the weak diameter of a cluster, making all nodes in its cluster learn all the tokens. This phase takes  $\log{n} \cdot(\tk+4\tk\ceil{\log{n}})+4\tk\ceil{\log{n}}=\tildeBigO{\tk}$ rounds.

We now solved \kdis, since every node in $G$ knows all of the tokens. Summing over all the phases, the algorithm takes $\tildeBigO{\tk}$ rounds.
\end{proof}

\subsection{Universally Optimal Aggregation}
\label{sec:optimal_broadcast;subsec:aggregation}

We now show a universally optimal solution for the \kagg problem, using a bidirectional reduction from \kdis.

\kaggLBUB*

\begin{proof}
    We show a bidirectional reduction from \kdis. Given a graph $G=(V,E)$, if there is an algorithm that solves \kagg in $t$ rounds, we can solve \kdis in $\tildeBigO{t}$ rounds, and vice versa. 

    First, if there is an algorithm solving \kagg in $t$ rounds, we can employ it to solve \kdis in $\tildeBigO{t}$ rounds. Intuitively, since we have $k$ tokens to disseminate and $k$ aggregation results that can be made globally known in $t$ rounds, we would like to place those $k$ tokens in different indices of the values, and have the rest of the nodes send the unit element in the rest of the indices. The only problem is that all of the nodes holding tokens need to coordinate in which indices each node should put its tokens, so they match the $k$ tokens to $k$ indices. This can be done by the following algorithm, operating in $\tildeBigO{1}$ rounds.

    First, we use \cref{lem:construct_tree_without_ids,thm:ncc_tree_pruning} to construct a tree of all nodes with at least one token to disseminate. After we get this tree $T$, we can in $\tildeBigO{1}$ rounds compute for each node $v$ how many tokens its subtree, including itself, holds. We denote it by $\ell(v)$. This is done by sending the information up from the last level of the tree, aggregating the number in each node, and sending it to the parent node. Then we begin allocating the indices, starting from the root.

    If the root holds $m$ tokens, it reserves the first $m$ indices for itself, and tells its first son it should start allocating from $m$, and to its second son, if exists, it should start allocating from $m+\ell$(first son). The root does continues in this fashion for all its children. The nodes lower in the tree continue in the same fashion. This creates a bijection of the $k$ tokens across the graph to the $k$ indices of aggregation, and correctly allocates all indices to each token-holding node. Then we can run the \kagg algorithm in $t$ rounds, and all nodes learn the $k$ tokens.
    In total, the algorithm takes $t+\tildeBigO{1}=\tildeBigO{t}$ rounds.
    This shows that $\tk$ is a universal lower bound for the \kagg problem.

    Conversely, we show that we can indeed solve \kagg in $\tildeBigO{\tk}$ rounds, deterministically. We note that once \emph{only one} node learns the results of all $k$ aggregate functions, we can disseminate it in $\tildeBigO{\tk}$ rounds by \cref{thm:optimal_dissemination}. We use similar steps to the proof of \cref{thm:optimal_dissemination}. First, we cluster the nodes using the same procedure. We then compute inside each disjoint cluster $k$ intermediate aggregations, and load balance it inside the cluster with \cref{lemma:load_balancing}. That way, each node holds at most $\tk$ aggregation results. Then we use the cluster tree and cluster chaining in the proof to send the intermediate aggregation results up the cluster tree to the root cluster. In each step, we load balance again. As the depth of the constructed cluster tree is at most $\tildeBigO{1}$, this process finishes in $\tildeBigO{\tk}$ rounds. Once all the information is stored in the root cluster, we flood it inside it and compute locally the final $k$ aggregation results. This step takes $\tildeBigO{\tk}$ rounds by the weak diameter of each cluster, which is at most $4\tk\log{n}$.

    Finally, we disseminate the $k$ aggregation results from some node in the root cluster to the entire graph, using \cref{thm:optimal_dissemination} in $\tildeBigO{\tk}$ rounds.
\end{proof}

\section{Applications}
\label{sec:applications}

\subsection{\texorpdfstring{$(1+\epsilon)$}{(1+eps)}-approximate APSP in Unweighted Graphs} 

We now prove \cref{thm:unweighted_apsp_approx}.

\unweightedAPSPApprox*

To do so, we require the novel $(1+\epsilon)$-approximate SSSP result from \cite{schneider2023thesis}. 

\begin{theorem}[Theorem 3.29 from \cite{schneider2023thesis}]
\label{thm:polylog_approx_sssp}
    A $(1+\epsilon)$-approximation of SSSP can be computed in $\tildeBigO{1/\epsilon^2}$ \whp in the \hybrid model.
\end{theorem}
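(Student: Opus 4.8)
The statement is cited verbatim from \cite{schneider2023thesis}, so the plan is to reconstruct its argument rather than re-derive the shortest-path machinery from scratch, and the cleanest route works in two layers. The top layer is the \emph{minor-aggregation} reduction of \cite{rozhovn2022undirected}: there, undirected $(1+\epsilon)$-approximate SSSP is solved using $\tildeBigO{1/\epsilon^2}$ rounds of an abstract minor-aggregation model, in which one repeatedly contracts edge subsets into supernodes and, in one round, each supernode (i.e.,~each connected component of the current contracted minor) learns an associative-commutative aggregate of the $\tildeBigO{1}$-word labels held by its constituent vertices, together with a round of ordinary message exchange across surviving edges. Taking this reduction as a black box, the only thing left to prove is that a single minor-aggregation round executes in $\tildeBigO{1}$ rounds of \hybrid; composing this with the $\tildeBigO{1/\epsilon^2}$ minor-aggregation rounds then yields the claimed bound.

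First I would dispatch the easy parts of a minor-aggregation round. Ordinary neighbor exchange over the surviving edges, and the contraction bookkeeping, are handled in $O(1)$ local rounds plus a constant number of \nccshort exchanges, since every supernode carries only $\tildeBigO{1}$ words. The substantive ingredient is the \emph{part-wise aggregation} primitive: given a partition of $V$ into vertex-disjoint connected parts, compute within each part the aggregate of its vertices' labels and deliver it back to every vertex of the part. When parts have small weak diameter this is immediate by flooding (as in \cref{lemma:load_balancing}), but in general a part may be spread across a distant region of $G$, so flooding alone is too slow.

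To realize part-wise aggregation in $\tildeBigO{1}$ rounds I would build, simultaneously for all parts, a low-depth overlay forest: one constant-degree, $\tildeBigO{1}$-depth tree per part, with neighboring tree nodes knowing each other's identifiers for global communication. This is the per-component analogue of \cref{lem:construct_tree_without_ids}, and I would adapt that construction (the overlay techniques of \cite{gmyr2017distributed}) to run on every part in parallel. The aggregate is then computed by convergecast up each tree and broadcast back down in $\tildeBigO{1}$ rounds. The high-probability qualifier enters here, through the hashing and random-relay steps used by the overlay construction to spread communication load.

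The main obstacle is congestion control under the \nccshort bandwidth limit: each node may send and receive only $O(\log n)$ messages per global round, yet a single vertex could lie in a part whose overlay tree routes a large amount of traffic through it, and many parts must be processed at once. Overcoming this is the crux. I would ensure the forest construction balances load so that each physical vertex serves as relay for only $\tildeBigO{1}$ tree edges across all parts combined---using balanced overlay sampling together with load balancing within parts in the spirit of \cref{lemma:load_balancing}---so that the total per-round message load at every node stays within the \nccshort budget. Given such a balanced forest, each minor-aggregation round costs $\tildeBigO{1}$ \hybrid rounds, and the theorem follows by composition with the reduction of \cite{rozhovn2022undirected}.
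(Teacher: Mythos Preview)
The paper does not prove this statement at all: it is imported verbatim as Theorem~3.29 of \cite{schneider2023thesis}, and the only accompanying text is the one-sentence remark that the cited algorithm ``relies heavily on the novel algorithm presented in \cite{rozhovn2022undirected}.'' There is nothing to compare your argument against beyond that remark.

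Your reconstruction is consistent with that remark and is a plausible outline of the cited proof: reduce $(1+\epsilon)$-SSSP to $\tildeBigO{1/\epsilon^2}$ rounds of the minor-aggregation model via \cite{rozhovn2022undirected}, and then simulate each minor-aggregation round in $\tildeBigO{1}$ \hybrid rounds by building, for every current supernode, a constant-degree polylogarithmic-depth overlay tree and doing convergecast/broadcast over it. Two small points are worth tightening. First, in this paper \cref{lem:construct_tree_without_ids} is deterministic, so the ``hashing and random-relay'' attribution for the \whp qualifier is misplaced; the randomness in \cref{thm:polylog_approx_sssp} comes from the SSSP algorithm of \cite{rozhovn2022undirected,schneider2023thesis} itself, not from the overlay construction. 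Second, your congestion concern is lighter than you suggest: because the parts in a minor-aggregation round are vertex-disjoint and connected in $G$, each physical node participates in exactly one overlay tree, so running the construction of \cref{lem:construct_tree_without_ids} independently inside every part already keeps the per-node \nccshort load within budget without any additional balancing.
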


We note that the algorithm of \cite{schneider2023thesis} relies heavily on the novel algorithm presented in \cite{rozhovn2022undirected}.

We now proceed to proving \cref{thm:unweighted_apsp_approx}. See \cref{alg:unweighted_apsp} for an overview of our algorithm.

\begin{proof}
    We begin by computing $\tk[n]$ in $\tildeBigO{\tk[n]}$ rounds using \cref{thm:tk_computation}, so that from now on we can assume all nodes know this value.
    We proceed by broadcasting the identifiers of all the nodes in $\tildeBigO{\tn}$ rounds using \cref{thm:optimal_dissemination}. From now on, we can assume we are in \hybrid instead of \hybridzero, and thus we are able to execute algorithms such as \cref{thm:polylog_approx_sssp}.
    We execute \cref{lem:clustering_with_weak_diam} with $k = n$, in $\tildeBigO{\tk[n]}$ rounds, to cluster the nodes such that we know a set $R$ of cluster leaders, each cluster $C$ has weak diameter at most $4\tn\ceil{\log{n}}$, and $n/\tk[n] \leq |C| \leq 2n/\tk[n]$.

    Now, observe that as the clusters are disjoint and each has size at least $n/\tn$, then we have at most $\tn$ clusters and as such $|R| \leq \tn$. Using \cref{thm:polylog_approx_sssp}, it is possible to compute $(1+\epsilon)$-approximate distances from all the nodes in $R$ to all the graph in $\tildeBigO{|R|/\epsilon^2} = \tildeBigO{\tn/\epsilon^2}$ rounds \whp. Denote the computed approximate distances by $\hat{d}$.

    Each node $v$ learns its $x = (4\tn\ceil{\log{n}})/\epsilon$ neighborhood, denoted $B_{x}(v)$. This takes $O(x) = O(\tn/\epsilon) = O(\tn/\epsilon^2)$ rounds. Then, every node $v$ broadcasts its closest node in $R$, denoted $c_v \in R$, and the unweighted distance $d(v, c_v)$. As each node broadcasts $O(1)$ messages, then using \cref{thm:optimal_dissemination}, this requires $\tildeBigO{\tn}$ rounds.

    Finally, each node $v$ approximates its distance to each node $w$ as follows. If $w \in B_x(v)$, then $v$ knows its exact distance to $w$, as the graph is unweighted, and thus sets $\delta(v, w) = d(v, w)$. Otherwise, $v$ sets $\delta(v, w) = \hat{d}(v, c_w) + d(w, c_w)$, where $c_w$ is the closest node in $R$ to $w$. Note that $v$ knows both $c_w$ and $d(w, c_w)$, as $w$ broadcasts these values in the previous step.

    We conclude the proof by showing that $\delta$ is a $(1+\epsilon)$ approximation of $d$.

    If $w\in{}B_{x}(v)$, $\delta(v,w)=d(v,w)$. Otherwise, $d(v,w)>x=4\tn\ceil{\log{n}}/\epsilon$, and $\delta = \hat{d}(v, c_w) + d(w, c_w)$. We begin by showing that $\delta(v, w) \geq d(v, w)$. As $\hat{d}(v,c_w)$ is a valid $(1+\epsilon)$-approximation, $\hat{d}(v,c_w)\geq{}d(v,c_w)$, and thus $\delta(v, w) \geq d(v, c_w) + d(w, c_w) \geq d(v, w)$, where the last inequality is due to the triangle inequality. We now bound $\delta(v,w)$ from above. It holds that $d(v,w)>4\tn\ceil{\log{n}}/\epsilon$ and $d(w,c_w)\leq{}4\tn\ceil{\log{n}}$, as the weak diameter of each cluster is at most $4\tn\ceil{\log{n}}$. Therefore, $d(w,c_w)\leq{}4\tn\ceil{\log{n}}<\epsilon\cdot{}d(v,w)$. As such, the following holds.
    \begin{align*}
        \delta(v,w)&=\hat{d}(v,c_w)+d(c_w,w)\leq{}(1+\epsilon)\cdot d(v,c_w)+d(w,c_w) \\
        &\leq{}(1+\epsilon)\cdot (d(v,w)+d(w,c_w))+d(w,c_w) \\
    &=(1+\epsilon)\cdot d(v,w)+(2+\epsilon)\cdot d(w,c_w) \\
    &<(1+\epsilon)\cdot d(v,w)+(2\epsilon+\epsilon^2)\cdot d(v,w) \\
    &=(1+3\epsilon+\epsilon^2)\cdot d(v,w)
    \overset{\epsilon'=3\epsilon+\epsilon^2}{=}(1+\epsilon')\cdot d(v,w)
    \end{align*}

    Note that we achieve a $(1+\epsilon')$ approximation, where $\epsilon' = 3\epsilon+\epsilon^2$. As $\epsilon \in (0, 1)$, then $\epsilon' < 4\epsilon$, and so it is possible to run all the above with $\epsilon/4$ to achieve the desired result.
\end{proof}

\begin{algorithm}
\caption{$(1+\epsilon)$-Approximate Unweighted APSP} \label{alg:unweighted_apsp}
        Compute $\tn$

        Broadcast the identifiers of all the nodes.
        
        Run \cref{lem:clustering_with_weak_diam} with $k = n$ to get the set of cluster leaders $R$.
        
        Run $(1+\epsilon)$-approx SSSP from each node in $R$. Denote computed distances by $\hat{d}$.
        
        Learn $x = (4\tn\ceil{\log{n}})/\epsilon$ neighborhood.

        Using \cref{thm:optimal_dissemination}, each node $v$ broadcasts its closest cluster leader $c_v \in R$ and $d(v,c_v)$.
        
        Node $v$ approximates its distance to any $w\in{V}$ by: 
        \[
        \delta(v,w)=
            \begin{cases}
                d(v,w), & w\in{}B_{x}(v)\\
                \hat{d}(v, c_w)+d(c_w,w), & \text{otherwise}
            \end{cases}
        \]
\end{algorithm}

\subsection{APSP Approximations in Weighted Graphs}

We show several algorithms for approximating APSP in weighted graphs. We begin with \cref{thm:firstWeightedAPSPApprox}.

\firstWeightedAPSPApprox*

We employ the well-known technique of computing a spanner -- a subgraph with fewer edges which maintains a good approximation of distances in the original graph.

\begin{theorem} [Corollary 3.16 in \cite{rozhovn2020polylogarithmic}]
\label{thm:spanner_construction}
    Let $G=(V,E,\omega)$ be a weighted graph. There exists a deterministic algorithm in \congest which computes a $(2k-1)$-stretch spanner of size $\widetilde{O}(kn^{1+1/k}\log n)$ in $\widetilde{O}(1)$ rounds.
\end{theorem}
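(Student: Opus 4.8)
The plan is to reproduce the known construction by following the iterative-clustering framework of Baswana--Sen, which builds a $(2k-1)$-spanner in $k$ levels while maintaining a partial clustering of $V$, and to replace its one randomized ingredient---independent sampling of clusters---by a deterministic selection driven by a network decomposition. At level $0$ every vertex is its own cluster, and the invariant I would maintain is that each level-$i$ cluster carries a rooted tree of depth at most $i$ whose edges all lie in the spanner. Tracking this invariant is exactly what yields the stretch bound: a retiring vertex at level $i$ attaches to an adjacent cluster of radius at most $i$ by a single cheapest edge, so the spanner detour it creates has weighted length at most $2i+1 \leq 2k-1$ times the replaced edge, giving multiplicative stretch $2k-1$ by the standard weighted Baswana--Sen analysis.

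At each level $i$ I would (a) select a subset of the current clusters to \emph{survive} to level $i+1$; (b) for every vertex $v$ whose cluster does not survive but which has a neighbor in some surviving cluster, add one incident spanner edge attaching $v$ to the closest such cluster and merge $v$ into it, increasing that cluster's radius by one; and (c) for every remaining (unattached) vertex $v$, add to the spanner one cheapest edge from $v$ to each distinct adjacent level-$i$ cluster and then retire $v$. Step (c) is where edges accumulate, and the size bound rests on ensuring a retiring vertex is adjacent to only $\bigO{n^{1/k}}$ distinct surviving-or-present clusters on average; summed over the $k$ levels this gives $\tildeBigO{k\,n^{1+1/k}}$ spanner edges. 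Since each level only grows clusters by one hop, the local operations cost $\bigO{i}=\bigO{k}$ rounds in \congest, so over the $k$ levels the total is $\bigO{k^2}\cdot\polylog n = \tildeBigO{1}$ in the regime $k=\polylog n$ relevant to our applications.

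The main obstacle is step (a): I must deterministically choose the surviving clusters so that simultaneously (i) the cluster count shrinks by roughly an $n^{1/k}$ factor, controlling the next level, and (ii) for most vertices the number of distinct adjacent clusters seen before a survivor is found stays $\bigO{n^{1/k}}$, controlling the edges charged in step (c). Independent sampling achieves both in expectation; to derandomize it I would define a potential equal to the expected number of spanner edges added and fix the per-cluster survival bits by the method of conditional expectations so that this potential never increases. The step that makes this run in $\polylog n$ rounds rather than sequentially is the deterministic network-decomposition machinery of Rozho\v{n}--Ghaffari: decompose the auxiliary cluster graph into $\polylog n$ color classes of low-diameter components, and within each component fix the bits greedily and in parallel, processing color classes one at a time so the global potential is monotone. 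The resulting deterministic selection meets the randomized size guarantee up to the extra $\log n$ overhead already reflected in the stated $\tildeBigO{k\,n^{1+1/k}\log n}$ bound, completing the construction.
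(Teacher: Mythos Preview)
The paper does not prove this statement; it is quoted as Corollary~3.16 of \cite{rozhovn2020polylogarithmic} and invoked as a black box in the proofs of \cref{thm:firstWeightedAPSPApprox} and \cref{theorem:secondWeightedAPSPApprox}. There is therefore nothing in the present paper to compare your sketch against.

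That said, your outline is a fair summary of how the cited corollary is actually obtained: Baswana--Sen iterative clustering in $k$ levels, with the one randomized step (independent survival of clusters) derandomized by the method of conditional expectations, parallelized via the deterministic network decomposition of Rozho\v{n}--Ghaffari. Your restriction to $k = \polylog n$ when justifying the $\tildeBigO{1}$ round bound is harmless, since for $k > \log n$ one has $n^{1/k} = O(1)$ and the construction is only meaningful for $k = O(\log n)$ anyway; in the paper's applications $k$ is either a constant or $\Theta(\log n / \log\log n)$.
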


To prove \cref{thm:firstWeightedAPSPApprox}, we execute \cref{thm:spanner_construction} and then broadcast the resulting spanner.

\begin{proof}
    We begin by broadcasting the identifiers of all the nodes in $\tildeBigO{\tn}$ rounds using \cref{thm:optimal_dissemination}. From now on, we can assume we are in \hybrid instead of \hybridzero, and thus execute algorithms such as \cref{thm:spanner_construction}.
    We execute \cref{thm:spanner_construction} with $k=\epsilon\log{n}/2$ and receive a spanner with $O(k\cdot{}n^{1+1/k}\cdot \log n)=\widetilde{O}(4^{1/\epsilon}\cdot{}n)$ edges. By \cref{lem:growth_of_tk}, $\tk[4^{1/\epsilon}n]=O(2^{1/\epsilon}\tn)$, and by \cref{thm:optimal_dissemination}, we make the spanner edges globally known in $\widetilde{O}(2^{1/\epsilon}\cdot{}\tn)$ rounds, and get a $(2k-1)=\epsilon\log{n}-1<\epsilon\log{n}$ approximation for APSP.
\end{proof}

We now desire to show the following.

\secondWeightedAPSPApprox*

Before we do so, we must introduce the well-known concept of skeleton graphs, first observed by \cite{ullman1990high}. A skeleton graph is constructed by sampling each node with probability $1/x$, for some value $x$. The main property is that given $u,v\in{V}$, there will be some shortest path $P_{u,v}$ between $u$ and $v$, there will be a sampled node every $\widetilde{O}(x)$ hops. We use two lemmas from \cite{augustine2020shortest,kuhn2020computing} for skeleton graphs in \hybrid.

\begin{lemma} [Lemma 4.2 in \cite{augustine2020shortest}]
\label{skeleton_prop}
    Let $M\subseteq{}V$ be a subset of nodes of $G=(V,E)$ obtained by sampling each node independently with probability at least $1/x$. There is a constant $\xi>0$, such that for every $u,v\in{V}$ with $\hop(u,v)\geq{}\xi{}x\ln{n}$, there is at least one shortest path $P$ from $u$ to $v$, such that any subpath $Q$ of $P$ with at least $\xi{}x\ln{n}$ nodes contains a node in $M$, \whp
\end{lemma}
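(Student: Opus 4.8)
The plan is to prove this as a standard hitting-set / union-bound argument over a fixed family of canonical shortest paths, exploiting the fact that the sampling of $M$ is independent of the graph structure. First I would, for every pair $u,v\in V$, fix in advance a single canonical shortest path $P_{u,v}$ (say, the lexicographically smallest shortest path under a fixed ordering of the vertices); crucially this choice is made using only $G$, so it is independent of the random set $M$. Writing $\ell=\xi x\ln n$ for the target subpath length, I would then observe that the bad event on a fixed path $P_{u,v}$ — namely that some subpath $Q$ with at least $\ell$ nodes avoids $M$ entirely — is equivalent to the existence of a window of \emph{exactly} $\ell$ consecutive vertices of $P_{u,v}$ none of which lies in $M$. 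Indeed, any longer $M$-free subpath contains an $M$-free window of $\ell$ nodes, and conversely such a window is itself a violating subpath; so it suffices to control windows of size exactly $\ell$.

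The core estimate is then elementary. For a fixed window $W$ of $\ell$ consecutive vertices, since each vertex is sampled into $M$ independently with probability at least $1/x$, the probability that $W\cap M=\emptyset$ is at most $(1-1/x)^{\ell}\le e^{-\ell/x}=e^{-\xi\ln n}=n^{-\xi}$. A shortest path in an $n$-vertex graph is simple and hence has at most $n$ vertices, so $P_{u,v}$ contains at most $n$ windows of length $\ell$; a union bound over these windows shows that $P_{u,v}$ has an $M$-free window with probability at most $n\cdot n^{-\xi}=n^{1-\xi}$. Taking a further union bound over the at most $n^2$ pairs $(u,v)$, the probability that any canonical path fails is at most $n^{2}\cdot n^{1-\xi}=n^{3-\xi}$. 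Choosing the constant $\xi\ge c+3$ for the desired exponent $c$ makes this at most $n^{-c}$, so every canonical path $P_{u,v}$ has the claimed property simultaneously \whp, which is exactly the statement.

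I would also flag two routine points that keep the argument honest. First, the hypothesis $\hop(u,v)\ge\xi x\ln n$ serves only to guarantee that $P_{u,v}$ actually contains at least one window of $\ell$ nodes, making the statement non-vacuous; for shorter pairs there are no subpaths of $\ell$ nodes, the conclusion holds trivially, and they contribute nothing to the union bound. Second, the inequality $(1-p)^{\ell}\le(1-1/x)^{\ell}$ for $p\ge1/x$ justifies replacing the true (possibly larger) sampling probability by the worst case $1/x$, and $1-y\le e^{-y}$ supplies the exponential bound; both need only $x\ge 1$, which holds since $1/x$ is a probability.

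The main thing to get right — and the only place a naive argument breaks — is the independence between the chosen path and the randomness: if one were allowed to select the path after seeing $M$, a union bound over the potentially exponentially many shortest paths per pair would be hopeless. Fixing the canonical path deterministically from $G$ alone sidesteps this entirely, after which the remainder is just the two nested union bounds together with the single-window tail estimate above.
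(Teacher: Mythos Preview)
The paper does not prove this lemma at all; it is quoted verbatim as Lemma 4.2 of \cite{augustine2020shortest} and used as a black box. Your proof is correct and is precisely the standard hitting-set argument underlying that cited result: fix canonical shortest paths deterministically, bound the probability that a single length-$\ell$ window misses $M$ by $(1-1/x)^{\ell}\le n^{-\xi}$, and union-bound over $O(n)$ windows per path and $O(n^2)$ pairs. There is nothing to compare against in the present paper, and your write-up matches the intended argument.
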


The following shows that distances between skeleton nodes in the skeleton graph are the same as distances between them in the original graph, and that we can construct a skeleton graph in \hybrid.

\begin{lemma} [Lemma C.2 in \cite{kuhn2020computing}]
\label{lem:createSkeleton}
    Let $\mathcal{S}=(V_{\mathcal{S}}, E_{\mathcal{S}})$ be a skeleton graph of a connected graph $G$ with $n$ nodes, by sampling each node of $G$ to $V_{\mathcal{S}}$ with probability at least $1/x$. The edges of $\mathcal{S}$ are $E_{\mathcal{S}}=\{\{u,v\} | u,v\in{}V_{\mathcal{S}}, \hop(u,v)\leq{}h\}$ (where $h:=\xi{}x\ln{n}$ is the parameter from \cref{skeleton_prop}), and edge weights $d^h(u,v)$ for $\{u,v\}\in{}E_{\mathcal{S}}$. Then $\mathcal{S}$ is connected and for any $u,v\in{}V_{\mathcal{S}}$ we have $d_G(u,v)=d_{\mathcal{S}}(u,v)$, \whp Further, $\mathcal{S}$ can be computed within $O(h) = \tildeBigO{1/x}$ rounds in \hybrid. 
\end{lemma}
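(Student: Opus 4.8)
The plan is to establish the three assertions of the lemma separately: (i) that $\mathcal{S}$ can be built in $O(h)$ rounds of \hybrid, (ii) that $d_{\mathcal{S}}(u,v)=d_G(u,v)$ for every pair of skeleton nodes \whp, and (iii) that $\mathcal{S}$ is connected \whp. The third assertion falls out of the second, so the real work lies in the construction and in the distance-preservation argument, for which \cref{skeleton_prop} is the engine. The easy half of (ii) and all of (iii) are essentially bookkeeping; the upper bound in (ii) is where the probabilistic content lives.

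For the construction, every node first flips an independent coin and joins $V_{\mathcal{S}}$ with probability (at least) $1/x$; this is purely local and costs no communication. To discover the incident edges and their weights I would run a hop-bounded distance propagation entirely in the local network: each skeleton node $s$ acts as a source, and for $i=1,\dots,h$ every node $v$ maintains $d^i(s,v)=\min\bigl(d^{i-1}(s,v),\ \min_{\{v,w\}\in E}(d^{i-1}(s,w)+\omega(\{v,w\}))\bigr)$, obtained by one relaxation round over its incident local edges. Because local edges carry unbounded bandwidth in \hybrid, a node may forward the current labels of all sources in a single message, so all sources propagate in parallel and there is no congestion. After $h$ rounds each node $v$ knows $d^h(s,v)$ for every skeleton source $s$ within $h$ hops; in particular each skeleton node $u$ learns exactly $\{v\in V_{\mathcal{S}}:\hop(u,v)\le h\}$ together with the weights $d^h(u,v)$, which is precisely its incident edge set in $\mathcal{S}$. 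This costs $O(h)$ rounds, matching the claimed bound, and uses only the local network.

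For distance preservation, the inequality $d_{\mathcal{S}}(u,v)\ge d_G(u,v)$ is immediate: every edge $\{a,b\}\in E_{\mathcal{S}}$ has weight $d^h(a,b)$, the weight of an actual $\le h$-hop walk between $a$ and $b$ in $G$, so any $u$--$v$ path in $\mathcal{S}$ maps to a $u$--$v$ walk in $G$ of equal total weight, whence $d_{\mathcal{S}}(u,v)\ge d_G(u,v)$. The reverse inequality is the crux. Fix skeleton nodes $u,v$ and a minimum-weight $u$--$v$ path $P$ in $G$. If $P$ uses at most $h$ hops then $\hop(u,v)\le h$, so $\{u,v\}\in E_{\mathcal{S}}$ with weight $d^h(u,v)=w(P)=d_G(u,v)$ and we are done. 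Otherwise $P$ is long, and I would invoke \cref{skeleton_prop} --- or, when $\hop(u,v)<h$ while $P$ is long, the sampling bound underlying it --- to guarantee that every window of $h$ consecutive nodes of $P$ contains a skeleton node \whp. Consequently the skeleton nodes lying on $P$, together with the endpoints $u,v$, form a sequence $u=s_0,s_1,\dots,s_\ell=v$ in which consecutive nodes are at most $h$ hops apart along $P$; each consecutive pair is joined by an edge of $\mathcal{S}$ whose weight $d^h(s_i,s_{i+1})$ is at most the weight of the corresponding subpath of $P$. Summing over segments yields $d_{\mathcal{S}}(u,v)\le \sum_i d^h(s_i,s_{i+1})\le w(P)=d_G(u,v)$, completing the equality.

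Connectivity is then a corollary: since $G$ is connected, $d_G(u,v)<\infty$ for all $u,v\in V_{\mathcal{S}}$, so the equality just proved gives $d_{\mathcal{S}}(u,v)<\infty$, i.e.\ every pair of skeleton nodes is mutually reachable in $\mathcal{S}$. I expect the main obstacle to be exactly the upper bound of (ii) in the regime where the minimum-weight path $P$ is far longer in hops than the hop-distance $\hop(u,v)$: there one must still force skeleton nodes to appear every $\le h$ hops \emph{along $P$}, which requires applying the sampling argument behind \cref{skeleton_prop} to the long path itself rather than merely to the endpoint pair, and then carrying the resulting failure probability through a union bound over all relevant shortest paths so that the guarantee holds simultaneously for every skeleton pair with high probability.
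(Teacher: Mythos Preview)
The paper does not prove this lemma; it is quoted from \cite{kuhn2020computing} and used as a black box, so there is no in-paper proof to compare against. Your proposal is the standard skeleton-graph argument and is correct. In particular, you correctly flag the one genuine subtlety: \cref{skeleton_prop} as stated only applies when $\hop(u,v)\ge h$ and only promises \emph{one} shortest path with the hitting property, so the case where the minimum-weight $u$--$v$ path is long in hops while $\hop(u,v)<h$ requires going back to the underlying Chernoff-plus-union-bound sampling argument applied directly to that path---which is exactly what you propose. (As an aside, the ``$\tildeBigO{1/x}$'' in the statement is a typo carried over by the paper; $O(h)=\tildeBigO{x}$.)
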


\begin{algorithm}
\caption{$(4\alpha - 1)$-Approximate Weighted APSP} \label{alg:polynomial_approx_apsp}
            Broadcast the identifiers of all the nodes.

            Compute $\tn$. Denote $t = n^{1/(3\alpha + 1)} \cdot (\tn)^{2/(3 + 1/\alpha)}$.
            
            Compute a skeleton graph $G_{\mathcal{S}}=(V_{\mathcal{S}}, E_{\mathcal{S}}, \omega_{\mathcal{S}})$ with sampling probability $1/t$.
            
            Compute a $(2\alpha-1)$-stretch spanner for $G_{\mathcal{S}}$, denoted $K$.

            Broadcast the edges of $K$. Locally compute a $(2\alpha - 1)$ approximation of distances between all nodes in $V_\mathcal{S}$, denoted $\hat{d}$.

            Learn $h = \xi{}t\ln{n}$ neighborhood, where $\xi$ is the constant from \cref{skeleton_prop}.
            
            Each node $v \in V$ denotes by $v_s \in V_\mathcal{S}$ the skeleton node in its $h$-neighborhood with minimal $d^h(v, v_s)$, and broadcasts $v_s$ and $d^h(v, v_s)$.
            
            Node $v$ approximates its distance to any $w\in{V}$ by:
            \[
            \delta(v,w)= \min\{d^h(v,w),d^h(v,v_s)+\hat{d}(v_s,w_s)+d^h(w_s,w)\}
            \]
\end{algorithm}

We now prove \cref{theorem:secondWeightedAPSPApprox}. See \cref{alg:polynomial_approx_apsp} for an overview of our algorithm.

\begin{proof}
    We begin by broadcasting the identifiers of all the nodes in $\tildeBigO{\tn}$ rounds using \cref{thm:optimal_dissemination}. From now on, we can assume we are in \hybrid instead of \hybridzero, and thus we are able to execute algorithms such as \cref{lem:createSkeleton}.
    We proceed to computing $\tk[n]$ in $\tildeBigO{\tk[n]}$ rounds using \cref{thm:tk_computation} and then denote $t = n^{1/(3\alpha + 1)} \cdot (\tn)^{2/(3 + 1/\alpha)}$. Note that throughout the following algorithm, we strive to achieve a round complexity of $\tildeBigO{\alpha \cdot t + \alpha \cdot \tn}$.
    
    Using \cref{lem:createSkeleton}, we compute a skeleton graph $G_{\mathcal{S}}=(V_{\mathcal{S}}, E_{\mathcal{S}}, \omega_{\mathcal{S}})$ with sampling probability $1/t$, in $\tildeBigO{t}$ rounds. Note that $|V_{\mathcal{S}}|=\tilde{\Theta}(n/t)$, \whp. We now create a $(2\alpha - 1)$ spanner of $G_{\mathcal{S}}$ using \cref{thm:spanner_construction}, denoted $K$ -- each round of \cref{thm:spanner_construction} is simulated over $G_{\mathcal{S}}$ using the local edges of $G$, and thus takes $\tildeBigO{t}$ rounds. As \cref{thm:spanner_construction} takes $\tildeBigO{1}$ rounds, our entire simulation takes $\tildeBigO{t}$ rounds.
    
    Due to \cref{thm:spanner_construction}, $K$ has $\tildeBigO{\alpha \cdot |V_{\mathcal{S}}|^{1+1/\alpha}} = \tildeBigO{\alpha \cdot (n/t)^{1+1/\alpha}}$ edges. Set $x = \max\{(n/t)^{1+1/\alpha}, n\}$ and compute $\tk[x]$ in $\tildeBigO{\tk[x]}$ rounds using \cref{thm:tk_computation}. Using \cref{thm:optimal_dissemination}, we can broadcast all of $K$ in $\tildeBigO{\alpha \cdot \tk[x]}$ rounds.
    
    We desire to show that $\tk[x] = O(t + \tn)$. If $x = n$, then trivially $\tk[x] = \tn$. Otherwise, 
    \begin{align*}
        x &= (n/t)^{1+1/\alpha} = n^{1+1/\alpha} \cdot t^{-(1 + 1/\alpha)} \\
        &= n^{1+1/\alpha} \cdot (n^{1/(3\alpha + 1)} \cdot (\tn)^{2/(3 + 1/\alpha)})^{-(1+1/\alpha)} \\
        &= n^{1+1/\alpha - (1+1/\alpha)/(3\alpha + 1)} \cdot (\tn)^{-(2+2/\alpha)/(3+1/\alpha)} \\
        &= n^{((\alpha + 1)/\alpha)\cdot(1-1/(3\alpha + 1))} \cdot (\tn)^{-(2\alpha + 2)/(3\alpha + 1)} \\
        &= n^{((\alpha + 1)/\alpha)\cdot(3\alpha/(3\alpha + 1))} \cdot (\tn)^{-(2\alpha + 2)/(3\alpha + 1)} \\
        &= n^{3\cdot(\alpha + 1)/(3\alpha + 1)} \cdot (\tn)^{-(2\alpha + 2)/(3\alpha + 1)} \\
        &= n^{1 + 2/(3\alpha + 1)} \cdot (\tn)^{-(2\alpha + 2)/(3\alpha + 1)}. 
    \end{align*}
    
    Due to \cref{lem:growth_of_tk}, $\tk[x] = O(\sqrt{x/n} \cdot \tn)$, and so
    \begin{align*}
        \tk[x] &= O(\sqrt{x/n} \cdot \tn) \\
        &= O(\sqrt{n^{2/(3\alpha + 1)} \cdot (\tn)^{-(2\alpha + 2)/(3\alpha + 1)}} \cdot \tn) \\
        &= O(n^{1/(3\alpha + 1)} \cdot (\tn)^{-(\alpha + 1)/(3\alpha + 1)} \cdot \tn) \\
        &= O(n^{1/(3\alpha + 1)} \cdot (\tn)^{2\alpha/(3\alpha + 1)}) \\
        &= O(n^{1/(3\alpha + 1)} \cdot (\tn)^{2/(3 + 1/\alpha)}) \\
        &= O(t). 
    \end{align*}

    Thus, in either case, $\tk[x] = O(t + \tn)$. Therefore using \cref{thm:optimal_dissemination} we can broadcast all the edges in $K$ in $\tildeBigO{\alpha\cdot\tk[x]} = \tildeBigO{\alpha\cdot(t + \tn)}$ rounds. Using this information, each node locally computes a $(2\alpha - 1)$ approximation to the distances in $G_{\mathcal{S}}$.

    Next, every node learns its $h = \xi{}t\ln{n}$ neighborhood in $\tildeBigO{t}$ rounds, where $\xi$ is the constant from \cref{skeleton_prop}. Due to \cref{skeleton_prop}, every node $v$ sees at least one skeleton node in its $h$-neighborhood. Thus, $v$ denotes by $v_s \in V_\mathcal{S}$ the skeleton node in its $h$-neighborhood with minimal $d^h(v, v_s)$, and broadcasts $v_s$ and $d^h(v, v_s)$. This takes $\tildeBigO{\tn}$ rounds, due to \cref{thm:optimal_dissemination}, as every node broadcasts $O(1)$ values.

    Finally, node $v$ approximates its distance to any node $w$ by $\delta(v,w)= \min\{d^h(v,w),d^h(v,v_s)+\hat{d}(v_s,w_s)+d^h(w_s,w)\}$. It remains to show that this is a $(4\alpha-1)$ approximation.

    Let $v,w\in{V}$. If there exists a shortest path between them of less than $h$ hops, then $\delta(v,w)=d^h(v,w)=d(v,w)$. Otherwise, all shortest paths are longer than $h$ hops, and by \cref{skeleton_prop} there exists a skeleton node $s$ on one of them. Further, w.l.o.g., $s$ is in the $h$-neighborhood of $v$. As $s$ sits on a shortest path from $v$ to $w$, then it also holds that $d^h(v, s) = d(v, s)$. Finally, it holds that $d^h(w, w_s) \leq d(w, s)$ -- this is true as either there is a shortest path from $w$ to $s$ with at most $h$ hops, in which case $d^h(w, w_s) \leq d^h(w, s) = d(w, s)$, or, there is a path from $w$ to $s$ with a skeleton node $s'$ on it that is also in the $h$-hop neighborhood of $w$, in which case $d^h(w, w_s) \leq d^h(w, s') \leq d(w, s)$. Using all of these, we show the following.
    \begin{align*}
        \delta(v,w) &= d^h(v,v_s)+\hat{d}(v_s,w_s)+d^h(w_s,w)
    \leq{} d^h(v,v_s) +(2\alpha-1)d(v_s,w_s) + d^h(w_s,w)\\
    &\leq{} d^h(v,v_s) + (2\alpha-1)(d(v_s,v)+d(v,w)+d(w,w_s)) + d^h(w_s,w) \\
    &\leq{} d^h(v,v_s) + (2\alpha-1)(d^h(v_s,v)+d(v,w)+d^h(w,w_s)) + d^h(w_s,w) \\
    &\leq{} (2\alpha-1)d(v,w) + 2\alpha(d^h(v,v_s)+d^h(w_s,w)) \\
    &\leq{} (2\alpha-1)d(v,w) + 2\alpha(d^h(v,s)+d^h(w_s,w)) \\
    &=(2\alpha-1)d(v,w) + 2\alpha(d(v,s)+d^h(w_s,w)) \\
    &\leq{} (2\alpha-1)d(v,w) + 2\alpha(d(v,s)+d(s,w)) \\
    &=(2\alpha-1)d(v,w) + 2\alpha\cdot{}d(v,w) \\
    &=(4\alpha-1)d(v,w)
    \end{align*}

    Note that the approximation never underestimates, i.e.,~$\delta(v, w) \geq d(v, w)$, as it corresponds to actual paths, and thus we are done. 
\end{proof}

\subsection{Approximating Cuts via Learning Spectral Sparsifiers}

We apply our universally optimal broadcasting to different sparsification tools, and in particular cut sparsifiers. We employ the following result from the \congest model. 

\begin{theorem}[Theorem 5 in \cite{koutis2016simple}, rephrased]
\label{thm:congest_cut_sparsifier}
    There is a \congest algorithm, that given a graph $G=(V,E,\omega)$ and any $\epsilon>0$, computes a graph $H=(V, \hat{E}, \hat{\omega})$ such that for any cut $S\subset{}V$ it holds that $(1-\epsilon)\text{cut}_H(S)\leq{}\text{cut}_G(S)\leq{}(1+\epsilon)\text{cut}_H(S)$ and $|\hat{E}|=\tildeBigO{n/\epsilon^2}$ \whp. The algorithm runs in $\tildeBigO{1/\epsilon^2}$ round \whp.
\end{theorem}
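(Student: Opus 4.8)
The plan is to prove the theorem via the recursive ``spanner-bundle'' spectral sparsification scheme and then observe that a spectral sparsifier is in particular a cut sparsifier. Recall the effective-resistance sampling framework: if one samples each edge $e$ independently, keeping it with probability $p_e \ge \min\{1, C\epsilon^{-2}\log n \cdot \tau_e\}$ and reweighting a kept edge by $1/p_e$, where $\tau_e = \omega_e R_{\mathrm{eff}}(e)$ is the leverage score (effective resistance times weight), then the result is an $\epsilon$-spectral approximation of $G$ with high probability and has $\tildeBigO{n/\epsilon^2}$ edges. Since a spectral $\epsilon$-approximation $H$ satisfies $(1-\epsilon)x^\top L_H x \le x^\top L_G x \le (1+\epsilon) x^\top L_H x$ for all $x$, evaluating the quadratic form on the $\{0,1\}$ indicator vector of a cut $S$ yields exactly the stated cut guarantee, so it suffices to build a spectral sparsifier in \congest.

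The difficulty in a distributed setting is that leverage scores are global quantities that are expensive to compute exactly, and I would sidestep this using $t$-bundle spanners. Define a $t$-bundle $B$ as the union of $t$ edge-disjoint spanners, obtained by computing a spanner, deleting its edges, computing a spanner of the remainder, and repeating $t$ times. The key lemma I would prove is that every edge $e=(u,v)$ outside $B$ has a small leverage score: each of the $t$ edge-disjoint spanners routes a low-stretch $u$-$v$ path, and $t$ parallel such paths drive the effective resistance $R_{\mathrm{eff}}^B(e)$, hence $R_{\mathrm{eff}}^G(e)$, down by a factor $t$, giving $\tau_e \le \rho/t$ for the stretch parameter $\rho = \tildeBigO{1}$. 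Choosing $t = \Theta(\rho (\epsilon')^{-2}\log n)$ with per-level budget $\epsilon' = \epsilon/\log n$ then makes the uniform sampling rate $p_e = 1/2$ spectrally safe for all non-bundle edges. The algorithm keeps $B$, samples $G \setminus B$ at rate $1/2$ while doubling the kept weights, and recurses on the sampled subgraph.

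For the accounting I would argue two invariants. Error: each recursion level perturbs the quadratic form multiplicatively by $(1\pm\epsilon')$, and these compose over the $O(\log n)$ levels to $(1\pm\epsilon)$ overall, so the final graph is an $\epsilon$-spectral sparsifier. Size and rounds: each level removes a constant fraction of the non-bundle edges and adds a bundle of $t\cdot\tildeBigO{n} = \tildeBigO{n/\epsilon^2}$ edges, so after $O(\log n)$ levels the total is $\tildeBigO{n/\epsilon^2}$; in \congest a single spanner costs $\tildeBigO{1}$ rounds, a $t$-bundle costs $t\cdot\tildeBigO{1} = \tildeBigO{1/\epsilon^2}$ rounds, sampling and reweighting are local, and over $O(\log n)$ levels this totals $\tildeBigO{1/\epsilon^2}$ rounds. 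The two endpoints of each edge must agree on every coin flip; I would enforce this by letting the lower-identifier endpoint toss the coin and transmit the outcome, or by seeding a shared hash on the endpoint pair.

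The main obstacle is the spectral core: proving the bundle lemma, that $t$ edge-disjoint low-stretch paths certify leverage score $O(\rho/t)$ via the parallel-resistance bound, and verifying that uniform downsampling of the certified-light edges is a legitimate effective-resistance sampling step, and then showing the multiplicative errors telescope cleanly across the $O(\log n)$ recursion levels. The distributed components, namely edge-disjoint spanner extraction and locally consistent sampling, are comparatively routine once the spanner primitive is available.
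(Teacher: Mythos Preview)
This theorem is not proved in the paper at all: it is simply quoted as Theorem~5 of \cite{koutis2016simple} and used as a black box, so there is no ``paper's own proof'' to compare your proposal against. Your sketch is in fact a faithful outline of the Koutis--Xu argument itself (the $t$-bundle of edge-disjoint spanners certifying uniformly small leverage scores off the bundle, uniform $1/2$-sampling of the remainder, $O(\log n)$ levels of recursion with per-level error $\epsilon/\log n$), and the round and size accounting you give is the standard one. So your proposal is correct, but be aware that for the purposes of this paper no proof was expected or supplied.
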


We note that we can run this algorithm in \hybridzero with unknown identifiers in $\tildeBigO{\tn+1/\epsilon^2}$ rounds, since we can broadcast all identifiers in $\tildeBigO{\tn}$ rounds by \cref{thm:optimal_dissemination} and proceed to run the algorithm as usual.
Now we easily get \cref{theorem:minCutApprox}.

\minCutApprox*

\begin{proof}
    We run \cref{thm:congest_cut_sparsifier} in $\tildeBigO{\tn+1/\epsilon^2}$ and get a cut sparsifier with $\tildeBigO{n/\epsilon^2}$ edges \whp. By \cref{lem:growth_of_tk} and \cref{thm:optimal_dissemination} we can broadcast the sparsifier in $\tildeBigO{\tn/\epsilon}$ rounds, and each node can compute any cut approximation locally. 
\end{proof}

\section{Estimation of \texorpdfstring{$\tk$}{Tk} on different graph families}
\label{sec:graph_families}
\subsection{Path and cycle graphs}
\label{section:path_graphs}
Arguably, the simplest graph to consider and compute $\tk$ on is the path graph on $n$ nodes $P_n$. By definition, $\tk=\max_{v\in{}V}\{\tk(v)\}$, so it suffices to identify $\argmax_{v\in{}V}\tk(v)$ and compute its $\tk(v)$ value. The $\tk(v)$ value roughly corresponds to the size of the neighborhood of $v$ that has the bandwidth to receive and forward $k$ messages from the rest of the graph in $\tk(v)$ rounds. If the neighborhood is sparser, then $\tk(v)$ will be higher. In a path graph, the corner nodes have the highest $\tk(v)$ value. We formalize this with the following.
\begin{lemma}
\label{thm:tk_nodes_ineq}
    Let $v,w\in{V}$. If for all $t\in{}\mathbb{N}^+$ $|B_t(v)|\leq{}|B_t(w)|$, then $\tk(v)\geq{}\tk(w)$.
\end{lemma}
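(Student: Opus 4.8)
The plan is to work directly from \cref{def:bcast_quality} and to translate the hypothesis on ball sizes into a statement about which radii $t$ are ``good'' for the threshold condition $\size{B_t(\cdot)}\geq k/t$. The key reformulation I would establish first is the following elementary equivalence: for any node $u$ and any integer $s$ with $1\leq s\leq D$, one has $\tk(u)\geq s$ if and only if $\size{B_t(u)}< k/t$ for every integer $t$ with $1\leq t<s$. This is immediate by unfolding \cref{def:bcast_quality}: since $\tk(u)$ is the minimum of the set $\{t : \size{B_t(u)}\geq k/t\}\cup\{D\}$, we have $\tk(u)\geq s$ exactly when no radius $t<s$ lies in the first set (and $D\geq s$, which holds by assumption on $s$).

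With this equivalence in hand, the argument is short. Set $t^{\star}=\tk(w)$; note $t^{\star}\leq D$ always, so the reformulation applies at $s=t^{\star}$. Applying it to the node $w$: because $\tk(w)$ is the \emph{minimum} radius attaining the threshold, every $t$ with $1\leq t<t^{\star}$ fails it, i.e.\ $\size{B_t(w)}< k/t$. Now I invoke the hypothesis $\size{B_t(v)}\leq\size{B_t(w)}$ for all $t\in\mathbb{N}^{+}$, which yields $\size{B_t(v)}\leq\size{B_t(w)}<k/t$ for every $t<t^{\star}$ as well. Applying the reformulation in the converse direction to the node $v$ (again using $t^{\star}\leq D$) gives $\tk(v)\geq t^{\star}=\tk(w)$, which is exactly the claim.

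The only bookkeeping point is the cap at $D$ appearing in \cref{def:bcast_quality}, and I would address it explicitly only to note that it is not a special case: when $t^{\star}=\tk(w)=D$, the computation above still shows $\size{B_t(v)}<k/t$ for all $t<D$, so the ``good'' set for $v$ contains no radius below $D$ and hence $\tk(v)=D=\tk(w)$. I do not anticipate a genuine obstacle here; the lemma is essentially the observation that $u\mapsto\tk(u)$ is monotone non-increasing with respect to the pointwise order on the ball-size sequences $\bigl(\size{B_t(u)}\bigr)_{t\geq 1}$, and the proof is a two-line unfolding of the definition once that order-theoretic statement is isolated.
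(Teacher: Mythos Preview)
Your proof is correct and follows essentially the same approach as the paper: both directly unfold \cref{def:bcast_quality} to observe that if $|B_t(v)|\leq|B_t(w)|$ for all $t$, then the minimal $t$ satisfying $|B_t(\cdot)|\geq k/t$ can only be larger for $v$. Your version is simply more explicit about the equivalence and the $D$ cap, but the argument is the same.
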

\begin{proof}
    It follows directly from \cref{def:bcast_quality}. If for any $t$, $|B_t(v)|\leq{}|B_t(w)|$, then the minimal $t$ which satisfies the equation in the definition $|B_t(v)|\geq{}k/t$ is higher for $v$, therefore $\tk(v)\geq{}\tk(w)$.
\end{proof}
\begin{corollary}
    Let $v,w\in{V}$. If $B_{\tk(v)}(v)$ is isomorphic to $B_{\tk(v)}(w)$, then $\tk(w)=\tk(v)$.
\end{corollary}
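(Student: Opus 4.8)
The plan is to extract from the ball isomorphism an equality of the sizes of all smaller concentric balls, and then read off the conclusion directly from \cref{def:bcast_quality}. Write $t^{\ast} = \tk(v)$ and let $\phi$ be the given isomorphism between the induced subgraphs on $B_{t^{\ast}}(v)$ and $B_{t^{\ast}}(w)$; as is standard for ball isomorphisms, I take $\phi$ to send the center $v$ to the center $w$.

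First I would show $|B_t(v)| = |B_t(w)|$ for every $0 \le t \le t^{\ast}$. The key observation is that for a node $x$ with $\hop(v,x) \le t^{\ast}$, every shortest $v$--$x$ path in $G$ has all of its vertices within distance $t^{\ast}$ of $v$, so it lies inside $B_{t^{\ast}}(v)$; hence hop-distances from $v$ measured in the induced subgraph on $B_{t^{\ast}}(v)$ coincide with those in $G$, and likewise around $w$. Since $\phi$ is an isomorphism of these induced subgraphs with $\phi(v) = w$, it preserves these distances, so $\hop(v,x) = \hop(w,\phi(x))$ for all $x \in B_{t^{\ast}}(v)$; consequently $\phi$ restricts to a bijection from $B_t(v)$ onto $B_t(w)$ for each $t \le t^{\ast}$, which gives the size equality.

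Then I would conclude $\tk(w) = t^{\ast}$ using \cref{def:bcast_quality}, splitting on whether $t^{\ast}$ equals the diameter. Upper bound: if $t^{\ast} = D$ then $\tk(w) \le D = t^{\ast}$ trivially; if $t^{\ast} < D$, then $t^{\ast}$ must belong to $\{t : |B_t(v)| \ge k/t\}$ (it is the minimum of the defining set but is not equal to $D$), so $|B_{t^{\ast}}(w)| = |B_{t^{\ast}}(v)| \ge k/t^{\ast}$ and hence $\tk(w) \le t^{\ast}$. Lower bound: put $s = \tk(w)$, so $s \le t^{\ast} \le D$; if $s < D$ then $|B_s(w)| \ge k/s$, and since $s \le t^{\ast}$ also $|B_s(v)| = |B_s(w)| \ge k/s$, forcing $\tk(v) \le s$, i.e., $t^{\ast} \le s$; if instead $s = D$ then $s = t^{\ast}$ already. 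Either way $s = t^{\ast}$, so $\tk(w) = \tk(v)$.

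I do not expect a real obstacle. The only step that needs a little care is the first one --- checking that a center-preserving isomorphism of the radius-$t^{\ast}$ balls really does equalize the sizes of all smaller concentric balls --- and this reduces to the remark that induced-subgraph distances from the center agree with true distances up to radius $t^{\ast}$. The rest is routine bookkeeping with \cref{def:bcast_quality}. One could instead try to feed the size equality into \cref{thm:tk_nodes_ineq}, but that lemma requires control of \emph{all} radii whereas the isomorphism only controls radii up to $t^{\ast}$, so the direct argument above is cleaner.
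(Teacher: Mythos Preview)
Your argument is correct. The paper states this as a bare corollary with no proof, placed immediately after \cref{thm:tk_nodes_ineq}, so the implicit intent is that it is ``obvious'' from the same reasoning that proved that lemma. Your write-up makes that reasoning explicit and handles the edge case $t^\ast = D$ carefully.

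Your closing remark is also on point: \cref{thm:tk_nodes_ineq} as stated requires $|B_t(v)|\le|B_t(w)|$ for \emph{all} $t\in\mathbb{N}^+$, whereas the ball isomorphism only controls radii up to $t^\ast=\tk(v)$, so one cannot simply invoke the lemma in both directions. What actually transfers is the \emph{proof} of the lemma (which only ever uses radii up to the relevant $\tk$ value), and that is exactly what you reproduce directly from \cref{def:bcast_quality}. So your route is not really different in spirit from what the paper has in mind; it is just the honest version of the argument once one notices that the lemma's hypothesis is formally stronger than what is available here.
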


The condition of \cref{thm:tk_nodes_ineq} obviously holds for the corner nodes $r,l\in{}P_n$ with respect to any other $v\in{P_n}$, since the balls around them only grow to one side. Therefore it suffices to compute the simpler values $\tk(r)$ or $\tk(l)$.

We now compute $\tk(r)$, where $r$ is the rightmost node of the path graph $P_n$: 
\begin{lemma}
    For the path graph on $n$ nodes, $\tk=\begin{cases}
        \Theta{}(\sqrt{k}) & k=O(n^2) \\
        D & k=\Omega{}(n^2)
    \end{cases}$.
\end{lemma}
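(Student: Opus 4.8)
The plan is to compute $\tk(r)$ directly using \cref{def:bcast_quality}, where $r$ is the rightmost node of $P_n$, since by \cref{thm:tk_nodes_ineq} the corner nodes maximize $\tk(v)$ and hence $\tk(P_n) = \tk(r)$. The key observation is that for the corner node $r$, the ball $B_t(r)$ consists exactly of the $t+1$ rightmost nodes (as long as $t < n$), so $\size{B_t(r)} = \min\{t+1, n\}$. Thus $\tk(r)$ is the smallest $t$ satisfying $\min\{t+1,n\} \geq k/t$, capped at $D = n-1$.

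First I would handle the regime $k = O(n^2)$. Here I expect the relevant $t$ to be small enough that $t+1 < n$, so the defining inequality becomes $t+1 \geq k/t$, i.e.\ $t^2 + t \geq k$, i.e.\ $t(t+1) \geq k$. The smallest integer $t$ satisfying this is $t = \Theta(\sqrt{k})$ — more precisely $t = \ceil{(-1+\sqrt{1+4k})/2} = \Theta(\sqrt{k})$. I should double-check that for this value of $t$ we indeed have $t + 1 \leq n$ (which holds precisely when $k = O(n^2)$, matching the case hypothesis), so that the cap at $D$ does not kick in and the minimum $\min\{t+1,n\}$ is genuinely $t+1$; and also that $t = \Theta(\sqrt k)$ is not somehow forced down to $D$ because $D < \sqrt k$, which again is ruled out by $k = O(n^2)$.

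Next I would handle $k = \Omega(n^2)$. In this regime, for \emph{every} $t < n$ we have $\size{B_t(r)} = t+1 \leq n < k/n \leq k/t$ (using $t < n$), so the set $\{t \mid \size{B_t(r)} \geq k/t\}$ contains no value smaller than $D$; hence the minimum in \cref{def:bcast_quality} is achieved by the $\{D\}$ term, giving $\tk = D$. I would also note that for intermediate $t$ one could have equality only at $t$ close to $n$, but since $\tk$ is defined as a minimum over a set that already includes $D$, it suffices to argue no smaller $t$ qualifies.

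\textbf{The main obstacle} I anticipate is getting the boundary between the two cases exactly right and consistent with the stated $\Theta(\cdot)$ versus $D$ dichotomy: one has to be careful about whether at the threshold $k \asymp n^2$ the value $\sqrt k$ and the value $D = n-1$ coincide up to constants (they do), so the two formulas agree there and the case split is well-posed. A secondary, purely bookkeeping point is the off-by-one in $\size{B_t(r)} = t+1$ versus $t$, and confirming $\tk(l) = \tk(r)$ by the symmetry of $P_n$ so that $\max_{v} \tk(v) = \tk(r)$ via \cref{thm:tk_nodes_ineq}; neither of these is conceptually hard, just needs care. The cycle case $C_n$ would follow the same template with $\size{B_t(v)} = \min\{2t+1, n\}$ for every vertex (by vertex-transitivity every node is extremal), changing only constants and hence giving the same $\Theta(\sqrt k)$ / $D$ answer with $D = \floor{n/2}$.
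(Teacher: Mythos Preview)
Your proposal is correct and follows essentially the same approach as the paper: compute $|B_t(r)|=t+1$ for the corner node, reduce \cref{def:bcast_quality} to the quadratic inequality $t^2+t\geq k$, and observe that this has a solution $t\leq D=n-1$ precisely when $k=O(n^2)$, yielding $\tk=\Theta(\sqrt{k})$ there and $\tk=D$ otherwise. You are in fact more careful than the paper about the $\min\{t+1,n\}$ cap and the boundary behavior, but the argument is the same.
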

\begin{proof}
    $|B_t(r)|=t+1$, and by \cref{def:bcast_quality} we get the equation to find the minimum $t\leq{}D=n-1$ such that $t+1\geq{}k/t \implies t^2+t\geq{}k$, which can hold only if $k\leq{}D^2+D=O(n^2)$. Otherwise, the asymptotic solution is $t=\Theta{}(\sqrt{k})$.
\end{proof}

\begin{corollary}
    For a cycle graph $C_n$, the value $\tk$ is the same as for $P_n$, because $|B_t(v)|$ around any $v\in{}V$ is of size $2t+1$, thus the asymptotic solution stays the same. 
\end{corollary}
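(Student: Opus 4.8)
The plan is to deduce the cycle case from the path case just established, exploiting the vertex-transitivity of $C_n$. First I would record the ball sizes: for any vertex $v$ of $C_n$, $B_t(v)$ consists of $v$ together with the $t$ vertices on each side of it, so $\size{B_t(v)}=\min\{2t+1,\,n\}$, independent of $v$; moreover the eccentricity of every vertex equals $D=\floor{n/2}$. Since in \cref{def:bcast_quality} both the condition $\size{B_t(v)}\ge k/t$ and the fallback value $D$ are then the same for every $v$, the value $\tk(v)$ is the same for every $v$ (equivalently, \cref{thm:tk_nodes_ineq} applies in both directions). Hence $\tk(C_n)=\max_v\tk(v)$ is just this common value, so — unlike in the path case, where \cref{thm:tk_nodes_ineq} is needed to single out a corner node as the maximiser — no ``worst node'' has to be identified for the cycle.

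Next I would rerun the computation of the path lemma with $\size{B_t(v)}=\min\{2t+1,n\}$ in place of $t+1$. For $1\le t\le D$ one has $\size{B_t(v)}=\bigTheta{t}$ (it equals $2t+1$ for $t<D$, and at $t=D$ the cap replaces it by $n=\bigTheta{D}$), so the defining inequality $\size{B_t(v)}\ge k/t$ of \cref{def:bcast_quality} reads $\bigTheta{t}\ge k/t$, whose least positive integer solution is $t=\bigTheta{\sqrt{k}}$. If this $t$ is at most $D$ — equivalently $k=\bigO{D^2}=\bigO{n^2}$ — the minimum in \cref{def:bcast_quality} is not attained by $D$ and $\tk(C_n)=\bigTheta{\sqrt{k}}$; otherwise, when $k=\bigOmega{n^2}$, no admissible $t<D$ exists, the minimum is attained by $D$, and $\tk(C_n)=D=\bigTheta{n}$.

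Finally I would compare with $P_n$, whose lemma gives $\tk(P_n)=\bigTheta{\sqrt{k}}$ for $k=\bigO{n^2}$ (from $t(t+1)\ge k$ at a corner, with $D=n-1$) and $\tk(P_n)=D=\bigTheta{n}$ for $k=\bigOmega{n^2}$. These are the same two regimes, with the same threshold $k=\bigTheta{n^2}$ and the same asymptotic values, so ``the value $\tk$ is the same as for $P_n$'' holds as an asymptotic statement. The two are not literally equal only in the constant of the large-$k$ regime ($D=n-1$ for $P_n$ versus $D=\floor{n/2}$ for $C_n$), which is exactly why the corollary and its one-line justification speak of the asymptotic solution. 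I expect no real obstacle here: the argument is a one-line symmetry observation followed by the same quadratic-solving as in the path lemma, and the only thing requiring care is bookkeeping the cap $\min\{2t+1,n\}$ for $t$ near $D$ and being explicit that agreement with $P_n$ is meant up to constant factors.
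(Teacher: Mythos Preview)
Your proposal is correct and follows the same idea as the paper: the paper's entire justification is the one line embedded in the corollary statement itself, namely that $|B_t(v)|=2t+1$ for every vertex of $C_n$, so the quadratic from the path lemma is unchanged asymptotically. You have simply fleshed this out with the vertex-transitivity observation, the cap $\min\{2t+1,n\}$, and the explicit comparison of diameters, all of which are more careful than the paper but not a different approach.
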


\subsection{Square grid graphs}

We now turn to computing $\tk$ for square grids in some dimension $d$, that is $n = m^d$. By \cref{thm:tk_nodes_ineq}, it suffices to look at the corner nodes in order to compute $\tk$, since they have smaller neighborhoods compared to other nodes. As a warmup, when $d=2$, $|B_r(v)|=|B_{r-1}(v)|+r+1$, since we add another sub-diagonal with every radius expansion. By induction we get $|B_r(v)|=\sum_{i=0}^{r}i+1=\sum_{i=1}^{r+1}i=\frac{(r+1)(r+2)}{2}=\frac{r^2+3r+2}{2}$. 
By substituting this into \cref{def:bcast_quality}, we get that $\tk=\Theta{}(k^{1/3})$ or $\tk = D$.
We generalize this argument in the following statements.

\begin{lemma}
\label{lemma:square_grid_ring}
    Let $G=(V,E)$ be a $d$-dimensional square grid, with $|V|=n=m^d$. Let $w$ be a corner node, and let $N_r(w)=\{v \mid d(v,w)=r\}$, i.e the set of nodes at distance exactly $r$ from the corner $w$. If $r < n^{1/d}$, then $|N_r(w)|=\binom{r+d-1}{d-1}$.
\end{lemma}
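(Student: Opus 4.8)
The plan is to put coordinates on the grid and reduce the count to a classical stars-and-bars computation. Identify the vertex set of $G = P_m^1 \times \cdots \times P_m^d$ with $\{0,1,\ldots,m-1\}^d$ in the obvious way, and place the corner $w$ at the origin $(0,\ldots,0)$. First I would check that in a Cartesian product of paths the hop distance is additive over coordinates, i.e.\ $d(w,v) = \sum_{i=1}^{d} v_i$ for $v=(v_1,\ldots,v_d)$: any edge of $G$ changes exactly one coordinate by $\pm 1$, so every walk from $w$ to $v$ has length at least $\sum_i v_i$, and a walk of exactly that length is obtained by correcting the coordinates one at a time (which stays inside the grid since the corner is the origin).

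Consequently $N_r(w) = \{(v_1,\ldots,v_d)\in\{0,\ldots,m-1\}^d : v_1+\cdots+v_d = r\}$. The key observation is that the hypothesis $r < n^{1/d} = m$ makes the box constraint $v_i \le m-1$ vacuous: since every $v_i \ge 0$ and they sum to $r$, each $v_i \le r \le m-1$. Hence $|N_r(w)|$ equals the number of non-negative integer solutions of $v_1+\cdots+v_d = r$, which is $\binom{r+d-1}{d-1}$ by the stars-and-bars identity. That completes the argument.

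There is essentially no genuine obstacle here; the only points needing care are the clash of the symbol $d$ (dimension versus distance function), and making sure the reduction to unconstrained compositions really uses $r < m$ (for integer $r$ this gives $r \le m-1$, exactly what is needed to drop the upper bound). I would also note in passing that this lemma is precisely what feeds the ball-size computation $|B_r(w)| = \sum_{j=0}^{r}\binom{j+d-1}{d-1} = \binom{r+d}{d}$ (hockey-stick identity), generalizing the $d=2$ warm-up $\tfrac{(r+1)(r+2)}{2}$ given above, which is then substituted into \cref{def:bcast_quality} to obtain \cref{theorem:grid_graphs_tk}.
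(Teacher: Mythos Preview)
Your proof is correct and follows essentially the same approach as the paper: identify the corner with the origin, use that grid distance is the $\ell_1$ distance so that $N_r(w)$ corresponds to nonnegative integer solutions of $v_1+\cdots+v_d=r$, observe that $r<m$ makes the upper-bound constraints on the $v_i$ vacuous, and apply stars-and-bars (what the paper calls weak compositions). Your write-up is in fact a bit more explicit than the paper's in justifying the $\ell_1$ metric and the role of the hypothesis $r<n^{1/d}$.
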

\begin{proof}
The set $N_r(w)$ can be represented by $N_r(w)=\{v\in{}\mathbb{N}^d|\sum_{i=1}^{d}v_i=r\}$, as the distances in a a square grid obey the $L_1$ metric, and without loss of generality, the corner node $w$ corresponds to the $\vec{0}$ vector in $\mathbb{N}^d$. Therefore the number of elements in $N_r(w)$ is  given by $\binom{r+d-1}{d-1}$, as shown in \cite{wikipedia_combi}. We require that $r<n^{1/d}$ so the number of weak compositions describes correctly the ring size. This way, no component of the composition exceeds the edge length $m=n^{1/d}$, so each composition is a unique and existing node in the grid graph.
\end{proof}

\begin{lemma}
\label{lemma:square_grid_neighborhood}
    If $r < n^{1/d}$, then by the conditions of \cref{lemma:square_grid_ring}, $|B_r(v)|=|N_0(w)\cup{}\dots\cup{}N_r(w)|=\binom{r+d}{d}$.
\end{lemma}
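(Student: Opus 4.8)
The plan is to reduce the statement to a standard binomial identity — the ``hockey‑stick'' identity — and then prove that identity by a short induction. First I would observe that, since $B_r(v)$ is the disjoint union $N_0(w)\cup\dots\cup N_r(w)$ (using that $v=w$ is a corner node, as in the hypothesis carried over from \cref{lemma:square_grid_ring}), we have
\[
|B_r(w)| \;=\; \sum_{i=0}^{r} |N_i(w)| \;=\; \sum_{i=0}^{r} \binom{i+d-1}{d-1},
\]
where the last equality is exactly \cref{lemma:square_grid_ring}, valid because the hypothesis $r<n^{1/d}$ forces $i<n^{1/d}$ for every $i\le r$, so each ring‑size formula applies.

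Next I would prove $\sum_{i=0}^{r}\binom{i+d-1}{d-1}=\binom{r+d}{d}$ by induction on $r$. For the base case $r=0$ both sides equal $1=\binom{d}{d}=\binom{d-1}{d-1}$. For the inductive step, assuming the claim for $r-1$, I would write
\[
\sum_{i=0}^{r}\binom{i+d-1}{d-1}
=\binom{r+d-1}{d-1}+\sum_{i=0}^{r-1}\binom{i+d-1}{d-1}
=\binom{r+d-1}{d-1}+\binom{r+d-1}{d}
=\binom{r+d}{d},
\]
where the last step is Pascal's rule $\binom{n}{k-1}+\binom{n}{k}=\binom{n+1}{k}$ applied with $n=r+d-1$ and $k=d$. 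Combining this with the first display gives $|B_r(w)|=\binom{r+d}{d}$, as claimed.

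I do not expect any serious obstacle here: the only things to be careful about are (i) making sure the disjoint‑union decomposition of $B_r$ into rings is stated correctly and that $w$ really is the corner node for which \cref{lemma:square_grid_ring} was proved, and (ii) confirming that the side condition $r<n^{1/d}$ is exactly what is needed to invoke the ring‑count formula for all indices $i\le r$ — both of which are immediate. An alternative to the induction would be a direct combinatorial argument: $\binom{r+d}{d}$ counts weak compositions of $r$ into $d$ parts by adding a slack coordinate, i.e.\ solutions of $v_1+\dots+v_d\le r$, which is precisely $\bigcup_{i=0}^r N_i(w)$; I might include this one‑line bijective remark as well, but the inductive proof of the identity is the cleanest self‑contained route.
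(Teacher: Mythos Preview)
Your proposal is correct and follows essentially the same approach as the paper: both reduce $|B_r(w)|$ to the sum $\sum_{i=0}^{r}\binom{i+d-1}{d-1}$ via \cref{lemma:square_grid_ring} and then establish the hockey-stick identity $\sum_{i=0}^{r}\binom{i+d-1}{d-1}=\binom{r+d}{d}$. The only cosmetic difference is that the paper proves the identity by rewriting each summand as $\binom{j+1}{m+1}-\binom{j}{m+1}$ (Pascal's rule) and telescoping, whereas you unfold the same Pascal step as an induction on $r$; these are the same argument presented two ways.
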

\begin{proof}
    Observe the following transitions.
\[
|B_r(v)|=|N_0(w)\cup{}\dots\cup{}N_r(w)|=\sum_{i=0}^{r}|N_i(w)|=\sum_{i=0}^{r}\binom{i+d-1}{d-1}\overset{m=d-1}{=}\sum_{i=0}^{r}\binom{i+m}{m}
\]
\[
\overset{j=i+m}{=}\sum_{j=m}^{m+r}\binom{j}{m}\overset{\text{Pascal's Triangle}}{=}\sum_{j=m}^{m+r}[\binom{j+1}{m+1}-\binom{j}{m+1}]\overset{\text{Telescoping Sum}}{=}\binom{m+r+1}{m+1}=\binom{r+d}{d}
\]
\end{proof}

This leads us to the following theorem for $d$-dimensional square grids.

\gridGraphsTk*

\begin{proof}
    From \cref{lemma:square_grid_neighborhood}, we note that $|B_r(v)|$ for a corner node $v$ is a polynomial of degree $d$ in $r$, since $\binom{r+d}{d}=\frac{(r+d)\cdot{}\dots{}\cdot{}(r+1)}{d!}$.

    The requirement that $r<m=n^{1/d}=D/d$ means that the approximate solution is within a $d$ factor from $D$, so our computation is approximately correct up to a $d$ factor.
    
    Thus, the equation $|B_r(v)|\geq{}k/r$ of \cref{def:bcast_quality} becomes $r\cdot{}\binom{r+d}{d}\geq{}k$, which is a polynomial of degree $d+1$, with constant term $a_0=\Theta{}(k)$. The diameter of a $d$-dimensional grid is $dk=dn^{1/d}$ where $k$ is the length of a side. We are looking for the minimal solution $r$ that is less than $D=dk=dn^{1/d}$. Therefore considering that $r$ also has to be at most $D$, the asymptotic solution is as follows.
    \[
    \tk=\begin{cases}
        \Theta{}(k^{1/(d+1)}) & k=O(D^{d+1})=O(n^{(d+1)/d}) \\
        O(D)=O(dn^{1/d}) & k=\Omega(D^{d+1})=\Omega(n^{(d+1)/d})
    \end{cases}
    \]
\end{proof}

\bibliographystyle{plainurl}
\bibliography{hybrid}

\newpage
\appendix

\section{Preliminary Algorithms in \hybrid}

\subsection{Deterministic Virtual Tree Construction}
\label{appendix:deterministic_ID}

We show how to adapt the deterministic overlay network construction of \cite{gmyr2017distributed} to \hybridzero. This allows us to deterministically construct a virtual tree with constant maximal degree and polylogarithmic depth in a polylogarithmic number of rounds.

The following theorem follows from \cite{gmyr2017distributed}.

\begin{theorem}[Theorem 2 in \cite{gmyr2017distributed}, rephrased]
\label{thm:fakeModelTree}
    Given a connected graph $G$ of $n$ nodes and polylogarithmic maximal degree, there is an algorithm that constructs a constant degree virtual tree of depth $O(\log{n})$ that contains all nodes of $G$ and that is rooted at the node with the highest identifier. The algorithm takes $O(\log^2{n})$ rounds in \hybridzero.
\end{theorem}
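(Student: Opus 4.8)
The plan is to obtain \cref{thm:fakeModelTree} as a near black-box consequence of the overlay-construction algorithm of \cite{gmyr2017distributed}, by showing that a single communication round of the model in which that algorithm is phrased can be \emph{simulated} by $O(1)$ rounds of \hybridzero, provided that $G$ — and every intermediate overlay maintained during the execution — has degree $O(\log n)$.

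First I would recall the abstract communication interface underlying \cite{gmyr2017distributed}: each node of $G$ starts out knowing only the identifiers of its $G$-neighbors, learns new identifiers only by receiving them inside messages, and in each round may exchange $O(\log n)$-bit messages both with its $G$-neighbors and with any nodes whose identifiers it currently knows, subject to a bound of $O(\log n)$ messages sent and received per node per round. This is precisely the interface \hybridzero exposes, with one extra piece of strength: communication along $G$-edges is handled by the \local component of \hybrid and is therefore free of any bandwidth restriction, so it costs one \hybridzero round irrespective of degree; communication with already-known nodes is handled by the \ncczero component, and since such communication touches only $O(\log n)$ partners per node in the algorithm of \cite{gmyr2017distributed} it fits within the \hybridzero budget and also costs $O(1)$ rounds (splitting a step that uses both channels across $O(1)$ rounds if needed). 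Hence every round of the model of \cite{gmyr2017distributed} costs $O(1)$ rounds of \hybridzero, and the $O(\log^2 n)$-round construction translates to $O(\log^2 n)$ rounds of \hybridzero.

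It then remains to observe that the object produced by \cite{gmyr2017distributed} is exactly what \cref{thm:fakeModelTree} demands — a virtual tree spanning $V$, of constant maximum degree and depth $O(\log n)$, in which every parent and child store one another's identifiers — and to root it at the maximum-identifier node. Since the tree already has depth $O(\log n)$ and constant degree, a standard convergecast followed by a broadcast along it computes the global maximum identifier, disseminates it, and re-orients all tree edges toward it in $O(\log n)$ additional \hybridzero rounds, comfortably within the claimed bound (and if the construction of \cite{gmyr2017distributed} already returns a tree rooted at the maximum identifier, this step is vacuous).

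The main obstacle is the bookkeeping behind the ``$O(1)$ \hybridzero rounds per simulated round'' claim: one must verify that throughout the execution of \cite{gmyr2017distributed} no node ever needs to address, or be addressed by, more than $O(\log n)$ distinct nodes in a single round — equivalently, that the intermediate overlays keep degree $O(\log n)$ and the message schedule is $O(\log n)$-bounded per node — with the $O(\log n)$-degree hypothesis of \cref{thm:fakeModelTree} seeding these invariants. The algorithm of \cite{gmyr2017distributed} is engineered with exactly these guarantees, so I expect this to go through; the delicate point is merely that matching the stated $O(\log^2 n)$ (rather than $O(\log^3 n)$, which we would incur by serializing any overloaded step over $O(\log n)$ rounds) genuinely depends on those bounded-degree properties, so the write-up should cite the relevant invariants of \cite{gmyr2017distributed} explicitly rather than reprove them.
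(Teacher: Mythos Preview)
The paper does not prove this statement at all: it is presented purely as a citation (``Theorem 2 in \cite{gmyr2017distributed}, rephrased''), with the implicit claim that the overlay construction of \cite{gmyr2017distributed} runs as-is in \hybridzero under the polylogarithmic-degree hypothesis. Your proposal actually supplies the missing justification --- an explicit round-by-round simulation argument showing that the communication interface of \cite{gmyr2017distributed} fits inside \hybridzero --- which is more than the paper itself does. In that sense your approach is not ``different'' so much as strictly more detailed: the paper takes the transfer for granted, you spell out why it is legitimate. Your simulation sketch is correct in spirit, and the caveat you flag (that one must cite the bounded-degree invariants of the intermediate overlays in \cite{gmyr2017distributed} rather than reprove them) is exactly the right thing to be careful about.
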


We show how to remove the constraint that the graph $G$ originally has polylogarithmic maximal degree, to show the following.

\treeWithoutIDs*

\begin{proof}
    Every node $v \in V$ begins by denoting its neighbor with highest identifier with $m(v)$. If $v$ has an identifier which is higher than all its neighbors, then it sets $m(v) = v$. Node $v$ sends a message to $m(v)$ stating that it desires to join it. 
    
    Observe the directed graph $G_1 = (V, E_1)$ where there is an edge from each $v$ to $m(v)$, if $v \neq m(v)$. Every node in this graph has out-degree at most $1$, yet unbounded in-degree. Let $u \in V$ be some node in $G_1$ with in-degree greater than $1$. Denote by $v_1, \dots, v_x$, for some $x > 1$, the nodes with edges towards $u$ in $G$. For each $v_i$, node $u$ sends a message to $v_i$ to adjust its edge so that instead of it pointing to $u$, it points to $v_{i+1}$. To node $v_x$, node $u$ does not send a message (i.e. the edge still points to $u$). Denote this new graph by $G_1'$.

    Notice that in $G_1'$, every node has out-degree at most $1$ and in-degree at most $2$. First, observe that in $G_1$ each node has out-degree at most $1$, and this edge might have only been replaced by one edge in $G_1'$, so each node still has out-degree at most $1$. As for the in-degree, any node $u\in{}G_1'$ has at most two edges coming into it. Denote by $v$ the node that $u$ points to in $G_1$. In $G_1'$, the only edges coming into $u$ can be the at most one edge which came into it in $G_1$, and one edge which $v$ told some other node $w$ to create towards $u$.

    Now, drop the directions of the edges in $G_1'$ to receive an undirected graph with constant maximal degree. Execute \cref{thm:fakeModelTree} so that there is a tree of polylogarithmic depth and constant maximal degree for every connected component in $G_1'$. This takes $\tildeBigO{1}$ rounds.

    Observe the connected components in $G_1'$. Each connected component now has a virtual tree spanning it, and every node knows the identifiers of its neighbors in the virtual tree which it includes it. In each connected component, the nodes use the tree to compute, in $\tildeBigO{1}$ rounds, the node with maximal identifier in the component. We now create a new graph $G_2 = (V_1, E_2)$ as follows. Each node in $G_2$ corresponds to one connected component in $G_1'$ and has the identifier of the node with maximal identifier in that component. For any node $v \in V_1$ in $G_2$, denote its corresponding connected component in $G_1'$ by $c(v)$. The nodes in $c(v)$ compute the node $u \in V_1$ with maximal identifier such that there are nodes $x \in c(v), y \in c(u)$ where $x, y$ are neighbors in $G$ (or set $u = v$ if the identifier of $v$ is greater than all of the identifiers of the clusters neighboring it). To compute $u$, every node in $G$ tells its neighbors the identifier of the cluster to which it now belongs. Now, all nodes in $c(v)$ know the identifiers of the clusters to which their neighbors belong, and thus the nodes $c(v)$ can compute $u$ using one aggregation over the virtual tree that connects the nodes $c(v)$.

    As such, we have finished constructing $G_2$. Clearly, $G_2$ has at most half as many nodes as $G_1$, as every node in $G_1$ was merged with at least one other node in the clustering phase. Thus, we can repeat the above for $O(\log n)$ iterations, each time simulating the cluster nodes using the virtual trees which span them, and uniting trees whenever combining clusters.  Whenever we create a new cluster, it has a tree of at most polylogarithmic maximal degree and depth, and thus we can rerun \cref{thm:fakeModelTree} inside the cluster in order to make sure its maximal degree is constant and has at most $O(\log n)$ depth.
\end{proof}

\subsection{Basic Properties of \texorpdfstring{$\tk$}{Tk}}
\label{appendix:basic_properties_of_tk}

We provide the proofs for the statements in \cref{sec:optimal_broadcast;subsec:basic_properties}.

\tkComputation*

\begin{proof}
    By \cref{def:bcast_quality}, if a node $v$ knows the value $k$ and $B_{\tk}(v)$, it can compute $\tk(v)$. We use that fact to incrementally explore larger neighborhoods and stop when $v$ knows $\tk(v)$.
    First, using \cref{lem:hybrid_zero_aggregate}, all nodes will aggregate the number of tokens they possess using the sum function. This takes $\tildeBigO{1}$ rounds and makes $k$ globally known.

    We then proceed in $\tk$ iterations. Each iteration starts by $v$ using the local network to learn its neighborhood to one more hop -- that is, in iteration $t$, every node $v$ knows $B_t(v)$. Using this information, and as $k$ is globally known, $v$ can determine whether $\tk(v)=t$. Then, we compute how many nodes in the graph have $\tk(u) \leq t$, using one aggregation in $\tildeBigO{1}$ rounds. If we know that all $n$ nodes have $\tk(u) \leq t$, then we halt and set and can compute $\tk=t$. 
\end{proof}

\growthOfTk*

\begin{proof}
    If $6 \sqrt{\alpha} \cdot \tk \geq D$, then by definition $\tk[\alpha{}k] \leq D \leq 6 \sqrt{\alpha} \cdot \tk$. Thus, assume that $6 \sqrt{\alpha} \cdot \tk < D$. Let $v$ be some node. There must exist $u$ such that the hop-distance between $v$ and $u$ is at least $D/2 > 3 \sqrt{\alpha} \cdot \tk$, otherwise the diameter of the graph is less than $D$. Denote by $P = (v_1 = v, \dots, v_\ell = u)$ a shortest path between $v$ and $u$ of length $\ell > 3 \sqrt{\alpha} \cdot \tk$. 
    
    Observe any two nodes on the path which are $3\tk$ edges apart on the path -- i.e.,~nodes $v_i, v_{i + 3\tk}$. It holds that $B_{\tk}(v_i) \cap B_{\tk}(v_{i + 3\tk}) = \emptyset$. Assume for the sake of contradiction that $B_{\tk}(v_i) \cap B_{\tk}(v_{i + 3\tk}) \neq \emptyset$ and take $w \in B_{\tk}(v_i) \cap B_{\tk}(v_{i + 3\tk})$. It holds that $d(v_i, w) \leq \tk$ and $d(w, v_{i + 3\tk}) \leq \tk$, and thus $d(v_i, v_{i + 3\tk}) \leq 2\tk$. However, as $P$ is a shortest path, then $d(v_i, v_{i + 3\tk}) = 3\tk > 2\tk$, where the last inequality holds since $\tk \geq 1$, and so we arrive at a contradiction.

    Therefore, we get that 
    \begin{align*}
        |B_{3 \sqrt{\alpha} \cdot \tk}(v)| &\geq \sum_{i \in [\sqrt{\alpha}]} |B_{\tk}(v_{1 + i \cdot 3\tk})| \\
        \geq \sqrt{\alpha}\cdot{}\frac{k}{\tk} 
        &= \frac{\alpha{}k}{\sqrt{\alpha}\tk} 
        \geq{} \frac{\alpha{}k}{3\sqrt{\alpha}\tk}.
    \end{align*}
    
    Thus, by definition, $\tk[\alpha{}k] \leq 3\sqrt{\alpha} \cdot{} \tk$, and we are done. 
\end{proof}

\boundingTk*

\begin{proof}
    For the first inequality, we note that for all $v\in{}V$ it holds that $|B_{\tk}(v)|\geq{}k/\tk$ since $\tk\neq{}D$, and therefore $\frac{n}{|B_{\tk}(v)|}\leq{}\frac{n\tk}{k}$. Thus, there can be at most $\frac{n\tk}{k}$ disjoint balls of radius $\tk$. Given a shortest path $P_{u,v}$, it visits at most $2\tk+1$ nodes in each such ball. Otherwise the diameter of this ball would be larger, since the path between the first and the last $s,t\in{}B_{\tk}(v)$ is also a shortest path, and the diameter of the ball is $2\tk$. Therefore, as it is always true that $\tk \geq 1$, we get that $D\leq{}\frac{n\tk}{k}\cdot{}(2\tk+1)\leq{}3\frac{n\tk^2}{k}$ and we get $\tk\geq{}\sqrt{\frac{D}{3n}k}$.

    For the second inequality, $\tk\leq{}D$ is holds by \cref{def:bcast_quality}. 
    Let $v\in{V}$. If $\sqrt{k}\leq{}n$, then $|B_{\sqrt{k}}(v)|\geq{}\sqrt{k}=k/\sqrt{k}$ because we assume $G$ is connected and with every additional hop we explore, we expand by at least one node. Therefore, $\tk(v)\leq{}\sqrt{k}$ by \cref{def:bcast_quality}. We get that $\tk(v)\leq{}\sqrt{k}$ for all $v\in{V}$ and thus $\tk\leq{}\sqrt{k}$.
    If $\sqrt{k}>n$, then $B_{\sqrt{k}}(v)=V$ and so $D \leq \sqrt{k}$. Thus, as $\tk \leq D$, it holds that $\tk \leq \sqrt{k}$. 
\end{proof}

\end{document}